  \providecommand\BibTeX{{%
    \normalfont B\kern-0.5em{\scshape i\kern-0.25em b}\kern-0.8em\TeX}}}
\author{Amir Shaikhha}
\affiliation{
  \institution{University of Edinburgh}            %%
  \country{United Kingdom}                    %% \country is recommended
}
\author{Mathieu Huot}
\affiliation{
  \institution{University of Oxford}            %%
  \country{United Kingdom}                    %% \country is recommended
}
\author{Jaclyn Smith}
\affiliation{
  \institution{University of Oxford}            %%
  \country{United Kingdom}                    %% \country is recommended
}
\author{Dan Olteanu}
\affiliation{
  \institution{University of Zurich}            %%
  \country{Switzerland}                    %% \country is recommended
}
\definecolor{forestgreen}{rgb}{0.13, 0.55, 0.13}
\colorlet{myblue}{blue!70!black}
\colorlet{mygreen}{green!70!black}
\newcommand{\revision}[1]{#1}
\lstdefinelanguage{llql}%
{morekeywords={
  if,then,else,let,in,not,%
  build,dom,true,false,key,val,%
  sum,for,iter,range,%
  int,real,dense_int,bool,string,rng,promote,%
  prod,rec,closure
  },%
  sensitive,%
  morecomment=[l]//,%
  morecomment=[s]{/*}{*/},%
  morestring=[b]",%
  morestring=[b]',%
  showstringspaces=false,%
  morecomment=[s][\color{gray}]{@}{\ },%
    breaklines=true,%
  mathescape=true,%
showspaces=false,
showtabs=false,
showstringspaces=false,
breakatwhitespace=true,
  % xleftmargin=2em,
%  columns=[c]fixed,%
%  basewidth={0.5em, 0.40em},%
  aboveskip=1pt,%\smallskipamount,
  belowskip=1pt,%\negsmallskipamount,
  lineskip=-0.2pt,
%  basewidth={0.54em, 0.4em},%
 % backgroundcolor=\color{lightgray},
%  basicstyle=\linespread{0.4}\small\ttfamily,
%  keywordstyle=\keywordstyle,
   % numbers=left,
   numbersep=5pt,
   numberstyle=\tiny\ttfamily,
   basicstyle=\small\ttfamily,
   keywordstyle=\bfseries\color{myblue},%
   columns=fullflexible,
  frame=single,
%  commentstyle=\commentstyle
  xrightmargin=1ex,
  escapeinside={(*@}{@*)}
}[keywords,comments,strings]%
\lstdefinelanguage{pseudo}%
{morekeywords={
  % if,then,else,let,in,%
  % build,dom,true,false,each,init,as,tuple,pair,%
  % sum,for,iter,
  Dictionary%
  },%
  sensitive,%
  morecomment=[l]//,%
  morecomment=[s]{/*}{*/},%
  morestring=[b]",%
  morestring=[b]',%
  showstringspaces=false,%
  morecomment=[s][\color{gray}]{@}{\ },%
    breaklines=true,%
  mathescape=true,%
showspaces=false,
showtabs=false,
showstringspaces=false,
breakatwhitespace=true,
  aboveskip=1pt,%\smallskipamount,
  belowskip=1pt,%\negsmallskipamount,
  lineskip=-0.2pt,
   numbersep=5pt,
   numberstyle=\tiny\ttfamily,
   basicstyle=\small\ttfamily,
  keywordstyle=\small\bfseries\color{green!50!black},%
      commentstyle     = \color{forestgreen},%
  columns=fullflexible,
  frame=single,
%  commentstyle=\commentstyle
  %xleftmargin=1ex
  escapeinside={(*@}{@*)}
}[keywords,comments,strings]%
\lstdefinelanguage{NRC}{
  morekeywords={for, in, union, if, then, else, match, let, groupBy, sumBy},%
  sensitive,%
  morecomment=[l]//,%
  morecomment=[s]{/*}{*/},%
  morestring=[b]",%
  morestring=[b]',%
  showstringspaces=false,%
  breaklines=true,%
  mathescape=true,%
  showspaces=false,
  showtabs=false, 
  showstringspaces=false,
  breakatwhitespace=true,
  xleftmargin=1em,
  % columns=[c]fixed,%
  % basewidth={0.5em, 0.40em},%
  aboveskip=1pt,%\smallskipamount,
  belowskip=1pt,%\negsmallskipamount,
  lineskip=-0.2pt,
%  backgroundcolor=\color{listingbg},
%  basicstyle=\linespread{0.4}\footnotesize\ttfamily,
   % numbers=left,
   % numbersep=5pt,
   % numberstyle=\tiny\ttfamily,
  basicstyle=\small\ttfamily\color{white!15!black},
  keywordstyle=\small\ttfamily\bfseries\color{gray!80!black},%
  columns=fullflexible,
%  commentstyle=\commentstyle
  escapeinside={(*@}{@*)}
}[keywords,comments,strings]%
\lstdefinelanguage{Scala}%
{morekeywords={abstract,%
  case,catch,char,class,%
  def,else,extends,final,finally,for,%
  if,import,implicit,%
  match,module,%
  new,null,%
  object,override,%
  package,private,protected,public,%
  for,public,return,super,%
  this,throw,trait,try,type,%
  val,var,%
  with,while,%
  yield%
  },%
  sensitive,%
  morecomment=[l]//,%
  morecomment=[s]{/*}{*/},%
  morestring=[b]",%
  morestring=[b]',%
  showstringspaces=false,%
  breaklines=true,%
  mathescape=true,%
  showspaces=false,
  showtabs=false,
  showstringspaces=false,
  breakatwhitespace=true,
  xleftmargin=2em,
%  columns=[c]fixed,%
%  basewidth={0.5em, 0.40em},%
  aboveskip=1pt,%\smallskipamount,
  belowskip=1pt,%\negsmallskipamount,
  lineskip=-0.2pt,
%  basewidth={0.54em, 0.4em},%
%  backgroundcolor=\color{listingbg},
%  basicstyle=\linespread{0.4}\footnotesize\ttfamily,
%  keywordstyle=\keywordstyle,
   % numbers=left,
   % numbersep=5pt,
   % numberstyle=\tiny\ttfamily,
   basicstyle=\ttfamily,
  % keywordstyle=\ttfamily\bfseries,%
  keywordstyle=\ttfamily\color{blue!80!black},
  columns=fullflexible,
%  commentstyle=\commentstyle
  %xleftmargin=1ex
  % escapeinside={(*@}{@*)}
}[keywords,comments,strings]%
\newcommand{\lang}{SDQL\xspace}
\newcommand{\langext}[1]{SDQL[#1]\xspace}
\newcommand{\langring}{\langext{\code{ring}}}
\newcommand{\langclosure}{\langext{\code{closure}}}
\newcommand{\langproduct}{\langext{\code{prod}}}
\newcommand{\langrec}{\langext{\code{rec}}}
\newcommand{\system}{\lang}
\newcommand{\taco}{taco\xspace}
\newcommand{\nrcplus}{NRC$^+$\xspace}
\newcommand{\nrcagg}{NRC$^{agg}$\xspace}
\newcommand{\grammarcomment}[1]{\textit{\small #1}}
\newcommand{\smartpara}[1]{\noindent \textbf{#1.}}
\newcommand{\code}[1]{\texttt{#1}}
\let\bowtie\relax
\DeclareSymbolFont{STIXsymbols}{LS1}{stixscr}{m}{n}
\DeclareMathSymbol{\bowtie}{\mathrel}{STIXsymbols}{"0E}
\newcommand{\tab}{\;\;\;}
\newcommand{\zero}[1]{$\textbf{0}_{\texttt{#1}}$}
\newcommand{\one}[1]{$\textbf{1}_{\texttt{#1}}$}
\newcommand{\evalsto}{$\rightarrow$}
\newcommand{\transto}{\text{ $\leadsto$ }}
\newcommand{\myeqq}{\triangleq}
\newcommand{\myeq}{$\myeqq$}
\newcommand{\myotimes}{$\otimes_\texttt{S}$}
\newcommand{\translatebegin}{$\llbracket$}
\newcommand{\translateend}{$\rrbracket$}
\newcommand{\translate}[1]{\translatebegin\text{#1}\translateend}
\newcommand{\dsbegin}{\translatebegin}
\newcommand{\dsgend}[1]{\translateend{}$_{\text{#1}}$}
\newcommand{\dsend}[1]{\dsgend{\dsctx}}
\newcommand{\dsgen}[2]{\dsbegin\text{#1}\dsgend{#2}}
\newcommand{\densem}[1]{\dsgen{#1}{\dsctx}}
\newcommand{\dsnoctx}[1]{\translate{#1}}
\newcommand{\dsctx}{$\gamma$}
\newcommand{\mytimes}{\bullet}
\newenvironment{myexample}[1]
{ 
\noindent \textbf{Example #1.}
}
{
% \exampleend
}
\newcommand{\cmark}{\ding{51}}%
\newcommand{\xmark}{\ding{55}}%
\newcommand{\langkw}[1]{\texttt{\small\bfseries\color{myblue}#1}}
\DeclareMathOperator*{\sumplus}{\scalerel*{\oplus}{\textstyle\sum}}
\newcolumntype{R}[2]{%
    >{\adjustbox{angle=#1,lap=\width-(#2)}\bgroup}%
    l%
    <{\egroup}%
}
\newcommand*\rot{\multicolumn{1}{R{90}{0em}|}}
\newcommand{\supfull}{\CIRCLE}
\newcommand{\suphalf}{\LEFTcircle}
\newcommand{\supnone}{\Circle}
\begin{document}

\title{Functional Collection Programming with Semi-ring Dictionaries}

% \author{Authors omitted due to double-blind review process.}

\ccsdesc[300]{Software and its engineering~Domain specific languages}
\ccsdesc[300]{Computing methodologies~Linear algebra algorithms}
\ccsdesc[300]{Information systems~Query languages}

\keywords{Semi-Ring Dictionary, Sparse Linear Algebra, Nested Relational Algebra.}%mandatory

\begin{abstract}
This paper introduces semi-ring dictionaries, a powerful class of compositional and purely functional collections that subsume other collection types such as sets, multisets, arrays, vectors, and matrices. We developed SDQL, a statically typed language that can express relational algebra with aggregations, linear algebra, and functional collections over data such as relations and matrices using semi-ring dictionaries.
Furthermore, thanks to the algebraic structure behind these dictionaries, SDQL unifies a wide range of optimizations commonly used in databases (DB) and linear algebra (LA).
As a result, SDQL enables efficient processing of hybrid DB and LA workloads, by putting together optimizations that are otherwise confined to either DB systems or LA frameworks.
We show experimentally that a handful of DB and LA workloads can take advantage of the SDQL language and optimizations.
SDQL can be competitive with or outperforms a host of systems that are state of the art in their own domain:  in-memory DB systems Typer and Tectorwise for (flat, not nested) relational data;  SciPy for LA workloads; sparse tensor compiler taco; the Trance nested relational engine; and the in-database machine learning engines LMFAO and Morpheus for hybrid DB/LA workloads over relational data.
\end{abstract}

\maketitle

\section{Introduction}
The development of domain-specific languages (DSLs) for data analytics has been an important research topic across many communities for more than 40 years.
The DB community has produced SQL, one of the most successful DSLs based on the relational model of data~\cite{rel_codd}.
For querying complex nested objects, the nested relational algebra~\cite{Buneman:1995:PPC:210500.210501} was introduced, which relaxes the flatness requirement of the relational data model.
The PL community has built language-integrated query languages~\cite{linq} and functional collection DSLs based on monad calculus~\cite{roth1988extended}.
Finally, the HPC community has developed various linear algebra frameworks for tensors~\cite{vasilache2018tensor,Kjolstad:2017:TAC:3152284.3133901}.

The main contribution of this paper is \lang, a purely functional language that is simple, canonical, efficient, and expressive enough for \revision{hybrid} database (DB) and linear algebra (LA) workloads. 
In this language, the data is presented as dictionaries over semi-rings, which subsume collection types such as sets, multisets, arrays, and tensors.
% \lang  subsumes relational, nested relational, and linear algebra.

Furthermore, \lang unifies optimizations with inherent similarities that are otherwise developed in isolation. 
Consider the following relational and linear algebra expressions:

\vspace{-0.5cm}

$$Q(a, d)=\Gamma^{\#}_{a,d}R_1(a, b) \bowtie R_2(b, c) \bowtie R_3(c, d)$$
$$N(i, l)=\Sigma_{j,k}M_1(i, j) \cdot M_2(j, k) \cdot M_3(k, l)$$

% \vspace{-0.5cm}

\noindent The expression $Q$ computes the number of paths between each two nodes $(a, d)$ via the binary relations $R_1$, $R_2$, and $R_3$. The expression $N$ computes the matrix representing the multiplication chain of matrices $M_1$, $M_2$, and $M_3$.
These expressions are optimized as:

\vspace{-0.5cm}

$$Q'(a, c) = \Gamma^{\#}_{a,c}R_1(a, b) \bowtie R_2(b, c) \tab \tab \tab Q(a, d)=\Gamma^{\#}_{a,d}Q'(a, c) \bowtie R_3(c, d)$$
$$N'(i, k)=\Sigma_{j}M_1(i, j) \cdot M_2(j, k) \tab \tab \tab N(i, k)=\Sigma_{k}N'(i, k) \cdot M_3(k, l)$$

\noindent The similarity between these two is not a coincidence;
in both cases, two intermediate results are factored out ($Q'$ and $N'$), thanks to the opportunity provided by the distributivity law. 
This is because of the semi-ring structure behind both relational and linear algebra: natural number and real number semi-rings. 
These optimizations are known as \textit{pushing aggregates past joins}~\cite{groupbybeforejoin} and \textit{matrix chain ordering}~\cite{clrs}, respectively.
% \lang unifies several optimizations common in relational databases/probabilistic graphical models~\cite{abo2016faq}.
% Particularly advantageous for hybrid workloads, \lang can also apply algebraic optimizations inside and across the boundary of different domains.
%Additionally, by expressing hybrid data analytics workloads in this language, one can perform algebraic optimizations inside and across the boundary of different domains. %by expressing hybrid workloads in 

\revision{
\subsection{Contributions}
This paper makes the following contributions.
}

\begin{itemize}[leftmargin=*]
    \item We introduce dictionaries with semi-ring structure (Section~\ref{sec:semiring_const}). \revision{Semi-ring dictionaries realize the well-known connection between relations and tensors~\cite{abo2016faq}.}
    \item \revision{We introduce \lang, a statically typed and functional language over such dictionaries. The kind/type system of \lang keeps track of the semi-ring structure (Section~\ref{sec:lang}).}
    \revision{\lang can be used as an intermediate language for data analytics; 
programs expressed in (nested) relational algebra (Section~\ref{sec:db}) or linear algebra-based languages (Section~\ref{sec:la}) can be translated to \lang.\footnote{\revision{In this paper, by (nested) relational and linear algebra, we mean the corresponding sets of operators presented in Figures~\ref{fig:ra_ndql}-\ref{fig:la_ndql}.}}} 
    \item \revision{The unified formal model provided by \lang allows tighter 
integration of data science pipelines that are otherwise developed in loosely coupled frameworks for different domains. This makes \lang particularly}
advantageous for hybrid workloads such as in-database machine learning and linear algebra over nested biomedical data\revision{;} \lang can uniformly apply loop optimizations \revision{(including vertical and horizontal loop fusion, loop-invariant code motion, loop factorization, and loop memoization)} inside and across the boundary of different domains\revision{. We also show how we can synthesize efficient query processing algorithms (e.g., hash join and group join) based on these optimizations} (Section~\ref{sec:opt}).
    \item \revision{Thanks to the compositional structure of semi-ring dictionaries, \lang unifies alternative representations for relations: row/columnar vs. curried layouts, and tensors: coordinate (COO) vs. compressed formats (Section~\ref{sec:datalayout}).}
    \item We give denotational semantics using $0$-preserving functions between K-semi-modules, and prove the correctness of \lang optimizations (Section~\ref{sec:sem}).
    \item We implemented a prototype compiler and runtime for \lang (Section~\ref{sec:impl}). We show experimentally (Section~\ref{sec:exp}) that \lang can be competitive with or outperforms a host of systems that are state-of-the-art in their own domain and that are not designed for the breadth of workloads and data types supported by \lang. \lang achieves similar performance to the in-memory DB systems Typer and Tectorwise. It is on average $2\times$ faster  than SciPy for sparse LA and has similar performance to taco for sparse tensors. For nested data, it outperforms the Trance nested relational engine by up to an order of magnitude. For hybrid DB/LA workloads over flat relational data, \lang has on average slightly better performance than the in-DB ML engines LMFAO and Morpheus.
\end{itemize}

\vspace{0.2cm}

\smartpara{Motivating Example}
The following setting is used throughout the paper to exemplify \lang.
Biomedical data analysis presents an interesting domain for language development.
Biological data comes in a variety of %domain-specific 
formats that use complex data models~\cite{biodata}.
% In addition, the affordability of genomic sequencing, the advances of image processing, and the improvement of 
% medical data management present many opportunities to integrate complex datasets and develop targeted treatments~\cite{pm}. 
Consider a biomedical analysis focused on the role of mutational burden in cancer.
High tumor mutational burden (TMB) has been shown to be a confidence biomarker for cancer therapy response \cite{tmbref1, tmbref2}. 
A subcalculation of TMB is gene mutational burden (GMB). Given a set of genes and variants for each sample, GMB associates variants to genes and counts the total number of mutations present in a given gene per tumor sample. 
% The !Genes! input is a relational-formatted input containing metadata about a gene and the !Variants! input is a domain-specific, variant call format (VCF) \cite{vcf}, which contains top-level variant information and nested mutational information for each sample. 
This analysis provides a 
basic measurement of how impacted a given gene is by somatic mutations, which can be used directly 
as a likelihood measurement for immunotherapy response \cite{tmbref1}, or can be used as features to predict patient response to therapy or the severity of the patient's cancer. 

The biological community has developed countless DSLs to perform such analyses~\cite{genometric, hail,cromwell}.
% ; these languages can be tightly scoped to a class of transformations \cite{genometric, hail} or describe more generic workflow execution tasks \cite{cromwell}. 
Modern biomedical analyses also leverage SQL-flavoured query languages and machine learning frameworks for classification. An analyst may need to use multiple languages to perform integrative tasks, and additional packages downstream to perform inference.
The development of generic solutions that consolidate and generalize 
complex biomedical workloads is crucial for advancing biomedical 
infrastructure and analyses.

This paper shows the above tasks can be framed in \lang and benefit from optimized execution.

% This paper presents \lang, a statically typed functional language. 
% This language is centered around functional nested dictionaries, and subsumes existing query languages and functional collection DSLs.
% Furthermore, this language can express a wide range of optimizations available in database query engines.
% Finally, basic linear algebra operations can also be expressed in NDQL.
% As a result, NDQL is appropriate for both database and linear algebra workloads.

\section{Language}
\label{sec:lang}
\lang is a purely functional, domain-specific language inspired by efforts from languages developed in both the programming languages (e.g., Haskell, ML, and Scala) and the databases (e.g., AGCA~\cite{dbtoaster} and FAQ~\cite{abo2016faq}) communities.
This language is appropriate for collections with \textit{sparse} structure such as database relations, functional collections, and sparse tensors. Nevertheless, \lang also provides facilities to support dense arrays.

Figure~\ref{fig:lang} shows the grammar of \lang for both expressions (!e!) and types (!T!). 
We first give a background on semi-ring structures.
Then, we introduce the kind and type systems of \lang (cf. Figure~\ref{fig:typesystem}).
Afterwards, we continue by introducing semi-ring and iteration constructs. 
Finally, we show how arrays and sets are encoded in \lang.

\begin{figure*}[t]
\setlength{\tabcolsep}{0.3em}
\centering
\begin{tabular}{|l c l|l|}
\hline
\multicolumn{3}{|c|}{\textbf{Core Grammar}} & \multicolumn{1}{c|}{\textbf{Description}}\\\hline
!e! & \mbox{::=} & !sum(x in e)! !e! \tab $\mid$ \tab !{ e -> e, ... }! & \grammarcomment{Dictionary Aggregation \& Construction} \\
& $\mid$ & !{ }$_{\texttt{T},\texttt{T}}$! \tab  $\mid$ \tab !e(e)! & \grammarcomment{Empty Dictionary, Dictionary Lookup}\\
& $\mid$ & !< a = e, ... >! \tab $\mid$ \tab !e.a! \tab $\mid$ \tab !not e! & \grammarcomment{Record Construction, Field Access, Negation}\\
& $\mid$ & !let x = e in e! \tab $\mid$ \tab !x! \tab $\mid$ \tab !if e then e else e! & \grammarcomment{Variable Binding \& Access, Conditional}\\
& $\mid$ & !e + e! \tab $\mid$ \tab !e * e! \tab $\mid$ \tab !promote$_{\texttt{S},\texttt{S}}$(e)! & \grammarcomment{Addition, Multiplication, Scalar Promotion}\\
& $\mid$ & !n! \tab $\mid$ \tab !r! \tab $\mid$ \tab !false! \tab $\mid$ \tab !true! \tab $\mid$ \tab !c! & \grammarcomment{Numeric, Boolean, and Other Constants}\\
% & $\mid$ & !dn! & \grammarcomment{Dense Integer Value}\\
% & $\mid$ & !"some_text"! & \grammarcomment{String Literal}\\ 
\hline
% !T! & \mbox{::=} & !{ T -> T }! & \grammarcomment{Dictionary Type} \\
% & $\mid$ & !< a:T, ... >! & \grammarcomment{Record Type}\\
% & $\mid$ & !S! & \grammarcomment{Scalar Type}\\
!T! & \mbox{::=} & !{ T -> T }! \;\; $\mid$ \;\; !< a:T, ... >! \;\; $\mid$ \;\; !S! \;\; $\mid$ \;\; !U! & \grammarcomment{Dictionary, Record, Scalar, and Enum Types} \\
!S! & \mbox{::=} & !int! \tab $\mid$ \tab !real! \tab $\mid$ \tab !bool! \tab $\mid$ \tab [cf. Table~\ref{tab:semiring_scalar}] & \grammarcomment{Scalar Semi-Ring Types}\\
!U! & \mbox{::=} & !string! \tab $\mid$ \tab !dense_int! & \grammarcomment{String and Dense Integer Types}\\
\hline
!K! & \mbox{::=} & !Type! \tab $\mid$ \tab !SM(S)! & \grammarcomment{Ordinary \& Semi-Module Kinds}\\
\hline
\end{tabular}
\vspace{-0.3cm}
\caption{Grammar of the core part of \lang. Scalar numeric operations (e.g., \code{sin}) are omitted for brevity.}
\vspace{-0.4cm}
\label{fig:lang}
\end{figure*}

\subsection{Semi-Ring Structures}
\smartpara{Semi-ring}
A semi-ring structure is defined over a data type !S! with two binary operators !+! and !*!. Each binary operator has an identity element; \zero{S} is the identity element for !+! and \one{S} is for !*!. When clear from the context, we use !0! and !1! as identity elements. Furthermore, the following algebraic laws hold for all elements !a!, !b!, and !c!:

% \begin{tabular}{l l l}
% \textit{Associativity} & $a + (b + c) = (a + b) + c$ & $a \times (b \times c) = (a \times b) \times c$ \\
% \end{tabular}

% \begin{itemize}
% \item !a + (b + c) = (a + b) + c!
% \item !0 + a = a + 0 = a!
% \item !a + b = b + a!
% \item !a * (b * c) = (a * b) * c!
% \item !1 * a = a * 1 = a!
% \item !0 * a = a * 0 = 0!
% \item !a * (b + c) = a * b + a * c!
% \item !(a + b) * c = a * c + b * c!
% \end{itemize}

\begin{tabular}{l l l}
!a + (b + c) = (a + b) + c! & !0 + a = a + 0 = a! &  !1 * a = a * 1 = a!\\
!a + b = b + a! & !a * (b * c) = (a * b) * c! &  !0 * a = a * 0 = 0! \\
 !a * (b + c) = a * b + a * c! & !(a + b) * c = a * c + b * c!
\end{tabular}

\noindent The last two rules are distributivity laws, and are the base of many important optimizations for semi-ring structures~\cite{aji2000generalized}. Semi-rings with commutative multiplications (!a*b=b*a!) are called commutative semi-rings.

\smartpara{Semi-module}
The generalization of commutative semi-rings for containers results in a semi-module.
A semi-module over a semi-ring of data type !S! (a !S!-semi-module) is 
defined with an addition operator between two semi-modules, and a multiplication between a semi-ring element and the semi-module. An example is the vector of real numbers with vector addition and scalar-vector multiplication. 
The following laws hold for all the elements !u! and !v! in a !S!-semi-module:

\begin{tabular}{l l l l l}
!a * (u + v) = a * u + a * v! & &
& &
!(u + v) * a = u * a + v * a! 
\\
% !1 * u = u! \\
!(a + b) * u = a * u + b * u! & &
& &
!(a * b) * u = a * (b * u)! 

\end{tabular}

\smartpara{Tensor product}
For two types !T1! and !T2! that are !S!-semi-modules, the tensor product !T1!$\otimes_{\code{S}}$!T2! is another !S!-semi-module. It comes equipped with a canonical map which we also denote using !*: T1!$\times$!T2!$\rightarrow$ !T1!$\otimes_{\code{S}}$!T2! with the following laws for all elements !u1,u2:T1! and !v1,v2:T2!:

\begin{tabular}{l l l}
!u1 * (v1 + v2) = u1 * v1 + u1 * v2! & &
!(u1 + u2) * v1 = u1 * v1 + u2 * v1! \\
!(u1 * a) * v1 = u1 * (a * v1)! & &
!1 * u1 = u1!
\end{tabular}

\begin{figure*}[t]
\begin{tabular}{|c|}
\hline
\begin{tabular}{|c|}
\hline
Kind System:\\
!T :: K!\\
\hline
\end{tabular}
\hfill
\begin{tabular}{c}
\\\hline
!S::SM(S)!
\end{tabular}
\hspace{0.35cm}
\begin{tabular}{c}
$\forall$ !i!. !Ti::SM(S)!
\\\hline
!<a1:T1,...,an:Tn>::SM(S)!
\end{tabular}
\hspace{0.35cm}
\begin{tabular}{c}
!T1::K! $\;$ !T2::SM(S)!
\\\hline
!{T1->T2}::SM(S)!
\end{tabular}\\
\begin{tabular}{c}
!T1::SM(S)! $\;$ !T2::SM(S)!
\\\hline
!T1!$\otimes_{\code{S}}$!T2::SM(S)!
\end{tabular}
\hspace{0.15cm}
\begin{tabular}{c}
\\\hline
!U::Type!
\end{tabular}
\hspace{0.15cm}
\begin{tabular}{c}
$\exists$ !i!. !Ti::Type!
\\\hline
!<a1:T1,...,an:Tn>::Type!
\end{tabular}
\hspace{0.15cm}
\begin{tabular}{c}
!T1::K! $\;$ !T2::Type!
\\\hline
!{T1->T2}::Type!
\end{tabular}
\\\hline
\begin{tabular}{|c|}
\hline
Type System:\\
$\Gamma \vdash$ !e : T!\\
\hline
\end{tabular}
\hfill
\begin{tabular}{c}
!c!: !T! \\\hline
$\Gamma \vdash$ !c!: !T!
\end{tabular}
\hspace{0.25cm}
\begin{tabular}{c}
!x!: !T! $\in \Gamma$ \\\hline
$\Gamma \vdash$ !x!: !T!
\end{tabular}
\hspace{0.25cm}
\begin{tabular}{c}
$\Gamma \vdash$ !e1!: !T1! $\quad$ $\Gamma$, !x!: !T1! $\vdash$ !e2!: !T2! \\\hline
$\Gamma \vdash$ !let x = e1 in e2!: !T2!
\end{tabular}
\hspace{0.25cm}
\begin{tabular}{c}
$\Gamma \vdash$  !e!: !bool! \\\hline
$\Gamma \vdash$  !not e!: !bool!
\end{tabular}
\\
\begin{tabular}{c}
$\Gamma \vdash$  !e1!: !bool! $\quad$ $\Gamma \vdash$ !e2!: !T! $\quad$ $\Gamma \vdash$ !e3!: !T! \\\hline
$\Gamma \vdash$ !if(e1) then e2 else e3!: !T!
\end{tabular}
\hspace{0.5cm}
\begin{tabular}{c}
$\Gamma \vdash$  !e!: !S1!\\\hline
$\Gamma \vdash$  !promote$_{\texttt{S1, S2}}$(e)!: !S2!
\end{tabular}
\\
\begin{tabular}{c}
$\Gamma \vdash$ !e1!: !{T1->T2}! $\quad$ $\Gamma$, !x!: !<key:T1,val:T2>! $\vdash$ !e2!:!T3! $\quad$ !T3::SM(S)! \\\hline
$\Gamma \vdash$ !sum(x in e1) e2!: !T3!
\end{tabular}
\hspace{0.25cm}
\begin{tabular}{c}
\\\hline
$\Gamma \vdash$ !{}$_{\texttt{T1},\texttt{T2}}$! : ! {T1 -> T2}!
\end{tabular}
\\
\begin{tabular}{c}
$\Gamma \vdash$ !k1!: !T1! $\,\,$ $\Gamma \vdash$ !v1!: !T2! $\,\,$ ... $\,\,$ $\Gamma \vdash$ !kn!: !T1! $\,\,$ $\Gamma \vdash$ !vn!: !T2!\\\hline
$\Gamma \vdash$ !{ k1 -> v1, ..., kn -> vn }!: !{ T1 -> T2 }!
\end{tabular}
\hspace{0.25cm}
\begin{tabular}{c}
$\Gamma \vdash$ !e1!: !{ T1 -> T2 }! $\quad$ $\Gamma \vdash$  !e2!: !T1!\\\hline
$\Gamma \vdash$ !e1(e2)!: !T2!
\end{tabular}
\\
\begin{tabular}{c}
$\Gamma \vdash$ !e1!: !T1! $\quad$ ... $\quad$ $\Gamma \vdash$ !en!: !Tn!\\\hline
$\Gamma \vdash$ !<a1=e1,...,an=en>!: !<a1:T1,...,an:Tn>!
\end{tabular}
\hspace{0.5cm}
\begin{tabular}{c}
$\Gamma \vdash$ !e!: !<a1:T1,...,ak:Tk>!\\\hline
$\Gamma \vdash$ !e.ai!: !Ti!
\end{tabular}
\\
\begin{tabular}{c}
$\Gamma \vdash$  !e1!: !T! $\quad$ !e2!: !T! $\quad$ !T::SM(S)!\\\hline
$\Gamma \vdash$  !e1 + e2!: !T!
\end{tabular}
\hspace{0.5cm}
\begin{tabular}{c}
$\Gamma \vdash$  !e1!: !T1! $\quad$ $\Gamma \vdash$ !e2!: !T2! $\quad$ !T1::SM(S)! $\quad$ !T2::SM(S)!\\\hline
$\Gamma \vdash$  !e1 * e2!: !T1$\otimes_{\code{S}}$T2!
\end{tabular}
\\ \hline
\begin{tabular}{|c|}
\hline
Definition of \myotimes:\\
$\forall$ !i!. !Ti :: SM(S)!\\
\hline
\end{tabular}
\hfill
\begin{tabular}{c}
!S! \myotimes{} !T1! \myeq{} !T1! $\quad$ 
% !T1! \myotimes{} !S! \myeq{} !T1! $\quad$ 
!{T1 -> T2}! \myotimes{} !T0! \myeq{} !{T1 -> T2! \myotimes{} !T0}! \\
!<a1:T1, ..., an:Tn>! \myotimes{} !T0! \myeq{} !<a1:T1!\myotimes{}!T0, ...,an:Tn!\myotimes{}!T0>!
\end{tabular}\\
\hline
\end{tabular}
\vspace{-0.4cm}
\caption{Kind System and Type System of \lang.}
\vspace{-0.5cm}
\label{fig:typesystem}
\end{figure*}

\subsection{Kind System and Type System}
Figure~\ref{fig:typesystem} shows the kind/type system of \lang. 
The types with a semi-ring structure have the kind !SM(S)!; 
semi-ring dictionaries with !S!-semi-module value types are also !S!-semi-modules (i.e., they have the kind !SM(S)!). 
However, dictionaries with value types of the ordinary kind !Type! are of kind !Type!.
Similar patterns apply to records. 

\begin{myexample}{1}
Both types !{ string -> int }! and !<c: int>! are of kind !SM(int)!. However, the types !{string -> string}! and !<d: string>! are of kind !Type!.
\end{myexample}

The addition of two expressions requires both operands to have the same type of kind !SM(S)!. This means that the body of summation also needs to have a type of kind !SM(S)!.
The type system rules for the multiplication operator are defined inductively. 
Multiplying a scalar with a dictionary results in a dictionary with the same keys, but with the values multiplied with the scalar value.
Multiplying a dictionary with another term also results in a dictionary with the same keys, and values multiplied with that term. Note that the multiplication operator is not commutative in general.\footnote{To be more precise, the scalar \code{*} is commutative, but the tensor product \texttt{*} is commutative up to reordering.}  The typing rules for the multiplication of record types are defined similarly.

\begin{myexample}{1 (Cont.)}
Assume a dictionary term !d! with type !{ string -> int }!, and a record term !r! with type !<c: int>!. 
The type of the expression !d * r! is !{ string -> int }!$\otimes_{\langkw{int}}$!<c: int>!, which is !{ string -> <c: int> }!, as can be confirmed by the typing rules.
\end{myexample}

\begin{table}[t]
    \caption{Different semi-ring structures for scalar types.}
    \label{tab:semiring_scalar}
    \vspace{-0.4cm}
    \centering
    \begin{tabular}{|l|c|c|c|c|c|c|c|}
    \hline
        \textbf{Name} & \textbf{Type} & \textbf{Domain} & \textbf{Addition} & \textbf{Multiplication} & \textbf{Zero} & \textbf{One} & \textbf{Ring} \\ \hline
        Real Sum-Product & !real! & $\mathbb{R}$ & $+$ & $\times$ & !0! & !1! &  \cmark \\ \hline
        Integer Sum-Product & !int! & $\mathbb{Z}$  & $+$ & $\times$ & !0! & !1! &  \cmark \\ \hline
        Natural Sum-Product & \langkw{nat} & $\mathbb{N}$ & $+$ & $\times$ & !0! & !1! &  \xmark \\ \hline
        Min-Product & \langkw{mnpr} & $(0, \infty]$ & min & $\times$ & $\infty$ & !1! & \xmark \\ \hline
        Max-Product & \langkw{mxpr} & $[0, \infty)$ & max & $\times$ & !0! & !1! & \xmark \\ \hline
        Min-Sum & \langkw{mnsm} & $(-\infty, \infty]$ & min & $+$ & $\infty$ & !0! & \xmark \\ \hline
        Max-Sum & \langkw{mxsm} & $[-\infty, \infty)$ & max & $+$ & $-\infty$ & !0! & \xmark \\ \hline
        Max-Min & \langkw{mxmn} & $[-\infty, \infty]$ & max & min & $-\infty$ & $+\infty$ & \xmark \\ \hline
        Boolean & !bool! & $\{T, F\}$ & $\vee$ & $\land$ & !false! & !true! &  \xmark \\ \hline
    \end{tabular}
    \vspace{-0.5cm}
\end{table}

\subsection{Semi-Ring Constructs}
\label{sec:semiring_const}
\smartpara{Scalars}
Values of type !bool! form the \textit{Boolean Semi-Ring}, with disjunction and conjunction as binary operators, and !false! and !true! as identity elements.
Values of type !int! and !real! form \textit{Integer Semi-Ring} ($\mathbb{Z}$) and \textit{Real Semi-Ring} ($\mathbb{R}$), respectively. Table~\ref{tab:semiring_scalar} shows an extended set of semi-rings for scalar values. Both addition and multiplication only support elements of the same scalar type.

\smartpara{Promotion}
Performing multiplications between elements of different scalar data types requires explicitly \textit{promoting} the operands to the same scalar type. Promoting a scalar term !s! of type !S1! to type !S2! is achieved by !promote$_{\texttt{S1}, \texttt{S2}}$(s)!.

\smartpara{Dictionaries}
\label{sec:sd}
A dictionary with keys of type !K!, and values of type !V! is represented by the data type !{ K -> V }!.
The expression !{ k_1 -> v_1, ..., k_n -> v_n }!, constructs a dictionary of !n! elements with keys !k_1, ..., k_n! and values !v_1, ..., v_n!. The expression !{}!$_{\code{K,V}}$ constructs an empty dictionary of type !{ K -> V }!, and we might drop the type subscript when it is clear from the context.
The expression !dict(k)! performs a lookup for key !k! in the dictionary !dict!.
% The elements of a dictionary are key-value pairs, which can be seen as records with 
% field names !key! and !val!.

If the value elements with type !V! form a semi-ring structure, then the dictionary also forms a semi-ring structure, referred to as a semi-ring dictionary (SD) where the addition is point-wise, that is the values of elements with the same key are added.
The elements of an SD with \zero{V} as values are made implicit and can be removed from the dictionary. 
This means that two SDs with the same set of !k_i! and !v_i! pairings are equivalent regardless of their \zero{V}-valued !k_j!s.

% The multiplication operator for SDs is a generalization of outer products, and is exactly outer-products  when it is restricted to types defined by scalar types and records. 
% In more detail, the 
The multiplication !dict * s!, where !dict! is an SD with !k_i! and !v_i! as keys and values, results in an SD with !k_i! as the keys, and !v_i * s! as the values. For the expression !s * dict!, where !s! is not an SD and !dict! is an SD with keys !k_i! and values !v_i!, the result is an SD with !k_i! as keys and !s * v_i! as values. 
Note that the multiplication operator is not commutative by default.

\begin{myexample}{2}
Consider the following two SDs: !{ "a"->2, "b"->3 }! named as !dict1! and !{ "a"->4, "c"->5 }! named as !dict2!.
The result of !dict1+dict2! is !{ "a"->6, "b"->3, "c"->5 }!.
This is because !dict1! is equivalent to !{ "a"->2, "b"->3, "c"->0 }! and !dict2! is equivalent to !{ "a"->4, "b"->0, "c"->5 }!, and element-wise addition of them results in !{ "a"->2+4, "b"->3+0, "c"->0+5 }!.

The result of !dict1 * dict2! is !{ "a"->2 * dict2, "b"->3 * dict2 }!. The expression !2 * dict2! is evaluated to !{ "a"->2*4, "c"->2*5 }!. By performing similar computations, !dict1 * dict2! is evaluated to !{ "a"->{ "a"->8, "c"->10 }, "b"->{ "a"->12, "c"->15 } }!.
On the other hand, !dict2 * dict1! is !{ "a"->4 * dict1, "c"->5 * dict1 }!. After performing similar computations, the expression is evaluated to !{ "a"->{ "a"->8, "b"->12 }, "c"->{ "a"->10, "b"->15 } }!.
\end{myexample}

\smartpara{Records}
Records are constructed using !< a_1 = e_1, ..., a_n = e_n >! and
the field !a_i! of record !rec! can be accessed using !rec.a_i!.
When all the fields of a record are !S!-semi-modules, the record also forms an !S!-semi-module.

% The multiplication operator !*! of records is defined similarly to the one for SDs. The multiplication of a record with another semi-ring element results in a record with the same field names, but with the elements multiplied by the other operand.
% The multiplication of a scalar element with a record results in a record with the same fields and the elements left-multiplied by the other operand.
% % This means that in the body of the loop !sum(r in R)!, one has to use !r.key! and !r.val! to access the key and value of !r!, respectively.

\begin{myexample}{1 (Cont.)}
Assume the dictionary !d! with the value !{ "a"->2, "b"->3 }!, and the record !r! with the value !< c=4 >!.
The expression !d * r! is evaluated as !{ "a" -> <c=8>, "b" -> <c=12> }!.
\end{myexample}

\subsection{Dictionary Summation}

The expression !sum(x in d) e! specifies iteration over the elements of dictionary !d!, where each element !x! is a record with the attribute !x.key! specifying the key and !x.val! specifying the value. One can alternatively use the syntactic sugar !sum(<k,v> in d) e! that binds !k! to !x.key! and !v! to !x.val! (cf. Figure~\ref{fig:langext}).
This iteration computes the summation of the result of the expression !e! using the corresponding addition operator, and by starting from an appropriate additive identity element.
In the case that !e! has a scalar type, this expression computes the summation using the corresponding scalar addition operator.
If the expression !e! is an SD, then the SD addition is used.
% in the case of !e! with a scalar type, the scalar addition operator is used, and in the case of !e! with an SD type, the SD addition is used.

\begin{myexample}{1 (Cont.)}
Consider the expression !sum(x in d) x.val! where !d! is a dictionary with value of !{ "a" -> 2, "b" -> 3 }!. This expression is evaluated to !5!, which is the result of adding the values (!2 + 3!) in dictionary !d!.
Let us consider the expression !sum(<k,v> in d) { k -> v * 2 }!, with the same value as before for !d!. This expression is evaluated to !{ "a" -> 4, "b" -> 6 }!, which is the result of the addition of !{ "a" -> 2*2 }! and !{ "b" -> 3*2 }!.
\end{myexample}

% \todo{explain more on example 1 and have another example similar to lemma 3 of loop fusion where key also forms a semi-ring}

% \begin{myexample}{1}
% As our working example, we consider the following biological application.
% \end{myexample}

\subsection{Set and Array}
Collection types other than dictionaries, such as arrays and sets, can be defined in terms of dictionaries (cf. Figure~\ref{fig:langext}).
Arrays can be obtained by using \textit{dense integers} (!dense_int!), which are continuous integers ranging from $0$ to $k$
% \footnote{We do not include data types such as \langkw{dense\_int} and other enumerated types (cf. Section~\ref{sec:semiring_ext}) in the grammar and kind/type system, to make the language minimalistic for presentation purposes.}
, as keys and the elements of the array as values.
Sets can be obtained by using the elements of the set as keys and Booleans as values.
Arrays and sets of elements of type !T! are represented as ![| T |]! and  !{ T }!, respectively.

\begin{figure*}[t]
\setlength{\tabcolsep}{0.4em}
\begin{center}
\begin{tabular}{|l|l|l|}
\hline
\multicolumn{1}{|c|}{\textbf{Extension}} & \multicolumn{1}{c|}{\textbf{Definition}} & \multicolumn{1}{c|}{\textbf{Description}}\\ \hline
!if e_0 then e_1 !& !if e_0 then e_1 else !\zero{T} \tab \textit{where} !e_1: T!& \grammarcomment{One-Branch Conditional} \\ \hline
!{ e_0, ..., e_k }! & !{ e_0 -> true,..., e_k -> true }! & \grammarcomment{Set Construction}\\ \hline
!dom(e)! & !sum(x in e) { x.key }! & \grammarcomment{Key Set of Dictionary}\\ \hline
!sum(<k,v> in e)e_1! & !sum(x in e) let k=x.key in let v=x.val in e_1! & \grammarcomment{Sum Paired Iteration}\\ \hline
!range(dn)! & !{ 0 -> true, ..., dn-1 -> true }! & \grammarcomment{Range Construction}\\ \hline
![| e_0,...,e_k |]! & !{ 0 -> e_0, ..., k -> e_k }! & \grammarcomment{Array Construction}\\ \hline \hline
!{ T }! & !{ T -> bool } ! & \grammarcomment{Set Type}\\ \hline 
% !rng! & !{{ dense_int -> bool }} ! & \grammarcomment{Range Type}\\ \hline 
![| T |]! & !{ dense_int -> T } ! & \grammarcomment{Array Type}\\ \hline 
\end{tabular}
\end{center}
\vspace{-0.4cm}
\caption{Extended constructs of \lang.}
\vspace{-0.4cm}
\label{fig:langext}
\end{figure*}

\section{Expressiveness for Databases}
\label{sec:db}
This section analyzes the expressive power of \lang for database workloads.
We start by showing the translation of relational algebra to \lang (Section~\ref{sec:ra}).
Then we show the translation of nested relational calculus to \lang (Section~\ref{sec:nrc}), followed by the translation of 
aggregations (Section~\ref{sec:agg}).
% Finally, we show how \lang can express the hash join operator, an important operator for query processing engines (Section~\ref{sec:hashjoin}). 

\subsection{Relational Algebra}
\label{sec:ra}
% In his seminal work, Codd~\cite{rel_codd} introduced relational model of data and a corresponding algebra, named as relational algebra. 
% Relational algebra became the foundation of many query languages used in database management systems, including SQL.

Relational algebra~\cite{rel_codd} is the foundation of many query languages used in database management systems, including SQL.
In general, a relation $R(a_1, ..., a_n)$ (with set semantics) is represented as a dictionary of type !{ <$a_1$: $A_1$, ..., $a_n$: $A_n$> -> bool }! in \lang.
% By using the variable order of $[a_1, ..., a_n]$, the factorized representation of this relation in \lang is a nested dictionary of type !{ $A_1$ -> { ... -> { $A_n$ -> int } ... } }!.
Figure~\ref{fig:ra_ndql} shows the translation rules for the relational algebra operators. 
\lang can also express different variants of joins including outer/semi/anti-joins.
The explanation of the relational algebra and various join operators can be found in the supplementary materials.

\begin{myexample}{3}
Consider the following data for the !Genes! input, which is a flat relation providing positional information of genes on the genome:

\begin{center}
\begin{footnotesize}
\setlength\tabcolsep{4pt}
\begin{tabular}{c | c | c | c | c | c | c |}
\textbf{Genes} & \textbf{name} & \textbf{desc} & \textbf{contig} & \textbf{start} & \textbf{end} & \textbf{gid}  \\ \hline
& NOTCH2 & notch receptor 2 & 1 & 119911553 & 120100779 & ENSG00000134250 \\ \cline{2-7}
& BRCA1 & DNA repair associate & 17 & 43044295 & 43170245 & ENSG00000012048 \\ \cline{2-7}
& TP53 & tumor protein p53 & 17 & 7565097 & 7590856 & ENSG00000141510 \\ \hline
\end{tabular} 
\end{footnotesize}
\end{center}

\noindent This relation is represented as follows in \lang:

\begin{lstlisting}[frame=none,basicstyle=\scriptsize\ttfamily]
{ <name="NOTCH2",desc="notch receptor 2", contig=1, start=119911553, end=120100779, gid="ENSG00000134250">,
  <name="BRCA1",desc="DNA repair associate", contig=17, start=43044295, end=43170245, gid="ENSG00000012048">, 
  <name="TP53",desc="tumor protein p53", contig=17, start=7565097, end=7590856, gid="ENSG00000141510"> }
\end{lstlisting}

\noindent Only a subset of the attributes in the !Genes! relation are 
commonly used in a biomedical analysis. 
This can be achieved using the following expression:

\noindent
\begin{lstlisting}
sum(<g,v> in Genes) { <gene=g.name,contig=g.contig,start=g.start,end=g.end> }
\end{lstlisting}

\end{myexample}

\smartpara{Inefficiency of Joins}
The presented translation for the join operator is inefficient. 
This is because one has to consider all combinations of elements of the input relations.
In the case of equality joins, this situation can be improved by leveraging data locality as will be shown in Section~\ref{sec:hashjoin}.

\begin{figure*}[t]
\begin{tabular}{|l|r c l|}
\hline
\textbf{Name} & \multicolumn{3}{|l|}{\textbf{Translation}}  \\ \hline
Selection&\translate{$\sigma_p$(R)} &=& !sum(x in! \translate{R} !)  if p(x.key)! !then { x.key }!\\ \hline 
Projection&\translate{$\pi_f$(R)} &=& !sum(x in! \translate{R} !) { f(x.key)  }!\\ \hline 
Union&\translate{R $\cup$ S} &=& \translate{R}! + !\translate{S}\\ \hline
Intersection&\translate{R $\cap$ S} &=& !sum(x in! \translate{R} !)!~!if !\translate{S}!(x.key)! !then { x.key }! \\ \hline
Difference&\translate{R $-$ S} &=& !sum(x in! \translate{R} !)!~!if not! \translate{S}!(x.key)! !then { x.key }! \\ \hline
Cartesian Product &\translate{R $\times$ S} &=& !sum(x in! \translate{R} !)!~!sum(y in! \translate{S} !) { concat(x.key, y.key) }! \\ \hline
Join&\translate{R $\bowtie_\theta$ S} &=& \translate{$\sigma_\theta$(R $\times$ S)} \\ \hline
\end{tabular}
\vspace{-0.4cm}
\caption{Translation from relational algebra (with set semantics) to \lang.}
\label{fig:ra_ndql}
\vspace{-0.4cm}
\end{figure*}

\subsection{Nested Relational Calculus}
\label{sec:nrc}
Relational algebra does not allow nested relations; a relation in the first normal form (1NF) when none of the attributes is a set of elements~\cite{rel_codd}. 
Nested relational calculus allows attributes to be relations as well. 
In order to make the case more interesting, we consider \nrcplus~\cite{incnrc}, a variant of nested relational calculus with \textit{bag semantics} and without difference operator.

Nested relations are represented as dictionaries mapping each row to their multiplicities. 
As the rows can contain other relations, the keys of the outer dictionary can also contain dictionaries.
Figure~\ref{fig:nrc_ndql} shows the translation from positive   nested relational calculus (without difference) to \lang. The explanation on the translation of its constructs can be found in the supplementary material.

\begin{myexample}{4}
Consider the !Variants! input, which contains top-level metadata for genomic variants and nested genotype information for every sample. Genotype calls
denoting the number of alternate alleles in a sample. An example of the nested !Variants! input is as follows: 

\begin{center}
\begin{footnotesize}
\begin{tabular}{c | c | c | c | c | c|}
\textbf{Variants} & \textbf{contig} & \textbf{start} & \textbf{reference} & \textbf{alternate} & \textbf{genotypes} \\ \hline
& 17 & 43093817 & C & A & \begin{tabular}{|c | c|} %rs730881473
\textbf{sample} & \textbf{call} \\ \hline  
TCGA-AN-A046 & 0 \\ \hline
TCGA-BH-A0B6 & 1 \\ \hline
\end{tabular} \\ \cline{2-6}
& 1 & 119967501 & G & C & \begin{tabular}{|c | c|} %COSM527767
\textbf{sample} & \textbf{call} \\ \hline  
TCGA-AN-A046 & 1 \\ \hline
TCGA-BH-A0B6 & 2 \\ \hline
\end{tabular} \\ \hline
\end{tabular} 
\end{footnotesize}
\end{center}

\noindent This nested relation is represented as follows in \lang:

\begin{lstlisting}[frame=none,basicstyle=\footnotesize\ttfamily]
{ <contig=17, start=43093817, reference="C", alternate="A", genotypes=
    { <sample="TCGA-AN-A046", call=0> -> 1, <sample="TCGA-BH-A0B6", call=1> -> 1 } > -> 1,
  <contig=1, start=119967501, reference="G", alternate="C", genotypes= 
    { <sample="TCGA-AN-A046", call=1> -> 1, <sample="TCGA-BH-A0B6", call=2> -> 1 } > -> 1 }
\end{lstlisting}

\end{myexample}

\begin{figure*}[t]
\begin{tabular}{|l|r c l|}
\hline
\textbf{Name} & \multicolumn{3}{|l|}{\textbf{Translation}}  \\ \hline
Let Binding&\translatebegin{}@let X = $e_1$ in $e_2$@\translateend{}
&=& !let X = !\translate{$e_1$}! in !\translate{$e_2$}\\ \hline
Empty Bag&\translate{$\emptyset_T$} &=& !{ }!$_{T,\langkw{int}}$ \\ \hline
% \translate{\nrckw{sng}($e$)} 
Singleton Bag&\translatebegin{}@sng($e$)@\translateend{}
&=& !{ !\translate{$e$} !-> 1 }!\\ \hline
% \translate{\nrckw{flatten}($e$)} 
Flattening&\translatebegin{}@flatten($e$)@\translateend{}
&=& !sum(<k,v> in!\translate{$e$}!) v * k! \\ \hline
% \translate{\nrckw{for} x \nrckw{in} $e_1$ \nrckw{union} $e_2$} 
Monadic Bind&\translatebegin{}@for x in $e_1$ union $e_2$@\translateend{}
&=& !sum(<x,x_v> in!\translate{$e_1$}!) x_v * !\translate{$e_2$} \\ \hline
Union&\translate{$e_1 \uplus e_2$} &=& \translate{$e_1$} !+! \translate{$e_2$} \\ \hline
Cartesian&\translate{$e_1 \times e_2$} &=& !sum(<x,x_v> in! \translate{$e_1$} !) sum(<y,y_v> in! \translate{$e_2$} !)! \\
Product&&&!  { <fst=x,snd=y> -> x_v * y_v }!\\ \hline
\end{tabular}
\vspace{-0.4cm}
\caption{Translation from \nrcplus (positive NRC with bag semantics)~\cite{incnrc} to \lang.}
\vspace{-0.4cm}
\label{fig:nrc_ndql}
\end{figure*}

\begin{myexample}{5}
The gene burden analysis uses data from !Variants! to calculate the mutational burden for every gene within every sample. The program first iterates over the top-level of !Variants!, iterates over 
the top-level of !Genes!, then assigning a variant to a gene if the variant lies within the mapped position on the genome. The program then iterates into the nested \textbf{genotypes} information of !Variants! to return sample, gene, and burden information; here, the \textbf{call} attribute provides the count of mutated alleles in that sample. 
This expression is represented as follows in \nrcplus:

\begin{lstlisting}[language=NRC, frame=none]
for v in vcf union   for g in genes union 
    if (v.contig = g.contig && g.start <= v.start && g.end >= v.start)
    then for c in v.genotypes union 
      {sample := c.sample, gene := g.name, burden := c.call}
\end{lstlisting}

\noindent This expression is equivalent to the following \lang expression (after pushing the multiplication of multiplicities of !Variants! and !Genes! inside the inner singleton dictionary construction):

\begin{lstlisting}
sum(<v,v_v> in Variants)  sum(<g,g_v> in Genes)
    if(g.contig==v.contig&&g.start<=v.start&&g.end>=v.start) 
     then sum(<c,c_v> in v.genotypes)
        { <sample = c.sample, gene = g.name, burden = c.call> -> v_v * g_v * c_v }
\end{lstlisting}

\noindent The type of this output is !{ <sample:string, gene:string, burden:real> -> int }!.

\end{myexample}

\subsection{Aggregation}
\label{sec:agg}
An essential operator used in query processing workloads is aggregation. Both relational algebra and nested relational calculus need to be extended in order to support this operator. 
The former is extended with the group-by aggregate operator $\Gamma_{g;f}$, where $g$ specifies the set of keys that are partitioned by, and $f$ specifies the aggregation function.
\nrcagg is an extended version of the latter with support for two aggregation operators; @sumBy$_g^f$@ is similar to group-by aggregates in relational algebra, whereas @groupBy$_g$@ only performs partitioning without performing any aggregation.

Figure~\ref{fig:agg_ndql} shows the translation of aggregations in relational algebra and \nrcagg to \lang.
The explanation of these operators can be found in the supplementary materials.

\smartpara{Generalized Aggregates} Both scalar and group-by aggregate operators can be generalized to support other forms of aggregates such as minimum and maximum by supplying appropriate semi-ring structure (i.e., addition, multiplication, zero, and one). 
% In such cases, the multiplication with !r.val! should be modified appropriately. 
% For example, in the case of computing the minimum, this term should be completely removed.
For example, in the case of maximum, the maximum function is supplied as the addition operator, and the numerical addition needs to be supplied as the multiplication operator~\cite{mohri2002semiring}. An extended set of semi-rings for scalar values are presented in Table~\ref{tab:semiring_scalar}.
To compute aggregates such as average, one has to compute both summation and count using two aggregates.
The performance of this expression can be improved as discussed later in Section~\ref{sec:multiagg}.

\smartpara{Inefficiency of Group-by} The translated group-by aggregates are inefficient. 
This is because relational algebra and NRC need to have an internal implementation utilizing dictionaries for the grouping phase (i.e., the creation of the variable !tmp! in the second, fourth, fifth rules of Figure~\ref{fig:agg_ndql}).
Nevertheless, as there is no first-class support for dictionaries, the grouped structure is thrown away when the final aggregate result is produced.
This additional phase involves an additional iteration over the elements, as illustrated in the next example.

\begin{myexample}{6}
As the final step for computing gene burden, one has to perform sum-aggregate of the genotype call (now denoted burden) for each sample corresponding to that gene. By naming the previous NRC expression as \texttt{gv}, the following \nrcagg expression specifies the full burden analysis:

% \begin{tabular}{l}
% \nrckw{for} x \nrckw{in} \textbf{gmb0\_nrc} \nrckw{union} \\
% \tab\{ gene=x.gene, burdens=\nrckw{sumBy}$_\text{sid}^\text{burden}$(x.burdens) \}
% \end{tabular}

% \begin{lstlisting}[language=NRC, frame=none]
% for x in $\textbf{gmb0\_nrc}$ union 
%   {sample := x.sample, burdens := sumBy$_{sample}^{burden}$(x.burdens)}
% \end{lstlisting}

\begin{lstlisting}[language=NRC, frame=none]
let gmb = groupBy$_{sample}$(gv)
for x in gmb union 
  {sample := x.key, burdens := sumBy$_{gene}$(x.val)}
\end{lstlisting}

% An alternative (and more efficient) \nrcagg expression is:

% \begin{lstlisting}[language=NRC, frame=none]
% groupBy$_{sample}^{burdens}$(sumBy$_{sample,gene}^{burden}$(gv))
% \end{lstlisting}

\noindent This expression is translated as the following \lang expression:

\begin{lstlisting}
let tmp = sum(<x,x_v> in gv) { x.sample -> { x -> x_v } } in
let gmb = sum(<x,x_v> in tmp) { <key=x, val=x_v> -> 1 } in
sum(<x,x_v> in gmb) { <sample = x.key, burdens = 
  let tmp1 = sum(<b,b_v> in x.val) { b.gene -> x_v * b_v * b.burden } in
  sum(<t,t_v> in tmp1) { <key=t, val=t_v> -> 1 }    > -> 1 }
\end{lstlisting}

\noindent This expression is of type !{ <sample:string,burdens:{<key:string,val:real> -> int}> -> int }!. 
% However, one can write the following \lang expression:

% \begin{lstlisting}
% sum(x in gmb0)
%   { x.key.sample ->
%      sum(b in x.key.burdens)
%       { b.key.gene -> x.val * b.val * b.key.burden }
%   }
% \end{lstlisting}

% \noindent This expression is of type !{ string -> { string -> real } }!. As this expression removes one iteration over the inner dictionary, it is more efficient than the previous version. 
% Furthermore, for the cases where one needs to access the inner dictionary using !sample! values, this expression can provide constant time access using dictionary lookup, as opposed to iterating over the entire dictionary.
\end{myexample}

\begin{figure*}[t]
\begin{tabular}{|l|r c l|}
\hline
\textbf{Name} & \multicolumn{3}{|l|}{\textbf{Translation}}  \\ \hline \hline
\multicolumn{4}{|l|}{\textit{Relational Algebra:}}\\\hline
Scalar Agg.&\translate{$\Gamma_{\emptyset;f}$($e$)} &=& !sum(<x,x_v> in! \translate{$e$}!) x_v * !\translate{$f$}!(x)! \\ \hline
Group-by&\translate{$\Gamma_{g;f}$($e$)} &=& !let tmp=sum(<x,x_v> in! \translate{$e$}!) {!\translate{$g$}!(x)->x_v*!\translate{$f$}!(x)}! \\
Aggregate&&&!in sum(<x,x_v> in tmp) { <key=x, val=x_v> -> 1 }!\\\hline \hline
\multicolumn{4}{|l|}{\textit{\nrcagg:}}\\\hline
% \translate{\nrckw{sumBy}$_\emptyset^f$($e$)} 
Scalar Agg.&\translatebegin{}@sumBy$_\emptyset^f$($e$)@\translateend{}
&=& !sum(<x,x_v> in! \translate{$e$}!) x_v * !\translate{$f$}!(x)! \\\hline
% \translate{\nrckw{sumBy}$_g^f$($e$)} 
Group-by&\translatebegin{}@sumBy$_g^f$($e$)@\translateend{}
&=& !let tmp=sum(<x,x_v> in! \translate{$e$}!) { !\translate{$g$}!(x) -> x_v*!\translate{$f$}!(x) }! \\
Aggregate&&&!in sum(<x,x_v> in tmp) { <key=x, val=x_v> -> 1 }!\\\hline
% \translate{\nrckw{groupBy}$_g$($e$)} 
Nest&\translatebegin{}@groupBy$_g$($e$)@\translateend{}
&=& !let tmp=sum(<x,x_v> in! \translate{$e$}!) { !\translate{$g$}!(x) -> {x -> x_v} }! \\
&&&!in sum(<x,x_v> in tmp) { <key=x, val=x_v> -> 1 }!\\\hline
\end{tabular}
\vspace{-0.4cm}
\caption{Translation of aggregate operators of relational algebra and \nrcagg~\cite{trance_vldb} to \lang.}
\vspace{-0.4cm}
\label{fig:agg_ndql}
\end{figure*}

\section{Expressiveness for Linear Algebra}
\label{sec:la}
In this section, we show the power of \lang for expressing linear algebra workloads. 
We first show the representation of vectors in \lang, followed by the representation of matrices in \lang. 
We also show the translation of linear algebra operators to \lang expressions together with their Einstein summation notation, referred to as \code{einsum} in libraries such as numpy.

\begin{figure*}[t]
\setlength\tabcolsep{0.05cm}
\begin{tabular}{|l|r c l|c|}
\hline
\textbf{Name} & \multicolumn{3}{|l|}{\textbf{Translation}} & \textbf{Einsum} \\ \hline \hline
\multicolumn{5}{|l|}{\textit{Vector Operations:}}\\\hline
Addition &\translate{$V_1 + V_2$} &=& \translate{$V_1$} !+!  \translate{$V_2$}& - \\ \hline 
Scal-Vec. Mul.&\translate{$a \cdot V$} &=& \translate{$a$} !*! \translate{$V$}& !,i->i! \\ \hline 
Hadamard Prod.&\translate{$V_1 \circ V_2$} &=& !sum(x in! \translate{$V_1$} !) { x.key->x.val*!\translate{$V_2$}!(x.key) }! & !i,i->i! \\ \hline
Dot Prod.&\translate{$V_1 \cdot V_2$} &=& !sum(x in! \translate{$V_1$} !) x.val * !\translate{$V_2$}!(x.key)!& !i,i->! \\ \hline 
Summation&\translate{$\sum_{a\in V}a$} &=& !sum(x in! \translate{$V$} !) x.val!&!i->! \\ \hline \hline
\multicolumn{5}{|l|}{\textit{Matrix Operations:}}\\\hline
Transpose&\translate{$M^T$} &=& !sum(x in! \translate{$M$} !)! & !ij->ji!\\
&&&\tab!{ <row=x.key.col, col=x.key.row> -> x.val }!&\\ \hline
Addition&\translate{$M_1 + M_2$} &=& \translate{$M_1$} !+!  \translate{$M_2$} & -\\ \hline
Scal-Mat. Mul.&\translate{$a \cdot M$} &=& \translate{$a$} !*!  \translate{$M$} & !,ij->ij!\\ \hline
Hadamard Prod.&\translate{$M_1 \circ M_2$} &=& !sum(x in! \translate{$M_1$} !) { x.key -> x.val * !\translate{$M_2$}!(x.key) }! & !ij,ij->ij!\\ \hline
Matrix-Matrix&\translate{$M_1 \times M_2$} &=& !sum(x in! \translate{$M_1$} !) sum(y in! \translate{$M_2$} !)! & !ij,jk->ik! \\
Multiplication&&&~~!if(x.key.col == y.key.row) then ! &\\ 
&&&~~~~~~!{ <row=x.key.row,col=y.key.col> -> x.val*y.val }! &\\ \hline
Mat-Vec. Mul.&\translate{$M \cdot V$} &=& !sum(x in! \translate{$M$}!) {x.key.row->x.val*!\translate{$V$}!(x.key.col)}! & !ij,j->i!\\ \hline
Trace&\translate{$Tr(M)$} &=& !sum(<k,v> in! \translate{$M$}!) if(k.row==k.col) then v!& !ii->! \\ \hline
\end{tabular}
\vspace{-0.4cm}
\caption{Translation of linear algebra operations to \lang.}
\label{fig:la_ndql}
\vspace{-0.4cm}
\end{figure*}

\subsection{Vectors}
\lang represents vectors as dictionaries mapping indices to the element values; thus, vectors with elements of type !S! are \lang expressions of type !{ int -> S }!. 
This representation is similar to functional pull arrays in array processing languages~\cite{repa}.
The key difference is that the size of the array is not stored separately. 
% Figure~\ref{fig:vec_ndql} shows the translation of vector operations to \lang. The detailed explanation for each construct can be found in the supplementary material.

\begin{myexample}{7}
Consider two vectors defined as  
$V=\begin{bmatrix}
a_{0}&0&a_{1}&a_{2}\\
\end{bmatrix}$ and $U=\begin{bmatrix}
b_0&b_1&b_2&0\\
\end{bmatrix}$.
These vectors are represented in \lang as
!{ 0 -> $a_{0}$, 2 -> $a_{1}$, 3 -> $a_{2}$ }! and !{ 0 -> $b_{0}$, 1 -> $b_{1}$, 2 -> $b_{2}$ }!.
\noindent The expression $V\circ U$ is evaluated to
!{ 0 -> $a_{0}$*$b_0$, 2 -> $a_{1}$*$b_2$, 3 -> $a_{2}$*0 }!.
As the value associated with the key !3! is zero, this dictionary is equivalent to 
!{ 0 -> $a_{0}$*$b_0$, 2 -> $a_{1}$*$b_2$ }!.
This value corresponds to the result of evaluating $V\circ U$, that is the vector
$\begin{bmatrix}
a_0b_0&0&a_1b_2&0\\
\end{bmatrix}$.
% \[
% \begin{bmatrix}
% a_0b_0&0&a_1b_2&0\\
% \end{bmatrix}
% \]
\end{myexample}

\subsection{Matrices}
\label{sec:matrix}
Matrices are considered as dictionaries mapping the row and column indices to the element value. 
This means that matrices with elements of type !S! are \lang expressions with the type !{ <row: int, col: int> -> S }!.
Figure~\ref{fig:la_ndql} shows the translation of vector and matrix operations to \lang. 
We give a detailed explanation of these operators in the supplementary material.

\begin{myexample}{8}
Consider the following matrix $M$ of size $2 \times 4$:
$\begin{bmatrix}
c_0&0&0&c_1\\
0&c_2&0&0\\
\end{bmatrix}$.
This matrix is in \lang as 
!{<row=0,col=0> -> $c_0$,<row=0,col=3> -> $c_1$,<row=1,col=1> -> $c_2$}!.
The expression $M\cdot V$ is evaluated to the following dictionary after translating to \lang:
!{ 0 -> $c_0$*$a_0$+$c_1$*$a_2$, 1 -> $c_2$*0 }!.
This expression is the dictionary representation of the following vector, which is the result of the matrix-vector multiplication:
$\begin{bmatrix}
c_0a_0+c_1a_2&0\\
\end{bmatrix}$.
% \[
% \begin{bmatrix}
% c_0a_0+c_1a_2&0\\
% \end{bmatrix}
% \]
\end{myexample}

\begin{myexample}{9}
Computing the covariance matrix is an essential technique in machine learning, and is useful for training various models~\cite{10.1145/3196959.3196960}. The covariance matrix of a matrix $A$ is computed as $A^TA$. In our biomedical example, computing the covariance matrix enables us to train different machine learning models such as linear regression on top of the !Variant! dataset. 

\end{myexample}

\smartpara{Point-wise Operations} In many machine learning applications, it is necessary to support point-wise application of functions such as $cos$, $sin$, and $tan$ on matrices. 
\lang can easily support these operators by adding the corresponding scalar functions and using !sum! to apply them at each point.

\smartpara{Inefficiency of Operators} Note that the presented operators are highly inefficient. For example, matrix-matrix multiplication requires iterating over every combination of elements, whereas with a more efficient representation, this can be significantly improved. 
This improved representation is shown later in Section~\ref{sec:curried}.

\section{Efficiency}
\label{sec:opt}
In this section, we present loop optimizations of \lang.
Figure~\ref{fig:opt_rules} summarizes the transformation rules required for such optimizations.

\begin{figure}
\begin{tabular}{|l c l|}
\hline
\textit{Vertical Loop Fusion:} & &\\ \hline
% !let y=sum(x in e1) {f1(x.key)->x.val}! & \multirow{2}{*}{\transto} & !sum(x in e1)! \\
% !sum(x in y) f3(x)! & & !  f3(<key = f1(x.key), val = x.val>)! \\ \hline \hline
!let y=sum(<x,x_v> in e1){f1(x)->x_v}! & \multirow{2}{*}{\transto} & !sum(<x,x_v> in e1)! \\
!in sum(<x,x_v> in y){f2(x)->x_v}! & & !  { f2(f1(x)) -> x_v }!\\ \hline
!let y=sum(<x,x_v> in e1){x->f1(x_v)}! & \multirow{2}{*}{\transto} & !sum(<x,x_v> in e1)! \\
!in sum(<x,x_v> in y){x->f2(x_v)}! & & !  { x -> f2(f1(x_v)) }!\\
\hline \hline
\textit{Horizontal Loop Fusion:} & &\\ \hline
!let y1=sum(x in e1) f1(x) in! & \multirow{3}{*}{\transto} & !let tmp = sum(x in e1)! \\
!let y2=sum(x in e1) f2(x) in! & & !  <y1 = f1(x), y2 = f2(x) >! \\ 
!f3(y1, y2)! & & !in f3(tmp.y1, tmp.y2)! \\ \hline \hline
\textit{Loop Factorization:} & &\\ \hline
!sum(x in e1) e2 * f(x)! & \transto & !e2 * sum(x in e1) f(x)! \\ \hline
!sum(x in e1) f(x) * e2! & \transto & !(sum(x in e1) f(x)) * e2! \\ \hline \hline
\textit{Loop-Invariant Code Motion:} & &\\ \hline
!sum(x in e1) let y = e2 in f(x, y)! & \multirow{1}{*}{\transto} & !let y = e2 in sum(x in e1) f(x, y)! \\ \hline \hline
\textit{Loop Memoization:} & &\\ \hline
!sum(x in e1)! & \multirow{2}{*}{\transto} & !let tmp=sum(x in e1){p(x)->{x.key->x.val}}! \\
!  if(p(x) == e2) then g(x, e3)! & & !in sum(x in tmp(e2)) g(x, e3)!\\ \hline
!sum(x in e1)! & \multirow{2}{*}{\transto} & !let tmp=sum(x in e1) {p(x)->f(x)}! \\
!  if(p(x) == e2) then f(x)! & & !in tmp(e2)!\\ \hline
\end{tabular}
\vspace{-0.4cm}
\caption{Transformation rules for loop optimizations.}
\vspace{-0.4cm}
\label{fig:opt_rules}
\end{figure}

\subsection{Loop Fusion}
\label{sec:loopfusion}

% What is the general rule for vertical loop fusion?
% does this work?
% let y = sum(x in e1) {{ f1(x.key) -> f2(x.val) }} in sum(x1 in y) f3(x1.key, x1.val) 
% --->
% sum(x in e1) f3(f1(x.key), f2(x.val))

\subsubsection{Vertical Loop Fusion} One of the essential optimizations for collection programs is deforestation~\cite{deforestation,foldr-fusion-1,Svenningsson:2002:SFA:581478.581491,Coutts07streamfusion}. 
This optimization can remove an unnecessary intermediate collection in a vertical pipeline of operators, and is thus named as vertical loop fusion. The benefits of this optimization are manifold. The memory usage is improved thanks to the removal of intermediate memory, and the run time is improved because the removal of the corresponding loop.
In query processing engines, pull and push-based \textit{pipelining}~\cite{Neumann11,cowbook} has the same role as vertical loop fusion~\cite{jfppushpull}.
Similarly, in functional array processing languages, pull arrays and push arrays~\cite{edsl-push,Claessen:2012:EAC:2103736.2103740,Svensson:2014:DPA:2636228.2636231} are responsible for fusion of arrays.
However, none of the existing approaches support fusion for dictionaries. 
Next, we show how vertical fusion in \lang subsumes the existing techniques. 

\begin{figure*}[t]
\begin{center}
\begin{subfigure}{\textwidth}
\begin{tabular}{l c l}
\begin{lstlisting}[frame=none]
let R1 = sum(<r,r_v> in R) { f1(r) -> r_v }
in sum(<r1,r1_v> in R1) { f2(r1) -> r1_v }
\end{lstlisting}&\transto&
\begin{lstlisting}[frame=none]
sum(<r,r_v> in R) 
  { f2(f1(r)) -> r_v }
\end{lstlisting}
\end{tabular}
\caption{Vertical fusion of \texttt{map}s in functional collections.}
\label{fig:opt:ex:map}
\end{subfigure}
\begin{subfigure}{\textwidth}
\centering
\begin{tabular}{l c}
\begin{lstlisting}[frame=none]
let R1 = sum(<r,r_v> in R) if(p1(r)) then { r -> r_v }  
in sum(<r1,r1_v> in R1)       if(p2(r1)) then { r1 -> r1_v } 
\end{lstlisting} & \transto
\end{tabular} \\
\begin{tabular}{l c l}
\begin{lstlisting}[frame=none]
let R1 = sum(<r,r_v> in R) { r -> p1(r)*r_v } 
in sum(<r1,r1_v> in R1)       { r1 -> p2(r1)*r1_v }
\end{lstlisting}&\transto&
\begin{lstlisting}[frame=none]
sum(<r,r_v> in R) 
  { r -> p1(r)*p2(r)*r_v }
\end{lstlisting}
\end{tabular}
\caption{Vertical fusion of \texttt{filter}s in functional collections.}
\label{fig:opt:ex:filter}
\end{subfigure}
\begin{subfigure}{\textwidth}
\begin{tabular}{l c l}
\begin{lstlisting}[frame=none]
let Vt = sum(<row,x> in V1) { row -> x * V2(row) } 
in sum(<row,x1> in Vt) { row -> x1 * V3(row) }
\end{lstlisting}&\transto&
\begin{lstlisting}[frame=none]
sum(<row,x> in V1) { row -> 
  x * V2(row) * V3(row) }
\end{lstlisting}
\end{tabular}
\caption{Vertical fusion of Hadamard product of three vectors.}
\label{fig:opt:ex:hadamard}
\end{subfigure}
\begin{subfigure}{\textwidth}
\begin{tabular}{l c l}
\begin{lstlisting}[frame=none]
let Rsum   = sum(<r,r_v> in R) r.A * r_v in
let Rcount = sum(<r,r_v> in R) r_v       in
Rsum / Rcount 
\end{lstlisting}&\transto&
\begin{lstlisting}[frame=none]
let RsumRcount = sum(<r,r_v> in R) 
  < Rsum = r.A * r_v, Rcount = r_v >
in RsumRcount.Rsum / RsumRcount.Rcount 
\end{lstlisting}
\end{tabular}
\caption{Horizontal fusion for the average computation.}
\label{fig:opt:ex:avg}
\end{subfigure}
\begin{subfigure}{\textwidth}
\begin{tabular}{l c l}
\begin{lstlisting}[frame=none]
sum(<x,x_v> in NR)  
  sum(<y,y_v> in x.C) x.A*x_v*y.D*y_v
\end{lstlisting}&\transto&
\begin{lstlisting}[frame=none]
sum(<x,x_v> in NR) 
  x.A * x_v * (sum(y in x.C) y.D * y_v)
\end{lstlisting}
\end{tabular}
\caption{Loop factorization for scalar aggregates in nested relations.}
\label{fig:opt:ex:fact}
\end{subfigure}
\begin{subfigure}{\textwidth}
\begin{tabular}{l c l}
\begin{lstlisting}[frame=none]
sum(<x,x_v> in NR) sum(<y,y_v> in x.C) 
    { x.B -> x.A * x_v * y.D * y_v }
\end{lstlisting}&\transto&
\begin{lstlisting}[frame=none]
sum(<x,x_v> in NR) sum(<y,y_v> in x.C) 
    { x.B -> 1 } * x.A * x_v * y.D * y_v
\end{lstlisting}
\end{tabular}
\begin{tabular}{c l c l}
\transto&
\begin{lstlisting}[frame=none]
sum(<x,x_v> in NR) {x.B->1}*x.A*x_v* 
  (sum(<y,y_v> in x.C) y.D * y_v)
\end{lstlisting}&\transto&
\begin{lstlisting}[frame=none]
sum(<x,x_v> in NR) {x.B -> x.A*x_v* 
  (sum(<y,y_v> in x.C) y.D * y_v) } 
\end{lstlisting}
\end{tabular}
\caption{Loop factorization for group-by aggregates in nested relations.}
\label{fig:opt:ex:fact2}
\end{subfigure}
\begin{subfigure}{\textwidth}
\begin{tabular}{l c l c l}
\begin{lstlisting}[frame=none]
sum(<x,x_v> in NR) 
  sum(<y,y_v> in x.C) 
    let E = S(x.B) in 
    x.A*x_v*E*y.D*y_v
\end{lstlisting}&\transto&
\begin{lstlisting}[frame=none]
sum(<x,x_v> in NR) 
  let E = S(x.B) in 
  sum(<y,y_v> in x.C) 
    x.A*x_v*E*y.D*y_v
\end{lstlisting}&\transto&
\begin{lstlisting}[frame=none]
sum(<x,x_v> in NR) 
  let E = S(x.B) in 
  x.A*x_v*E*(
  sum(<y,y_v> in x.C) y.D*y_v)
\end{lstlisting}
\end{tabular}
\caption{Loop-invariant code motion for dictionary lookup in nested relations.}
\label{fig:opt:ex:motion}
\end{subfigure}
\end{center}
\vspace{-0.3cm}
\caption{Examples for loop fusion (vertical and horizontal) and loop hoisting in \lang.}
\label{fig:opt:ex}
\vspace{-0.3cm}
\end{figure*}

\smartpara{Fusion in Functional Collections}
As a classic example in functional programming, a sequence of two !map! operators can be na\"ively expressed as the left expression in Figure~\ref{fig:opt:ex:map}. There is no need to materialize the results of the first !map! into !R1!.
Instead, by applying the first vertical loop fusion rule from Figure~\ref{fig:opt_rules} one can fuse these two operators and remove the intermediate collection as depicted in the right expression of Figure~\ref{fig:opt:ex:map}.
Another interesting example is the fusion of two !filter! operators. The pipeline of these operators is expressed as the first \lang expression in Figure~\ref{fig:opt:ex:filter}. The conditional construct in both summations can be pushed to the value of dictionary resulting in the second expressions. Finally, by applying the second rule of vertical fusion, the last expression is derived, which uses a single iteration over the elements of !R!, and the result collection has a zero multiplicity for elements where !p1! or !p2! is !false!. 

\smartpara{Fusion in Linear Algebra}
Similarly, in linear algebra programs there are cases where the materialization of intermediate vectors can be avoided. 
As an example, consider the Hadamard product of three vectors, which is na\"ively translated as the first \lang expression in Figure~\ref{fig:opt:ex:hadamard}.
Again, the intermediate vector !Vt! is not necessary. By applying the second vertical loop fusion rule from Figure~\ref{fig:opt_rules}, one can avoid the materialization of !Vt!, as shown in the right expression in Figure~\ref{fig:opt:ex:hadamard}. This expression performs a single iteration over the elements of the vector !V1!.

\subsubsection{Horizontal Loop Fusion}
\label{sec:multiagg} 
Another form of loop fusion involves simultaneous iterations over the same collection, referred to as horizontal loop fusion. 
More specifically, in query processing workloads, there could be several aggregate computations  over the same relation. In such cases,
one can share the scan over the same relation and compute all the aggregates simultaneously.
For example, in order to compute the average, one can use the following two aggregates over the same relation !R!, as shown in the left expression in Figure~\ref{fig:opt:ex:avg}.
In such a case, one can iterate over the input relation only once, and 
compute both aggregates as a tuple. In this optimized expression (cf. right expression in Figure~\ref{fig:opt:ex:avg}), the average is computed by dividing the element of the tuple storing summation over the count. This optimization corresponds to \textit{merging a batch of aggregates} over the same relation in databases.

\subsection{Loop Hoisting}
\subsubsection{Loop Factorization}
\noindent One of the most important algebraic properties of the semi-ring structure is the distributive law, which enables factoring out a common factor in addition of two expressions. 
This algebraic law can be generalized to the case of summation over a collection (cf. Figure~\ref{fig:opt_rules}).

Consider a nested relation !NR! with type !{<A:real,B:int,C:{<D:real> -> int}> -> int}! where we are interested in computing the multiplication of the attributes !A! and !D!. This can be represented as the left expression in Figure~\ref{fig:opt:ex:fact}.
The subexpression !x.A*x_v! is independent of the inner loop, and can be factored out, resulting in the right expression in the same figure.

This optimization can also benefit expressions involving dictionary construction, such as group by expressions. As an example, consider the same aggregation as before grouped by attribute !B!, represented in the first expression of Figure~\ref{fig:opt:ex:fact2}. According to the semantics of \lang (cf. Section~\ref{sec:sem}), we can rewrite the dictionary construction resulting in the second expression.
Again, we can factor out the terms independent of the inner loop (cf. the third expression). By using the semantics of dictionaries, this expression can be translated to the last expression in Figure~\ref{fig:opt:ex:fact2}.
In this expression the intermediate dictionaries corresponding to each group are only constructed for each element of the outer relation, instead of each element of the inner relation. 

\subsubsection{Loop-Invariant Code Motion}
In addition to multiplication operands, one can hoist let-bindings invariant to the loop. Consider the following example, where one computes the aggregate !A * E * D! where !E! comes from looking up (using hash join) for another relation !S!, represented as the first expression in Figure~\ref{fig:opt:ex:motion}.
In this case, the computation of !E! of is independent of the inner loop and thus can be hoisted outside following the last rule of Figure~\ref{fig:opt_rules}, resulting in the middle expression.
Additionally, this optimization enables further loop factorization, which results in the last expression in Figure~\ref{fig:opt:ex:motion}.

\subsection{Loop Memoization}
\label{sec:loopmem}
In many cases, the body of loops cannot be easily hoisted. 
Such cases require further memoization-based transformations on the loop body to make them independent of the loop variable, referred to as loop memoization. 

\begin{figure*}[t]
\begin{center}
\begin{subfigure}{\textwidth}
\setlength{\tabcolsep}{0em}
\begin{tabular}{l c l c l}
\begin{lstlisting}[frame=none]
sum(<r,r_v> in R) 
 sum(<s,s_v> in S) 
  if(jkR(r)==jkS(s)) then 
  { concat(r,s)->r_v*s_v }
\end{lstlisting}&\transto&
\begin{lstlisting}[frame=none]
sum(<r,r_v> in R)
 let Sp = sum(<s,s_v> in S) 
  { jkS(s) -> {s->s_v} } in
 sum(<s,s_v> in Sp(jkR(r))) 
  { concat(r,s)->r_v*s_v }
\end{lstlisting}&\transto&
\begin{lstlisting}[frame=none]
let Sp = sum(<s,s_v> in S) 
 { jkS(s) -> {s->s_v} } in
sum(<r,r_v> in R)
 sum(<s,s_v> in Sp(jkR(r))) 
  { concat(r,s)->r_v*s_v }
\end{lstlisting}
\end{tabular}
\caption{Synthesizing hash join operator from nested loop join.}
\label{fig:opt:synth:hashjoin}
\end{subfigure}
\begin{subfigure}{\textwidth}
\setlength{\tabcolsep}{0em}
\begin{tabular}{l c l c l}
\begin{lstlisting}[frame=none]
sum(<r,r_v> in R) 
 sum(<s,s_v> in S) 
  if(jkR(r)==jkS(s)) then 
  { jkR(r)->f(r)*g(s) }
\end{lstlisting}&\transto&
\begin{lstlisting}[frame=none]
sum(<r,r_v> in R) { jkR(r)->
 f(r)*(sum(<s,s_v> in S) 
  if(jkR(r)==jkS(s)) then 
    g(s) ) }
\end{lstlisting}&\transto&
\begin{lstlisting}[frame=none]
let Sp = sum(<s,s_v> in S) 
 { jkS(s) -> g(s) } in
sum(<r,r_v> in R)
 { jkR(r)->f(r)*Sp(jkR(r)) }
\end{lstlisting}
\end{tabular}
\caption{Synthesizing groupjoin operator from nested loop join and group-by aggregation.}
\label{fig:opt:synth:groupjoin}
\end{subfigure}
\end{center}
\vspace{-0.3cm}
\caption{Synthesizing hash join and groupjoin operators by loop memoization.}
\label{fig:opt:synth}
\vspace{-0.5cm}
\end{figure*}

\subsubsection{Synthesizing Hash Join}
\label{sec:hashjoin}
In general, we can produce a nested dictionary by memoizing the inner loop. 
Then, instead of iterating the entire range of inner loop, only iterate over its relevant partition.
Consider again the case of equality join between two relations !R! and !S! (cf. Section~\ref{sec:ra}) based on the join keys !jkR(r)! and !jkS(s)!, represented as the first expression in Figure~\ref{fig:opt:synth:hashjoin}.
This expression is inefficient, due to iterating 
over every combination of the elements of the two input relations. 
The body of the conditional is however dependent on the outer loop and thus cannot be hoisted outside. 
Applying the first loop memoization rule results in the middle expression; in order to join the two relations, it is sufficient to iterate over relation !R! and find the corresponding partition from relation !S! by using !Sp(jkR(r))!.
In this expression, the dictionary !Sp! is no longer dependent on !r!. Thus, we can perform loop-invariant code motion, which results in the last expression.

In the specific case of implementing a dictionary using
a hash-table, this join algorithm corresponds to a hash join operator;
The first loop corresponds to the \textit{build phase} and the second
loop corresponds to the \textit{probe phase}~\cite{cowbook}. 
This expression is basically the same expression as the one for the hash join operator. 
This means that the first rewrite rule of loop memoization when combined with loop hoisting \textit{synthesizes hash join operator}.

\begin{myexample}{5 (Cont.)}
Let us consider again the join between !Gene! and !Variants!.
The previous expression used nested loops in order to handle join, which is inefficient. The following expression uses hash join instead:

% \todo{add grouping by sample}
\begin{lstlisting}
let Vp = sum(<v,v_v> in Variants) 
  { v.contig -> {<start=v.start,genotypes=v.genotypes> -> v_v} } in
sum(<g,g_v> in Genes) sum(<v,v_v> in Vp(g.contig)) sum(<m,m_v> in v.genotypes)
  if(g.start<=v.start&&g.end>=v.start) then 
    { <sample=m.sample,gene=m.gene,burden=m.call> -> g_v*v_v*m_v }
\end{lstlisting}
\end{myexample}

\subsubsection{Synthesizing Groupjoin}

There are special cases, where the loop memoization can perform even better.
This achieved by performing a portion of computation while partitioning the data.
This situation arises when computing an aggregation over the
result of join between two relations.
As an example, consider the 
summation of !f(r) * g(s)! on the elements !r! and !s! that successfully join, grouped by the join key, represented as the last expression of Figure~\ref{fig:opt:synth:groupjoin}.
In this case, the inner !sum! contains the terms !f(r)! and !jkR(r)! which are dependent on !r! and thus makes it impossible to be hoisted. The terms !jkR(r)! and !f(r)! inside the conditional body can be factored outside using the loop factorization rule, resulting in the middle expression. Afterwards, by applying the second rule of loop memoization, the dictionary bound to variable !Sp! is constructed. 
As this dictionary is no longer dependent on !r!, we can apply loop-invariant code motion, resulting in the last expression.

In fact, the result expression corresponds to the implementation of a groupjoin operator~\cite{groupjoin}.
In essence, the loop memoization and loop hoisting optimizations have the effect of \textit{pushing aggregations past joins}~\cite{groupbybeforejoin}.

\subsubsection{Memoization Beyond Databases}

In the case of using max-product semi-ring (cf. Figure~\ref{tab:semiring_scalar}) these optimization can \textit{synthesize variable elimination} for maximum a priority (MAP) inference in Bayesian networks~\cite{abo2016faq,aji2000generalized}. 
Furthermore, \textit{loop normalization}~\cite{shaikhha2019efficient} can also be thought of as a special case of this rule.

\begin{figure}
\begin{small}
\begin{tabular}{|l|c|c|c|c|c|}
\hline
\multirow{2}{*}{\diagbox[width=3.5cm]{Optimization}{Feature}}
& Purely & Dictionary & Dictionary & \multirow{2}{*}{Semi-ring} & \multirow{2}{*}{Compositional} \\
 &  functional &  lookup &  summation &  &  \\
\hline \hline
Vertical loop fusion & \cmark & \cmark & \cmark & & \\ \hline
Horizontal loop fusion & \cmark & & \cmark & & \\ \hline
Memoization & \cmark & \cmark & \cmark & & \\ \hline
Loop factorization & \cmark &  & \cmark & \cmark & \\ \hline
Code motion & \cmark & & \cmark & & \\ \hline
Data layouts &  &  &  & & \cmark \\ \hline
\end{tabular}
\end{small}
\vspace{-0.4cm}
\caption{\revision{The features of \lang leveraged by each transformation.}}
\vspace{-0.4cm}
\label{fig:opt_category}
\end{figure}

\revision{
\subsection{Putting all Together}
In this section, we investigate the design decisions behind \lang that enables the optimizations presented before.
The features of \lang can be categorized as follows:
\begin{itemize}[leftmargin=*]
\item \textbf{Purely functional:} \lang does not allow any mutation and global side effect.
\item \textbf{Dictionary lookup:} the dictionaries support a constant-time look up operation.
\item \textbf{Dictionary summation:} iteration over dictionaries allows for both scalar aggregates and dictionary construction in the style of monoid comprehensions~\cite{monoid-comprehension}.
\item \textbf{Semi-ring:} \lang has constructs with such structure including semi-ring dictionaries.
\item \textbf{Compositional:} semi-ring dictionaries accept semi-ring dictionaries as both keys and values.
\end{itemize}
Figure~\ref{fig:opt_category} shows the features that are leveraged by each loop optimization.
The compositionality feature is essential for expressing various data layout representations, which is presented next.
}

\section{Data Layout Representations}
\label{sec:datalayout}
In this section, we investigate various data representations supported by \lang, and show their correspondence to existing data formats used in query engines and linear algebra frameworks.

\subsection{Flat vs. Curried Representation}
\label{sec:curried}
Currying a function of type !T1$\times$T2 => T3! results in a function of type !T1 => (T2 => T3)!.
Similarly, dictionaries with a pair key can be curried into a nested dictionary. 
More specifically, a dictionary of type !{ <a: T1, b: T2> -> T3 }! can be
curried into a dictionary of type !{ T1 -> { T2 -> T3 } }!.

\subsubsection{Factorized Relations}
Relations can be curried following a specified order for their attributes. 
In the database community, this representation is referred to as \textit{factorized representation}~\cite{fdb} using a \textit{variable order}. 
In practice, a trie data structure can be used for factorized representation, and has proved useful for computational complexity improvements for joins, resulting into a class of join algorithms referred to as worst-case optimal joins~\cite{leapfrog}.

Consider a relation $R(a_1, ..., a_n)$ (with bag semantics), the representation of which is a dictionary of type !{ <$a_1$:$A_1$,...,$a_n$:$A_n$> -> int }! in \lang.
By using the variable order of $[a_1, ..., a_n]$, the factorized representation of this relation in \lang is a nested dictionary of type !{$A_1$->{$...$->{$A_n$->int}$...$}}!.

% \begin{myexample}{1 (Cont.)} Consider relation R, which was presented in Example 1. The factorized representation of this relation in \lang is as follows:

% !{{ $a_1$ -> {{ $b_1$ -> 1, $b_2$ -> 1 }}, $a_2$ -> {{ $b_3$ -> 1 }} }}!

% \end{myexample}

\subsubsection{Curried Matrices}
Matrices can also be curried as a dictionary with row as key, and another dictionary as value. The inner dictionary has column as key, and the element as value. 
Thus, a curried matrix with elements of type !S! is an \lang expression of type !{ int -> { int -> S } }!.

\begin{myexample}{8 (Cont.)}
Consider matrix $M$ from Example 8. The curried representation of this matrix in \lang is 
!{ 0 -> { 0 -> $c_0$, 3 -> $c_1$ }, 1 -> { 1 -> $c_2$ } }!.
\end{myexample}

\noindent The flat encoding of matrices presented in Section~\ref{sec:matrix} results in inefficient implementation for various matrix operations, as explained before. 
By using a curried representation instead, one can provide more efficient implementations
for matrix operations.
% (cf. Figure~\ref{fig:cla_ndql}).

As an example, Figure~\ref{fig:cla_ndql} shows the translation of curried matrix-matrix multiplication. 
% \smartpara{Matrix-Matrix Multiplication} 
Instead of iterating over every combination of elements of two matrices, 
the curried representation allows a direct lookup on the elements of a particular row of the second matrix.
Assuming that the dimension of the first matrix is $m\times n$, and the second matrix is of dimension $n\times k$, this improvement reduces the complexity from $O(mn^2k)$ down to $O(mnk)$.

% \smartpara{Matrix Trace} The trace of a matrix can be computed by iterating over the rows of a matrix and a constant look up in the inner dictionary, instead of iterating over all elements of the matrix. 
% Thus, the curried representation converts a quadratic computation to a linear one.

\begin{myexample}{9 (Cont.)}
The computation of the covariance by curried matrices can be optimized as:

\begin{lstlisting}
let At = sum(row in A) sum(x in row.val) { x.key -> {row.key -> x.val } } in
sum(row in At){ row.key -> sum(x in row.val) sum(y in A(x.key)){y.key->x.val*y.val} }
\end{lstlisting}

\noindent Furthermore, performing vertical loop fusion results in the following optimized program:

\begin{lstlisting}
sum(row in A) sum(x in row.val) { x.key -> sum(y in row.val){y.key->x.val*y.val} }
\end{lstlisting}
\end{myexample}

\begin{figure*}[t]
\begin{tabular}{|r c l|}
\hline
\translate{$M_1 \times M_2$} &=& !sum(row in! \translate{$M_1$} !) { row.key -> ! \\
&&\tab!sum(x in row.val) sum(y in! \translate{$M_2$}!(x.key)) { y.key -> x.val * y.val } }!\\ 
\hline
\end{tabular}
\vspace{-0.4cm}
% \caption{Translation of curried matrix operations to \lang.}
\caption{Translation of matrix-matrix multiplication for curried matrices to \lang.}
\vspace{-0.2cm}
\label{fig:cla_ndql}
\end{figure*}

\smartpara{Correspondence to Tensor Formats} The flat representation corresponds to the COO format of sparse tensors, whereas the curried one corresponds to CSF using hash tables~\cite{taco_format}.

\subsection{Sparse vs. Dense Layouts}

\subsubsection{Sparse Layout}
So far, all collections were encoded as dictionaries with hash table as their underlying implementations. 
This representation is appropriate for sparse structures, but it is suboptimal for dense ones;
typically linear algebra frameworks use arrays to store dense tensors.

\subsubsection{Dense Layout} 
\lang can leverage !dense_int! type in order to use array for implementing collections.
As explained in Section~\ref{sec:lang}, arrays are the special case of dictionaries with !dense_int! keys.
The runtime environment of \lang uses native array implementations for such dictionaries instead of hash-table data-structures.
Thus, by using !dense_int! as the index for tensors, \lang can have a more efficient layout for 
dense vectors and matrices.
In this way, a vector is encoded as an array of elements and a matrix as a nested array of elements.

Next, we see how dense layout and in particular arrays can be used to implement row and columnar layout for query engines.

\subsection{Row vs. Columnar Layouts}

\subsubsection{Row Layout}
In cases where input relations do not have duplicates, there is no need to keep the boolean multiplicity information in the corresponding dictionaries.
Instead, relations can be stored as dictionaries where the key is an index, 
and the value is the corresponding row.
This means that the relation $R(a_1, ..., a_n)$ can be represented as a dictionary of type !{ idx_type -> {$a_1$: $A_1$, ..., $a_n$: $A_n$} }!.
The key (of type !idx_type!) can be an arbitrary \textit{candidate key}, as it can uniquely specify a row.
By using !dense_int! type as the key of this dictionary, the keys are consecutive integer values starting from zero; thus, we encode relations using an array representation.
This means that the previously mentioned relation becomes an array of type ![|<$a_1$: $A_1$, ..., $a_n$: $A_n$>|]!.

\subsubsection{Columnar Layout}
Column store~\cite{idreos2012monetdb} databases represent relations using vertical fragmentation.
Instead of storing all fields of a record together as in row layout, 
columnar layout representation stores the values of each field in separate collections.

In \lang, columnar layout is encoded as a record where each field stores the array of its values.
This representation corresponds to the array of struct representation that is used in many high performance computing applications.
Generally, the columnar layout representation of the relation $R(a_1, ..., a_n)$ is encoded as a record of type !<$a_1$: [|$A_1$|], ..., $a_n$: [|$A_n$|]>! in \lang.

\begin{figure}
    \centering
    \begin{tabular}{c || c || c || c}
    Dictionary & Factorized & Row & Columnar \\ \hline
\begin{tabular}{|c|c|}
\hline
!<A=$a_1$, B=$b_1$>!&!1!\\\hline
!<A=$a_1$, B=$b_2$>!&!1!\\\hline
!<A=$a_2$, B=$b_3$>!&!1!\\\hline
\end{tabular}
    & 
\begin{tabular}{|c|c|}
\hline
!$a_1$!&
\begin{tabular}{|c|c|}
\hline
!$b_1$!&!1!\\\hline
!$b_2$!&!1!\\\hline
\end{tabular}
\\\hline
!$a_2$!&
\begin{tabular}{|c|c|}
\hline
!$b_3$!&!1!\\\hline
\end{tabular}
\\\hline
\end{tabular}
     &
\begin{tabular}{|c|c|}
\hline
!0!&!<A=$a_1$, B=$b_1$>!\\\hline
!1!&!<A=$a_1$, B=$b_2$>!\\\hline
!2!&!<A=$a_2$, B=$b_3$>!\\\hline
\end{tabular}
    &
!<A=!
\begin{tabular}{|c|c|}
\hline
!0!&!$a_1$!\\\hline
!1!&!$a_1$!\\\hline
!2!&!$a_2$!\\\hline
\end{tabular}
!, B=!
\begin{tabular}{|c|c|}
\hline
!0!&!$b_1$!\\\hline
!1!&!$b_2$!\\\hline
!2!&!$b_3$!\\\hline
\end{tabular}
!>!
     \\
    \end{tabular}
    \vspace{-0.3cm}
    \caption{Different data layouts for relations.}
    \label{fig:my_label}
    \vspace{-0.3cm}
\end{figure}

\section{Semantics}
\label{sec:sem}
\revision{\lang is mainly a standard functional programming language, but we study its specificity in this section. First, we show its typing/kinding properties. We then introduce a denotational semantics for \lang that sheds another light on the language and helps us prove the correctness of the transformation rules presented in Section~\ref{sec:opt}. The operational semantics and type safety proofs can be found in the supplementary materials.}
% In this section, we first prove some simple typing properties for \lang. Then we give it an denotational semantics. Finally, we prove the correctness of the transformation rules presented in Section~\ref{sec:opt}.
% The operational semantics and type safety proofs are in the supplementary materials.

\subsection{Typing}
% Next, we present the following typing properties that are essential for the type safety of \lang.
\revision{\lang satisfies the following essential typing properties.}

\begin{lemma}
Let $\mathbf{T}$ denote the set of all types of \lang.
$\otimes$ is a well-defined partial operation $\mathbf{T}\times\mathbf{T}\to\mathbf{T}$.
\end{lemma}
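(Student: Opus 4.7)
The plan is to prove two things: first, that the defining table gives an unambiguous specification (at most one rule applies to any input pair); second, that the induced recursion terminates, so that the table determines a well-defined partial function on $\mathbf{T}\times\mathbf{T}$.

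For unambiguity, I would classify the five defining clauses by the outermost constructor of each operand. The operand types fall into three disjoint classes: scalar, dictionary, and record. Rule 1 requires both sides to be the same scalar; rules 2 and 3 require the left operand to be a scalar and the right to be a dictionary or record respectively; rules 4 and 5 require the left operand to be a dictionary or a record, with $T_0$ ranging over arbitrary types. A straightforward nine-case check on the pair of outermost constructors shows that no input matches two different rules — in particular, rules 4 and 5 are distinguished by whether the left operand is a dictionary or a record, and they do not overlap with the scalar-left rules 2 and 3. Pairs such as $(S,S')$ with $S\neq S'$ simply fall outside the domain, which is consistent with $\otimes$ being only a \emph{partial} operation.

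For termination I would introduce a size measure $|\cdot|$ on types, defined inductively by $|S|=1$, $|\{T_1\to T_2\}|=1+|T_1|+|T_2|$, and $|\langle a_1\!:\!T_1,\dots,a_n\!:\!T_n\rangle|=1+\sum_i |T_i|$, and then check rule by rule that every recursive occurrence of $\otimes$ on the right-hand side has strictly smaller combined size than the input. For instance, for rule~2 the input has size $|S|+1+|T_1|+|T_2|$ while the recursive call $S\otimes T_2$ has size $|S|+|T_2|$; the other rules are analogous. Strong induction on $|T_1|+|T_2|$ then guarantees that the recursion bottoms out on rule 1 (or fails to apply, yielding undefined).

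Combining unambiguity with termination gives a well-founded recursive definition, so $\otimes$ is a well-defined partial function $\mathbf{T}\times\mathbf{T}\to\mathbf{T}$. The main subtlety, rather than a real obstacle, is handling the record cases 3 and 5 carefully: the distributed operation produces a record in which each field is a recursive $\otimes$, so one must verify that the size measure accounts for the sum over all fields — but since every field type is a strict subterm of the original record type, $|T_i|<|\langle\dots\rangle|$ and the measure still decreases strictly at each recursive call.
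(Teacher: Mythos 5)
Your proof is correct. Note that the paper states this lemma without providing any proof at all, so there is nothing to compare against; your two-part argument is the natural one and fills that gap. The nine-case analysis on outermost constructors correctly establishes that the five defining clauses of $\otimes$ are mutually exclusive (the scalar-left clauses are separated from the dictionary-left and record-left clauses, and the two scalar-left recursive clauses are separated by the shape of the right operand, with mismatched scalar pairs simply left undefined), and your size measure strictly decreases at every recursive occurrence, including the per-field calls in the record clauses, so the recursion is well founded and determines a unique partial function.
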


\begin{proposition}
\label{prop:first}
	Every type/term defined using the inference rules of Figure~\ref{fig:typesystem} has a unique kind/type.
\end{proposition}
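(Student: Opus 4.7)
The plan is to proceed by structural induction on the syntactic form of the expression, examining each typing rule in Figure~\ref{fig:typesystem} in turn and showing that its premises, together with the inductive hypothesis, force a unique conclusion type. Because the grammar of \lang is syntax-directed (each term former has exactly one typing rule), the only real worry is that a single rule might yield more than one type for the same term; it suffices to rule this out case by case.

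The base cases are immediate: for a variable !x!, uniqueness follows from $\Gamma$ being a function; for numeric and boolean literals the type is fixed by the value; and for the empty dictionary !{ }$_{\texttt{T1},\texttt{T2}}$! the annotation determines the type outright. Similarly !promote$_{\texttt{S1},\texttt{S2}}$(e)! is annotated with its output type, so there is nothing to choose.

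For the inductive step I would dispatch the straightforward compound cases first. Record construction, field access, conditional, !let!, lookup !e1(e2)!, summation !sum(x in e1) e2!, and dictionary construction each read off their conclusion type from the unique types of subterms supplied by the IH (and, where relevant, from agreement of those subterm types, which is forced by the rule itself). Addition !e1 + e2! is the same: both operands have a unique type by IH, the rule forces them to agree, and the conclusion is that common type.

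The only delicate case, and the main obstacle, is multiplication !e1 * e2!. Here the rule assigns the type !T1$\otimes$T2!, where !T1! and !T2! are the unique types obtained from the IH. Uniqueness of the result then reduces to the fact that $\otimes$ is a \emph{function} when defined, which is precisely the content of the preceding well-definedness lemma for $\otimes$. Without that lemma, the table-driven, structurally recursive definition of $\otimes$ (splitting on scalar versus dictionary versus record operands, and recursing into the value or field positions) could in principle overlap and produce two different types from the same $(T_1,T_2)$ pair; the lemma guarantees exactly one case applies and the recursion descends deterministically. Once this is in hand, multiplication fits into the induction and the proof is complete.
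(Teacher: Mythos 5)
Your proof is correct and takes essentially the same route as the paper's: structural induction with case analysis on the syntax-directed typing rules, handling the empty dictionary and \texttt{promote} via their annotations and delegating the multiplication case to the well-definedness of $\otimes$. The only detail the paper's sketch makes explicit that you fold silently into the ``straightforward'' cases is that for \texttt{sum} and \texttt{let} one must apply the induction hypothesis to \texttt{e1} first, so that the bound variable's type --- and hence the context in which the body is typed --- is uniquely determined before invoking the hypothesis on \texttt{e2}.
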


\begin{proof}[Proof Sketch]
By induction on the structure of types/terms and case analysis on each kinding/typing rule. It is straightforward for most rules using the induction hypothesis. For \revision{the typing rules of} dictionaries there are two cases on whether the dictionary is empty or not, and the type annotation ensures the property for the empty dictionary. 
\revision{As for}
%For typing 
!sum! and !let! which have a bound variable, we use the induction hypothesis on !e1! first.
\end{proof}

\subsection{Denotational Semantics}

\revision{The kind system acts as a type refinement machinery. Roughly, a type is to be considered by default of kind }!Type!\revision{. Otherwise, the kind indicates that the type carries more structure, more precisely that of a semi-module.}
% We start with the denotational semantics of types. 
% A type !T! is interpreted as a set |T|. 
% The types of kind \code{SM(S)} are interpreted with an additional monoid structure, as follows.
% The kind \dsnoctx{\code{T :: SM(S)}} is interpreted as the category of S-semi-modules and all functions between them.
\noindent
\revision{More formally, the interpretation of types is given by induction on the kinding rules, and is shown in Figure~\ref{fig:denot_sem}. A type of kind} !Type! \revision{is interpreted as a set, while a type of kind} !SM(S)! \revision{is interpreted as a S-semi-module. A scalar type} !S! \revision{represents a semi-ring and is therefore canonically a S-semi-module. A product of S-semi-modules is a semi-module, and so is the tensor product $\otimes_S$ of two S-semi-modules. One way to describe $\otimes_S$ is as the bifunctor on the category of S-semi-modules and S-module homomorphisms that classifies S-bilinear maps. It is an analogue for semi-modules to the tensor product of vector spaces. For more details on tensor products see e.g. \cite{conrad2018tensor}. The interpretation for a dictionary type is analogous to a free vector space on $|T1|$, in which every element is a finite formal sum of elements of \dsnoctx{\code{T2}}. One can show by induction that all our types of kind SM(S) are free S-semi-modules. Hence \dsnoctx{\code{T2}} is a free S-semi-module and this implies that the interpretation for a dictionary type can itself be seen as a free S-semi-module.}

% \begin{center}
% \begin{tabular}{l c l}
% \dsnoctx{\code{S}} & = & ($S$, $+$, $0$) \\
% \dsnoctx{\code{<a1:T1, ..., an:Tn>}} & = &\dsnoctx{\code{T1}} $\times$ ... $\times$ \dsnoctx{\code{Tn}} \\
% \dsnoctx{\code{\{T1 -> T2\}}} & = & $\sumplus\limits_{a \in |T1|}$\dsnoctx{\code{T2}}\\
% \dsnoctx{\code{T1} $\otimes_S$ \code{T2}} & = & \dsnoctx{\code{T1}} $\otimes_S$ \dsnoctx{\code{T2}} \\
% \end{tabular}
% \end{center}

% \revision{\input{figures/types_denot_semantics}}

\noindent
% Scalar types are interpreted as the monoid of the underlying semi-ring ($S$, $+$, $\times$, $0$, $1$). 
% The denotational semantics of a record type is the product of the underlying monoids. 
% The interpretation for a dictionary type is an analogue to a free vector space on $|T1|$, in which an element is a finite formal sum of elements of \dsnoctx{\code{T2}}. This vector space is a monoid with a component-wise addition.
% Finally, the type \dsnoctx{\code{T1} $\otimes_S$ \code{T2}} is interpreted as \dsnoctx{\code{T1}} $\otimes_S$ \dsnoctx{\code{T2}}, where $\otimes_S$ is the bifunctor on the category of S-semi-modules and S-module homomorphisms that classifies bilinear maps. 

For the semantics of environments $\Gamma=$!x1:T1, ..., xn:Tn!, we use:

\begin{center}
\begin{tabular}{l c l}
\dsnoctx{$\Gamma$} & = & \dsnoctx{\code{T1}} $\times$ ... $\times$ \dsnoctx{\code{Tn}} \\
\end{tabular}
\end{center}

\noindent
% The denotational semantics of a term \dsnoctx{$\Gamma\vdash~$\code{e: T}} is a function from \dsnoctx{$\Gamma$} to \dsnoctx{\code{T}}. 
\revision{A term \dsnoctx{$\Gamma\vdash~$\code{e: T}} is interpreted as a function from \dsnoctx{$\Gamma$} to \dsnoctx{\code{T}}.}
When it is clear from the context, we use \dsnoctx{\code{e}} instead of \dsnoctx{$\Gamma\vdash~$\code{e: T}}. 
We use the notation !v!$\mytimes$!k! to mean the vector whose only non-zero component !v! is at position !k! in $\sumplus\limits_{a \in |T1|}$\dsnoctx{\code{T2}}.
% To range over environments we use $\gamma$.
\revision{We denote by $\gamma$ any assignment of the variables of a context $\Gamma$.}
The denotational semantics for terms is shown in Figure~\ref{fig:denot_sem}.
$Prom_{\code{S1}\rightarrow\code{S2}}$ maps the elements of the scalar semi-ring !S1! to !S2!.
Every scalar type !S! \revision{is a semi-ring and as such admits distinguished elements} !0! and !1!. 
The action of !S! on a type !T!::\code{SM(S)} \revision{thus} restricts to an action !*! of the booleans on !T!. This gives the presented description to the semantics of conditionals which we use in the next section. 
% It is given by \dsbegin!if e1 then e2 else e3!\dsend{} \myeq \dsbegin !e1! \dsend{} * \dsbegin !e2! \dsend{} + \dsbegin !not(e1)! \dsend{} *  \dsbegin !e3! \dsend{}.
\revision{For the semantics for dictionaries, we use a formal infinite sum, but similarly to standard polynomials this sum actually has a finite support and thus behaves like a finite sum in all contexts. For the semantics of }!sum!\revision{, we apply the semantics of }!e2! \revision{ component-wise to the formal sum that is the semantics of }!e1!\revision{. The resulting real sum is thus over a finite support, and is therefore well-defined.}

\begin{figure*}[t]
\setlength{\tabcolsep}{0.3em}
\begin{tabular}{|c|}
\hline
\begin{tabular}{l c l c l c l}
\dsnoctx{\code{S}} & \myeq & ($S$, $+$, $0$) & \hspace{0.2cm}  &
\dsnoctx{\code{<a1:T1, ..., an:Tn>}} & \myeq  &\dsnoctx{\code{T1}} $\times$ ... $\times$ \dsnoctx{\code{Tn}} \hspace{0.7cm} \\
\dsnoctx{\code{T1} $\otimes_S$ \code{T2}} & \myeq  & \dsnoctx{\code{T1}} $\otimes_S$ \dsnoctx{\code{T2}} & &
\dsnoctx{\code{\{T1 -> T2\}}} & \myeq  & $\sumplus\limits_{a \in |T1|}$\dsnoctx{\code{T2}}\\
\end{tabular} \\ \hline
\begin{tabular}{l c}
% Types: & \\\hline
% Terms: & \\\hline
\begin{tabular}{l c l}
\dsbegin!x!\dsend{} & \myeq & \dsctx{}!(x)! \\
\dsbegin!c!\dsend{} & \myeq & !c! \\
\dsbegin!true!\dsend{} & \myeq & 1 \\
\dsbegin!false!\dsend{} & \myeq & 0 \\
\dsbegin!not(e)!\dsend{} & \myeq & 1 - \dsbegin!e!\dsend{} \\
\dsbegin!e.ai!\dsend{} & \myeq & $\pi_i$(\dsbegin!e!\dsend{}) \\
\dsbegin!op(e)!\dsend{} & \myeq & !op!(\dsbegin!e!\dsend{}) \\
\dsbegin!e1 + e2!\dsend{} & \myeq & \densem{\code{e1}} $+$ \densem{\code{e2}} \\
\dsbegin!e1 * e2!\dsend{} & \myeq & \densem{\code{e1}} $*$ \densem{\code{e2}} \\
\end{tabular}&
\begin{tabular}{l c l}
\dsbegin!<a1=e1,...,an=en>!\dsend{} & \myeq & <\dsbegin!e1!\dsend{}, ..., \dsbegin!en!\dsend{}> \\
\dsbegin!let x = e1 in e2!\dsend{} & \myeq & \dsbegin!e2!\dsgend{\dsctx[\densem{\code{e1}}/\code{x}]} \\
\dsbegin!promote!$_{\code{S1},\code{S2}}$!(e)!\dsend{} & \myeq & $Prom_{\code{S1}\rightarrow\code{S2}}$(\dsbegin!e!\dsend{}) \\
\dsbegin!if e1 then e2 else e3!\dsend{} & \myeq & 
\begin{tabular}{l l} 
% \densem{\code{e2}} & (\densem{\code{e1}} \myeq{} tt) \\
% \densem{\code{e3}} & (\densem{\code{e1}} \myeq{} ff)
\dsbegin !e1! \dsend{} $*$ \dsbegin !e2! \dsend{} $+$ \\
(1 - \dsbegin !e1! \dsend{}) $*$  \dsbegin !e3! \dsend{}
\end{tabular}
 \\
\dsbegin!e1(e2)!\dsend{} & \myeq & $\pi_{\text{\densem{\code{e2}}}}$(\dsbegin!e1!\dsend{}) \\
\dsbegin!{}!$_{\code{T1,T2}}$\dsend{} & \myeq & $0_{\code{\{T1->T2\}}}$ \\
\dsbegin!{k1->v1,...,kn->vn}!\dsend{} & \myeq & $\sum\limits_{i\in[1..n]}$\densem{\code{vi}}$\mytimes$\densem{\code{ki}} \\
\end{tabular}\\
\end{tabular}\\
\dsbegin!sum(x in e1) e2!\dsend{} \myeq{} $\sum\limits_{k\in X}$\dsgen{\code{e2}}{\dsctx[<k,$a_k$>/\code{x}]} \hspace{0.5cm} (\densem{\code{e1}} \myeq{} $\sum\limits_{k\in X}a_k\mytimes k$) \\ \hline
\end{tabular}
\vspace{-0.2cm}
\caption{Denotational Semantics for \revision{types and terms of} \lang.}
% \vspace{-0.2cm}
\label{fig:denot_sem}
\end{figure*}

\revision{\begin{proposition}[Substitution lemma]
For all $\Gamma \vdash~ \code{e1: T1}$ and $\Gamma,\code{x: T1}\vdash~ \code{e2: T2}$, the following holds: 
\dsnoctx{\code{e2}}\code{[}\dsnoctx{\code{e1}}/\code{x]} = \dsnoctx{\code{e2[e1/x]}}.
\end{proposition}}

\revision{\begin{theorem}[Soundness]
For all closed terms $\vdash \code{e: T}$ and $\vdash \code{v: T}$ where \code{v} is a value, if \code{e} reduces to \code{v} in the operational semantics, then \dsnoctx{\code{e}} = \dsnoctx{\code{v}}.
\end{theorem}}

\revision{\begin{proof}[Proof sketch]
For both Proposition 7.3 and Theorem 7.4, the proof is by induction on the structure of terms and case analysis on the structure of terms in the first case, and on the last rule used of the operational semantics in the other case. The only non-standard cases are the ones involving a dictionary or \code{sum}. More details can be found in the supplementary materials.
\end{proof}}

\subsection{Correctness of Optimizations}
% In this section, we prove correct the optimizations of Figure~\ref{fig:opt_rules}. The formal $\sum$ notation in the semantics automatically provides an efficient and sound calculus that is reminiscent of the algebra of polynomials. We make use of this in the following proofs.
\revision{The denotational semantics allows us to easily prove correctness of the optimizations of Figure~\ref{fig:opt_rules}. In particular, the formal $\sum$ notation in the semantics mechanically  provides an efficient and sound calculus that is reminiscent of the algebra of polynomials. We make use of this calculus in the following proofs.}

\begin{proposition}
The vertical loop fusion rules of Figure~\ref{fig:opt_rules} are sound.
\end{proposition}

\begin{proof}[Proof] We prove the first rule. The second rule is proved similarly.
\\
\dsbegin!let y = sum(x in e1) {f1(x.key)->x.val} in sum(x in y){f2(x.key)->x.val}!\dsend{} \tab = \\
\dsbegin!sum(x in y){f2(x.key)->x.val}!\dsgend{$\gamma'$} \tab ($\gamma'$ = $\gamma$[\dsbegin!sum(x in e1) {f1(x.key)->x.val}!\dsend{}/ y]) \tab = \\
\dsbegin!sum(x in y){f2(x.key)->x.val}!\dsgend{$\gamma'$} \tab ($\gamma'$ = $\gamma$[$\sum\limits_{k \in X}a_k\mytimes$\densem{\code{f1}}($k$)/ y], \densem{\code{e1}}=$\sum\limits_{k \in X}a_k\mytimes k$) \tab = \\
$\sum\limits_{k \in X}a_k\mytimes$\densem{\code{f2}}(\densem{\code{f1}}($k$)) \tab (\densem{\code{e1}}=$\sum\limits_{k \in X}a_k\mytimes k$) \tab = $\sum\limits_{k \in X}a_k\mytimes$\densem{\code{f2}$\circ$\code{f1}}($k$)) \tab (\densem{\code{e1}}=$\sum\limits_{k \in X}a_k\mytimes k$) \tab = \\
\dsbegin!sum(x in e1){f2(f1(x.key))->x.val}!\dsend{}
\end{proof}

\begin{proposition}
\label{theorem:loopfact}
The loop factorization rules of Figure~\ref{fig:opt_rules} are sound.
\end{proposition}

\begin{proof}[Proof]
We prove the first rule, and the second rule is proved similarly.
\\
\dsbegin!sum(x in e1) e2 * f(x)!\dsend{} \tab = \tab
$\sum\limits_{k \in X}$\dsbegin!e2 * f(x)!\dsgend{$\gamma'$} \tab ($\gamma'$ = $\gamma$[<$k,a_k$>/ x], \densem{\code{e1}}=$\sum\limits_{k \in X}a_k\mytimes k$) \tab = \\
$\sum\limits_{k \in X}$\dsbegin!e2!\dsend{} $*$ \dsbegin!f!\dsend{}<$k,a_k$> \tab (\densem{\code{e1}}=$\sum\limits_{k \in X}a_k\mytimes k$) \tab = (bilinearity) \\
\dsbegin!e2!\dsend{} $*$ $\sum\limits_{k \in X}$\dsbegin!f!\dsend{}<$k,a_k$> \tab (\densem{\code{e1}}=$\sum\limits_{k \in X}a_k\mytimes k$) \tab =\\
\dsbegin!e2!\dsend{} $*$ $\sum\limits_{k \in X}$\dsbegin!f(x)!\dsgend{$\gamma'$} \tab ($\gamma'$ = $\gamma$[<$k,a_k$>/ x], \densem{\code{e1}}=$\sum\limits_{k \in X}a_k\mytimes k$) \tab =\\
\dsbegin!e2!\dsend{} $*$ \dsbegin!sum(x in e1) f(x)!\dsend{} \tab = \tab \dsbegin!e2 * sum(x in e1) f(x)!\dsend{}\\
\end{proof}

% The correctness of horiztonal fusion, loop-invariant code motion, and 
% loop memoization \revision{based on both operational and denotational semantics} can be found in the supplementary materials.
\revision{The correctness proofs of the remaining optimizations, horizontal fusion, loop-invariant code motion, and loop memoization, based on both operational and denotational arguments can be found in the supplementary materials.}

\section{Implementation}
\label{sec:impl}
\lang is implemented as an external domain-specific language. 
The entire compiler tool-chain is written in Scala.
The order of rewrite rules are applied as follows until a fix-point is reached: 
1) loop fusion, 2) loop-invariant code motion, 3) loop factorization, and 4) loop memoization. 
After each optimization, generic optimization such as DCE, CSE, and partial evaluation are also applied.
Note that we currently expect the loop order to be specified correctly by the user.
Finally, the optimized program is translated into C++.

\subsection{C++ Code Generation}
The code generation for \lang is mostly straightforward, thanks to the first-order nature of
most of its constructs.
Thus, we do not face the technical challenges of compiling polymorphic higher-order functional languages (e.g., all objects are stack-allocated, hence there is no need for GC).
The key challenging construct is !sum! which is translated into \code{for}-loops.
Furthermore, for the case of summations that produce dictionaries, 
the generated loop performs destructive updates to the collection, to improve the performance~\cite{henriksen2017futhark}.
% This is thanks to the tail-recursive nature of the !sum! construct.

\subsection{C++ Runtime}
The C++ runtime employs an efficient hash table implementation based on closed hashing for dictionaries.\footnote{https://github.com/greg7mdp/parallel-hashmap}
For dictionaries with !dense_int! keys, the runtime either uses \code{std::array} or \code{std::vector} depending on whether the size is statically known during compilation time.
Finally, for implementing records, \lang uses \code{std::tuple}.

\subsection{Semi-Ring Extensions}
\label{sec:semiring_ext}

\smartpara{Scalar Semi-Rings}
Throughout the paper, we only focused on three important scalar semi-rings, and the corresponding record and dictionary semi-rings. 
FAQ~\cite{abo2016faq} introduced several semi-ring structures with applications on graphical models, coding theory, and logic.
Also, semi-rings were used for language recognition, reachability, and shortest path problems~\cite{dolan2013fun,pilatus19ecoop}.
\lang can support such applications by including additional scalar semi-rings, a subset of which are presented in Table~\ref{tab:semiring_scalar}. 
The !promote! construct can be used to annotate numeric values with the type of the appropriate types in such cases.

\smartpara{Non-scalar Semi-Rings}
The support for semi-ring extensions in \lang is beyond scalar types.
As an example, \lang supports the (semi-)ring of the covariance matrix~\cite{Nikolic:2018:IVM:3183713.3183758}.
For each $n\in\mathbb{Z}$, the domain $\mathbb{D}$ of this semi-ring is a triple $<\mathbb{R}, \mathbb{R}^n, \mathbb{R}^{n\times n}>$. The additive and multiplicative identities are defined as $0^{\mathbb{D}}\myeqq<0, 0^n, 0^{n\times n}>$ and $1^{\mathbb{D}}\myeqq <1, 0^n, 0^{n\times n}>$. For each $a\myeqq<s_a, v_a, m_a>$ and $b\myeqq<s_b, v_b, m_b>$, the addition and multiplication are defined as:

\begin{center}
\begin{tabular}{r c l}
$a +^{\mathbb{D}} b$ &\myeq& $<s_a + s_b, v_a + v_b, m_a + m_b>$\\
$a \times^{\mathbb{D}} b$ &\myeq& $<s_a * s_b, s_a * v_b + v_a * s_b, s_b * m_a + s_a * m_b + v_a * v_b + v_b * v_a >$
\end{tabular}
\end{center}

\noindent We use this semi-ring to compute covariance matrix as aggregates over relations (cf. Section~\ref{sec:exp:indbml}).

\subsection{\revision{Language Extensions}}
\label{sec:langext}

\begin{figure}
\begin{tabular}{|l c l|}
\hline
\langring & &\\ \hline
% !let y=sum(x in e1) {f1(x.key)->x.val}! & \multirow{2}{*}{\transto} & !sum(x in e1)! \\
% !sum(x in y) f3(x)! & & !  f3(<key = f1(x.key), val = x.val>)! \\ \hline \hline
!-(-e)! & \transto & !e! \\ \hline
!e + (-e)! & \transto & !0!\\ 
\hline \hline
\langclosure & &\\ \hline
!1 + e * closure(e)! & \transto & !closure(e)! \\ \hline
!1 + closure(e) * e! & \transto & !closure(e)! \\ 
\hline \hline
\langproduct & &\\ \hline
!(prod(x in e1) f1(x)) * (prod(x in e1) f2(x))! & \transto & !prod(x in e1) f1(x) * f2(x)! \\ \hline \hline
\langrec & &\\ \hline
!rec(x => let y=e1 in f(x,y))(e2)! & \multirow{1}{*}{\transto} & !let y=e1 in rec(x => f(x,y))(e2)! \\ \hline
\end{tabular}
\vspace{-0.4cm}
\caption{\revision{Additional transformation rules for language extensions of \lang.}}
\vspace{-0.4cm}
\label{fig:opt_rules_ext}
\end{figure}

\revision{
In this section, we define possible language extensions over \lang. 
Apart from an additional expressive power, each extension enables further optimizations, which are demonstrated in Figure~\ref{fig:opt_rules_ext}.
We use \langext{\code{X}} to denote \lang extended with \code{X}.
}

\revision{\smartpara{\langring: \lang + Ring Dictionaries}} We have consistently talked about semi-ring structures, and how semi-ring dictionaries can be formed using value elements with such structures.
There is another important structure, referred to as \textit{ring}, \revision{for the cases that the addition operator admits an inverse. The transformation rules enabled by the ring structure are shown in Figure~\ref{fig:opt_rules_ext}.} 
As it can be observed in Table~\ref{tab:semiring_scalar}, real and integer sum-products form ring structures. Similarly to semi-ring dictionaries, one can obtain ring dictionaries by using values that form a ring. In this case, the additive inverse of a particular ring dictionary is a ring dictionary with the same keys but with inverse value elements.

\smartpara{\langclosure: \lang + Closed Semi-Rings} Orthogonally, one can extend the semi-ring structure with a closure operator~\cite{dolan2013fun}.
In this way, transitive closure algorithms can also be expressed by generalizing semi-rings to closed semi-rings~\cite{lehmann1977algebraic}.
In many cases, the semi-ring structures involve an additional idempotence axiom (\code{a + a = a}) resulting in dioids.
The closure operator for dioids is called a Kleene star and the extended structure is referred to as Kleena algebra, which is useful for expressing path problems in graphs among other use-cases~\cite{gondran2008graphs}.
This structure can be reflected in our kind-system; the product of dioids/Kleene algebras forms a dioid/Kleene algebra. In future work, we would like to investigate how to express the standard algorithm that computes \code{closure}($A$) for a matrix $A$ over a Kleene algebra in terms of a program involving semi-ring dictionaries over a Kleene algebra.

\smartpara{\langproduct: \lang + Product} 
We have only considered the summation over semi-ring dictionaries.
One can use \code{prod} instead of \code{sum}.
This would allow to elegantly express universal quantification over the possible assignments of that variable (like in FAQ~\cite{abo2016faq} to express quantified Boolean queries). As an example, checking if the predicate \code{p} is satisfied by all elements of relation \code{R} is phrased as: \code{prod(r <- R) p(r)}. 
The commutative monoid structure of multiplication allows for optimizations with a similar impact as horizontal loop fusion (cf. Figure~\ref{fig:opt_rules_ext}).

\smartpara{\langrec: \lang + Recursion} Apart from supporting the closure and product constructs, it is possible to support more general forms of recursion. 
As shown for matrix query languages~\cite{matlang_sigrec}, an additional for-loop-style construct can express summation, product, transitive closure, as well as matrix inversion. 
This general form of recursion also allows for iterations, similarly to the \code{while} construct in IFAQ~\cite{ifaq-cgo} that enables iterative computations required for optimization producures such as batch gradient decent (BGD).
The additional expressive power of this construct comes with limited optimization opportunies; loop fusion and factorization are no longer applicable to them, however, code motion can still be leveraged (cf. Figure~\ref{fig:opt_rules_ext}).

\section{Experimental Results}
\label{sec:exp}

\subsection{Experimental Setup} 
We run our experiments on a iMac equipped with
an Intel Core i5 CPU running at 2.7GHz, 32GB of DDR3 RAM with OS X 10.13.6. We use CLang 1000.10.44.4 for compiling the generated C++ code using the O3 flag. \revision{Our competitor systems} use Scala 2.12.2, Spark 3.0.1, Python 3.7.4 \revision{(Python 2.7.12 for MorpheusPy)}, NumPy 1.16.2, and SciPy 1.2.1.
\revision{All experiments are run on one CPU core.}\footnote{Prior work on parallelism for database query engines~\cite{Volcano}, nested data processing (flattening and shredding~\cite{trance_vldb}), and sparse linear algebra~\cite{Kjolstad:2017:TAC:3152284.3133901} can be transferred to \lang, which we leave as future work.}
\revision{We} measure the average run time execution \revision{of five runs excluding} the loading time.

\subsection{Database Workloads} 
\label{sec:exp:tpch}

\begin{wrapfigure}{r}{0.5\columnwidth}
\vspace{-0.4cm}
\includegraphics[width=0.48\columnwidth]{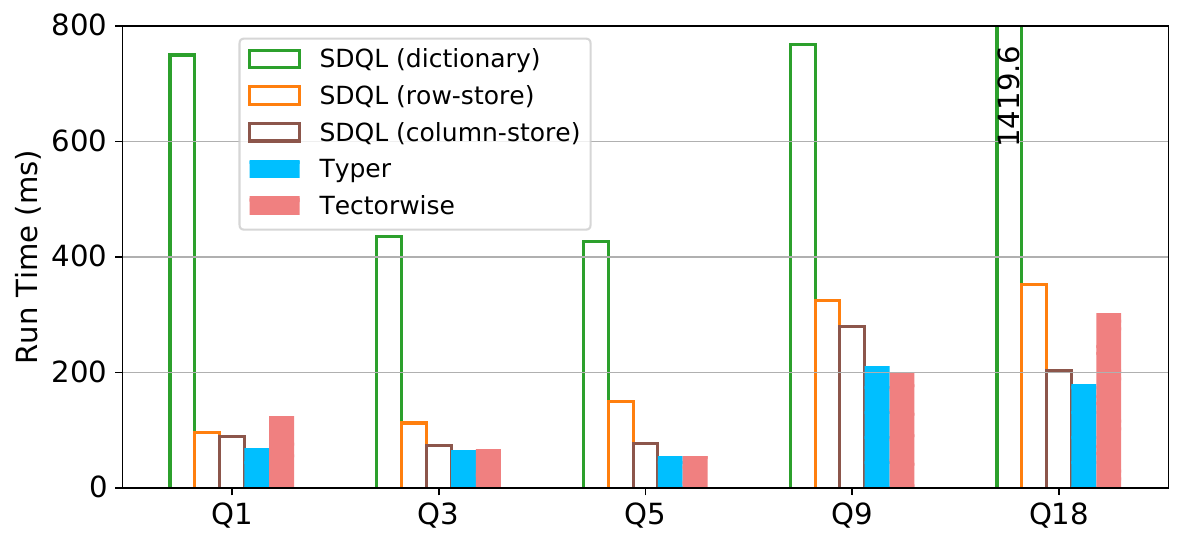}
\vspace{-0.4cm}
\caption{Run time results for TPCH queries comparing different data layouts in \lang, Typer, and Tectorwise.}
\label{fig:exp:tpch}
\vspace{-0.4cm}
\end{wrapfigure}

In this section, we investigate the performance of \lang for online analytical processing (OLAP) workloads used in the databases.
For this purpose, 
\revision{we} compare the performance of generated optimized code for the dictionary layout, row layout, and columnar layout of \lang with the open source implementation\footnote{https://github.com/TimoKersten/db-engine-paradigms}~\cite{kersten2018everything} of two state-of-the-art analytical query processing engines: 1) Typer for HyPer~\cite{Neumann11}, and 2) Tectorwise for Vectorwise~\cite{monetdb-handwritten}.

For these experiments, we use TPCH, the main benchmark for such workloads in databases. 
Instead of running all 22 TPCH queries, we only use a representative subset of them for the following reasons.
First, previous research~\cite{Boncz2014,kersten2018everything} identified that this subset has the ``choke points'' of all TPCH queries. 
Second, the open source implementations of Typer and Tectorwise only support this subset.
We further restricted this subset to the queries that construct intermediate dictionaries; we excluded Q6 as it does not have any joins or group-by aggregates.

\revision{Figure~\ref{fig:exp:tpch} shows that the} row layout for input relations \revision{leads to a} $4.2\times$ speedup \revision{over} the standard dictionary layout.
\revision{The columnar layout further improves the performance by $1.5\times$.
This is due to} improved cache locality\revision{, as unused columns are not read into cache in case of the columnar layout.}
The columnar layout 
\revision{
leads to performance on par with Tectorwise, but \lang remains  
 about $20\%$ slower than Typer.
 The performance can be further improved} by better memory management and string processing techniques, as used in Typer and Tectorwise.

\begin{figure}[t]
\includegraphics[width=0.48\columnwidth]{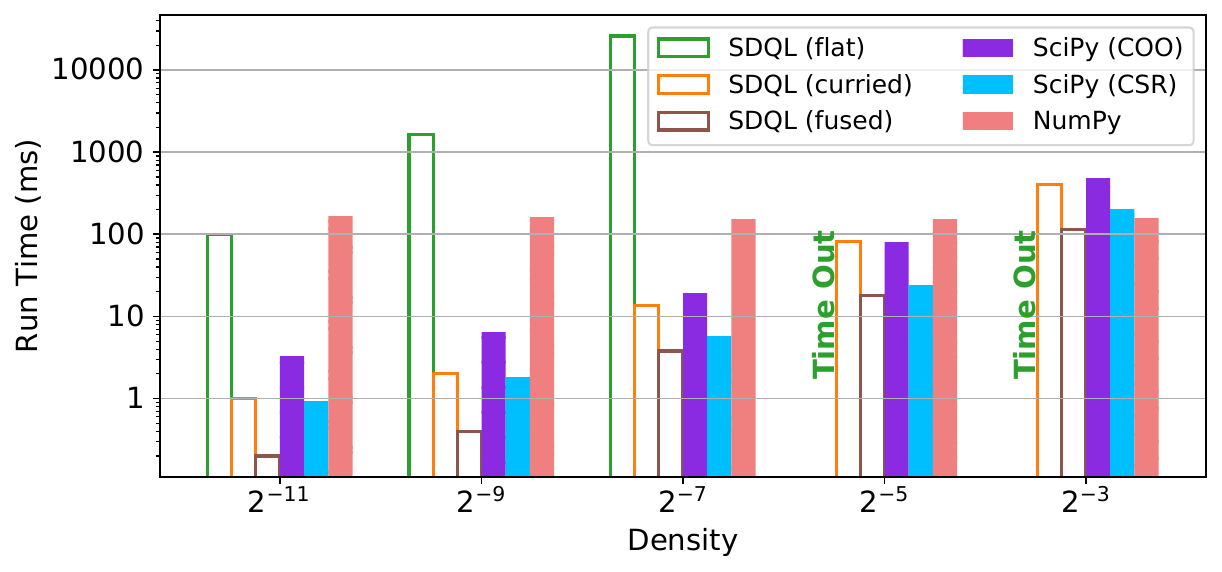}~\includegraphics[width=0.48\columnwidth]{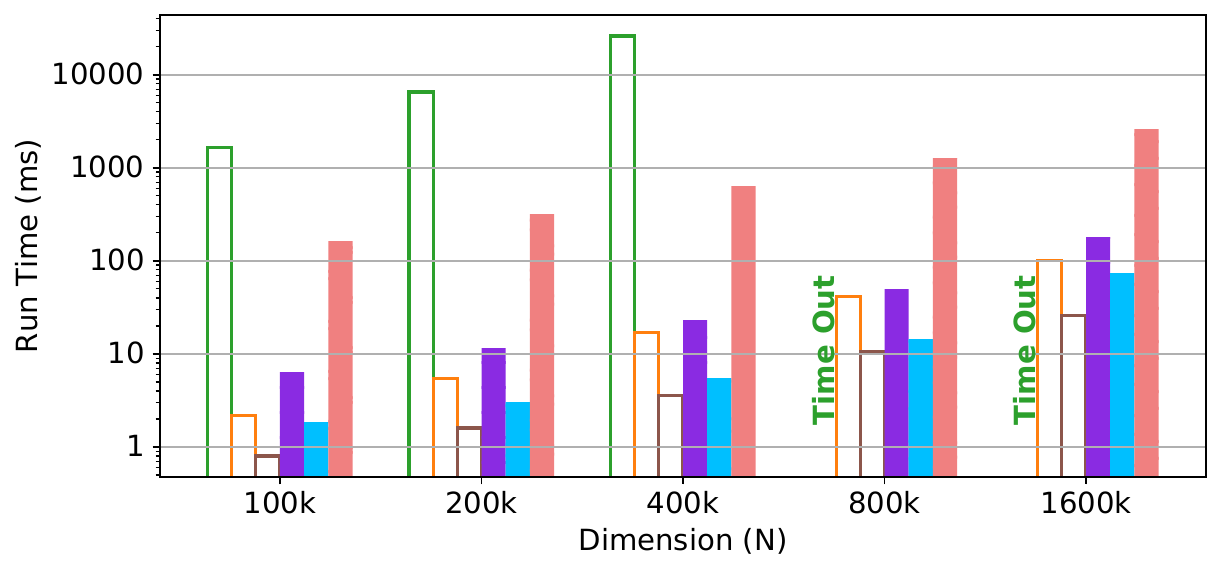}
\vspace{-0.4cm}
\caption{Run time results for computing the covariance matrix comparing different optimizations and representations in \lang, SciPy, and NumPy. The dimension for the input matrix of the left figure is $100000\times 100$, and the dimension of the input matrix of the right figure is $N \times 100$ with the density of $2^{-7}$.}
\label{fig:exp:la}
\vspace{-0.5cm}
\end{figure}

\subsection{Linear Algebra Workloads}
In this section, we investigate the performance of \lang for linear algebra workloads. 
We consider both matrix and higher-order tensor workloads. 
For the matrix processing workload, we use NumPy and SciPy as competitors, which use dense and sparse representations for matrices. This workload involves matrix transpose, which is not supported by systems such as \taco~\cite{Kjolstad:2017:TAC:3152284.3133901}.
For the tensor processing workloads, 
we use \taco~\cite{Kjolstad:2017:TAC:3152284.3133901} as the only competitor. SciPy does not support higher-order tensors, and it was shown before~\cite{Kjolstad:2017:TAC:3152284.3133901,taco_format} that on these workloads, \taco is faster than systems such as SPLATT~\cite{smith2015splatt}, Tensor Toolbox~\cite{bader2008efficient}, and TensorFlow~\cite{abadi2016tensorflow}. For a fair comparison, we have included the time for assembling the output tensor in \taco.

\smartpara{Sparse Matrix Processing} First, we consider the task of computing the covariance matrix $X^T X$ (cf. Section~\ref{sec:la}), where $X$ is a synthetically generated input data matrix of varying dimensions and density.
We consider the following different versions of the generated code from \lang: 1) unoptimized, which is the uncurried representation of matrices, 2) curried, which uses the curried representation, and 3) fused, which additionally fuses the transpose and multiplication operators. 

As Figure~\ref{fig:exp:la} shows, using curried representation can provide asymptotic improvements over the na\"ive representation, thanks to the improved matrix multiplication operator \revision{(cf. Section~\ref{sec:curried})}. Furthermore, performing fusion can provide $2\times$ speedup on average. 
The usage of dense representation (by NumPy) can provide better implementations as the matrix becomes more dense; however, for smaller densities, sparse representations (by SciPy and \lang) can be up to two orders of magnitude faster. 
Finally, the most optimized version of the generated code by \lang is in average $3\times$ and $2\times$ faster than the COO and CSR represenations of SciPy, respectively, thanks to fusion and the efficient low-level code generated by \lang.

\smartpara{Sparse Tensor Processing} Next, we consider three higher-order tensor workloads on NELL-2, a real world dataset coming from the Never Ending Language-Learning project~\cite{carlson2010toward}.
Table~\ref{tbl:tensorexp} shows the performance comparison for these workloads. We observe that especially for a medium range of sparsity \system is faster than \taco (from $1.4\times$ to $23\times$). For sparser scenarios, \taco shows better performance (up to $1.3\times$), thanks to the DCSR format and its merge-based multiplications. A similar observation on hash/CSR formats has been made in~\cite{taco_format}.
% \begin{center}
% \begin{tabular}{c c c c}
% &
% \textbf{TTV} & 
% \textbf{TTM} & 
% \textbf{MTTKRP}  \\
% \textbf{LA Formulation}
% &
% $A_{ij}=\Sigma_k B_{ijk}c_k$ & 
% $A_{ijk}=\Sigma_k B_{ijl}C_{kl}$ & 
% $A_{ij}=\Sigma_{k,l} B_{ikl}C_{kj}D_{lj}$ \\
% \textbf{Einsum Notation}
% & !ijk,k->ij! &
% !ijl,kl->ijk! &
% !ijk,kj,lj->ij!
% \end{tabular}
% \end{center}

\subsection{Hybrid LA/DB Workload}
\label{sec:exp:indbml}
As the final set of experiments, we consider hybrid workloads that involve linear algebra and query processing. 
Figure~\ref{fig:exp:ladb} shows the experimental results for computing the covariance matrix.
We consider experiments that use \revision{1) nested, 2) relational, and 3) normalized matrix} input datasets.

\smartpara{Nested Data}
For nested data, we use our motivating biomedical example as the workload and variant data from 1000 genomes dataset as input \cite{1000g}. 
The experiment involves computing the covariance matrix of the join of !Genes! and !Variants! relations, by increasing the number of the elements of the former relation; this is 
synonymous to increasing the number of features in the covariant 
matrix by approximately 15, 30, 55, and 70.
We consider the following four versions of the generated code from \lang: 1) unoptimized code that uses uncurried representation for matrices, 2) curried version that uses curried representation for intermediate matrices, 3) a version that uses hash join for joining !Genes! and !Variants!, and 4) a version obtained by fusing intermediate dictionaries resulting from grouping and matrix transpose.
As our competitor, we only consider Trance~\cite{trance_vldb} for the query processing part\revision{, which implements an extension of \nrcplus with aggregation called \nrcagg and} uses Spark MLLib~\cite{mllib} for the linear algebra processing. 
This is because in-database machine learning frameworks such as IFAQ~\cite{ifaq-cgo}, LMFAO~\cite{Schleich:2019:LAE:3299869.3324961}, and Morpheus~\cite{chen2017towards,li2019enabling} do not support nested relations.

As Figure~\ref{fig:exp:bio} shows, we observe that using curried representation gives asymptotic improvements, and allows \lang to scale to larger inputs.
Furthermore, using hash join, gives an additional $3\times$ speedup. This speedup can be larger for larger !Genes! relations.
Performing fusion results in an additional $50\%$ speedup thanks to the removal of intermediate dictionaries and less loop traversals.
Finally, we observe around one order of magnitude performance improvement over Trance/MLLib thanks to the lack of need for unnesting, which is enabled by nested dictionaries provided by \lang.

\begin{table}[t]
\setlength{\tabcolsep}{0.27em}
% \small
\caption{Run time results of \system and \taco for TTV, TTM, and MTTKRP on Nell-2 dataset by varying the sparsity of the second and third operands. Both systems use a sparse representation for all tensor modes.}
\vspace{-0.2cm}
\label{tbl:tensorexp}
\begin{scriptsize}
\begin{tabular}{l l l l l c l l c l l c l l c l l}
\toprule
 & \multicolumn{1}{c}{Sparsity} & & \multicolumn{2}{c}{$2^{-11}$} & & \multicolumn{2}{c}{$2^{-9}$} & & \multicolumn{2}{c}{$2^{-7}$} & & \multicolumn{2}{c}{$2^{-5}$}  & & \multicolumn{2}{c}{$2^{-3}$} \\
 \cmidrule{2-2} \cmidrule{4-5} \cmidrule{7-8} \cmidrule{10-11} \cmidrule{13-14}  \cmidrule{16-17}
 \multicolumn{1}{c}{Kernel} & \multicolumn{1}{c}{LA Formulation} & & \multicolumn{1}{c}{\system} & \multicolumn{1}{c}{\taco}      & & \multicolumn{1}{c}{\system} & \multicolumn{1}{c}{\taco}  & & \multicolumn{1}{c}{\system} & \multicolumn{1}{c}{\taco} & & \multicolumn{1}{c}{\system} & \multicolumn{1}{c}{\taco} & & \multicolumn{1}{c}{\system} & \multicolumn{1}{c}{\taco} \\ \midrule
 TTV & $A_{ij}=\Sigma_k B_{ijk}c_k$ & & 621.8 & 466.3 & & 621.8 & 544.9 & & 632.0 & 866.2 & & 661.8 & 2088.1 & & 729.4 & 6742.7           \\
 TTM & $A_{ijk}=\Sigma_k B_{ijl}C_{kl}$ && 4534.2 & 5936.2 & & 4679.6 & 7851.6 & & 4764.2 & 15563.9 & & 5189.2 & 46153.7 & & 7146.6 & 169865.5     \\
 MTTKRP & $A_{ij}=\Sigma_{k,l} B_{ikl}C_{kj}D_{lj}$ & & 5.6 & 4.3 & & 18.4 & 17.3 & & 32.2 & 60.4 & & 103.2 & 388.1 & & 723.8 & 4371.1           \\
 \bottomrule
\end{tabular}
\end{scriptsize}
\vspace{-0.2cm}
\end{table}

\begin{figure}[t]
\begin{subfigure}{.49\textwidth}
\includegraphics[width=\columnwidth]{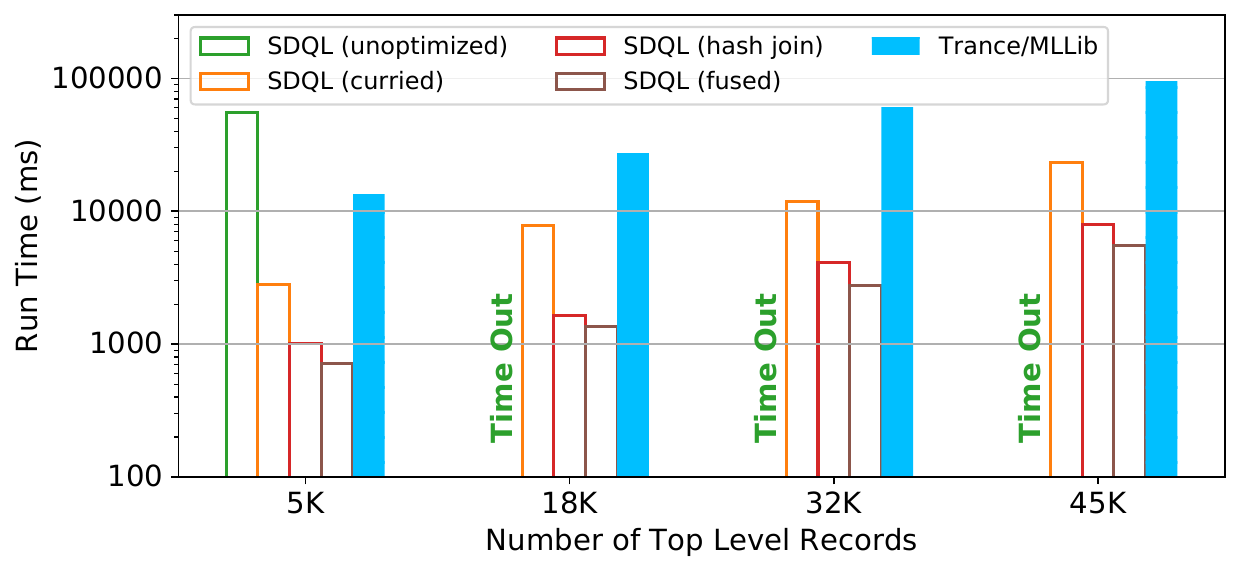}
\vspace{-0.6cm}
\caption{Biomedical query with different optimizations in \lang and Trance~\cite{trance_vldb}/MLLib.}
\label{fig:exp:bio}
\end{subfigure}
\hspace{.01\textwidth}
\begin{subfigure}{.48\textwidth}
\includegraphics[width=\columnwidth]{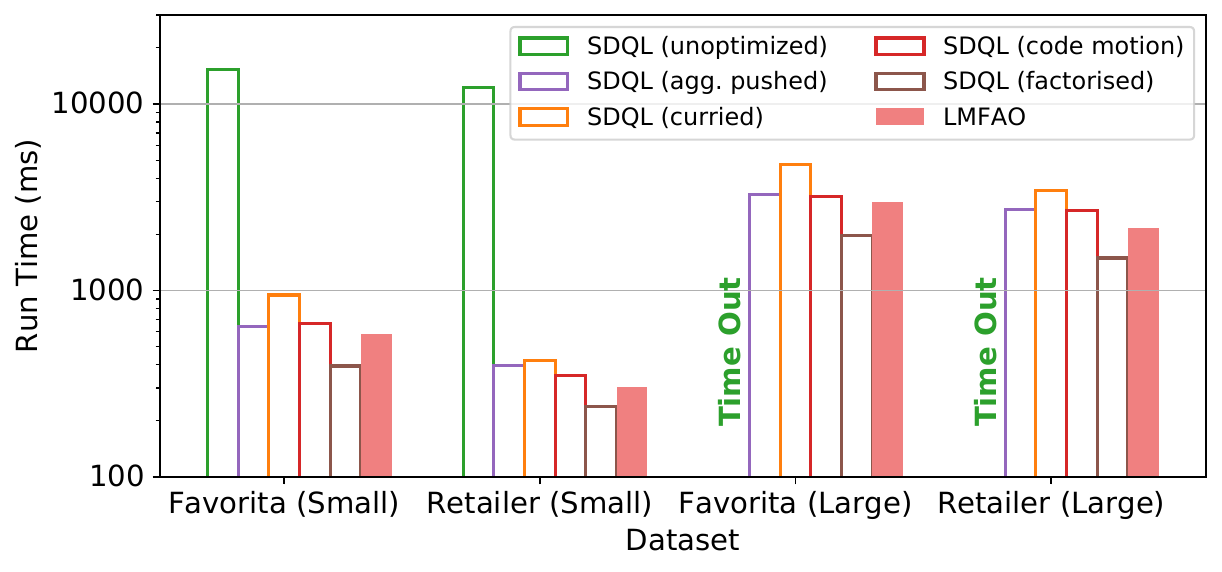}
\vspace{-0.6cm}
\caption{Retail forecasting using different optimizations in \lang and LMFAO~\cite{Schleich:2019:LAE:3299869.3324961}.}
\label{fig:exp:indb}
\end{subfigure}
\vspace{-0.4cm}
% \caption{Run time results for the biomedical query comparing different optimizations in \lang and Trance~\cite{trance_vldb}/MLLib~\cite{mllib}.}
\caption{Run time results for computing covariance matrix over nested and relational data.}
\label{fig:exp:ladb}
\vspace{-0.4cm}
\end{figure}

\smartpara{Relational Data} 
\revision{Next}, we compute the covariance matrix over the result of join of relational input. 
To do so, we use the semi-ring of the covariance matrix (cf. Section~\ref{sec:semiring_ext}).
We use two real-world relational datasets: 1)
\emph{Favorita}~\cite{favorita}, a publicly available Kaggle dataset, and 2)
\emph{Retailer}, a US retailer dataset~\cite{Schleich:2016:LLR:2882903.2882939}.
Both datasets are used in retail forecasting scenarios and consist of 6 and 5 relations, respectively. 
We only use five continuous attributes of these datasets.
We consider the following five versions of the generated code, where optimizations are applied accumulatively: 
1) unoptimized code that involves materializing the result of join before computing the aggregates, 
2) a version where all the aggregates are push down before the join computation, 
3) a curried version that uses a trie representation for input relations and intermediate results,
4) a version that applies loop-invariant code motion,
and 5) the most optimized version that performs loop factorization after all the previous optimizations.
As our competitor, we use LMFAO~\cite{Schleich:2019:LAE:3299869.3324961},
an in-DB ML framework that was shown to be up to two orders of magnitude faster than
\revision{Tensorflow~\cite{abadi2016tensorflow} and MADLib~\cite{hellerstein2012madlib}} for these two datasets.

\revision{Figure~\ref{fig:exp:indb} shows that first,}
pushing aggregates before join results in around one order of magnitude performance improvement, 
thanks to the removal of the intermediate large join. 
Second, using a curried representation degrades the performance, due to the fact that iterations over hash tables is more costly.
Third, code motion can leverage the trie-based iteration, and hoist invariant computations outside the loop \revision{to bring 30\% speed up in comparison with the curried version}. 
Finally, loop factorization leverages the distributivity rule for the semi-ring of covariance matrix,
and factorizes the costly multiplications outside the inner loops. 
On average, this optimization brings 60\% speed up in comparison with the previous version, and 40\% speed up over LMFAO.

\begin{figure}[t]
\includegraphics[width=0.48\columnwidth]{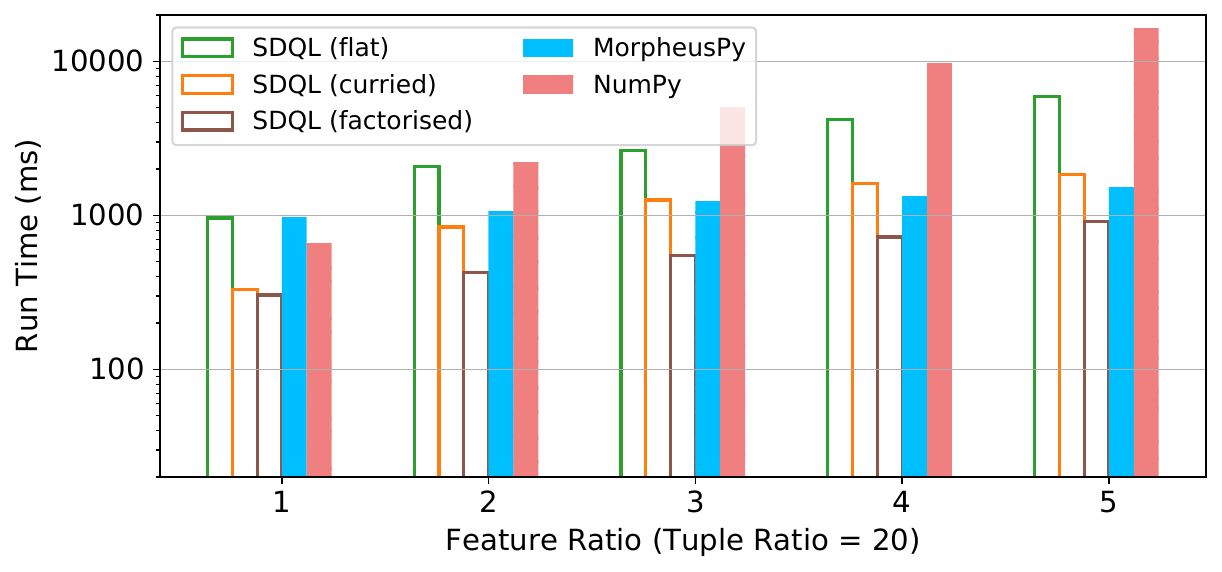}~\includegraphics[width=0.48\columnwidth]{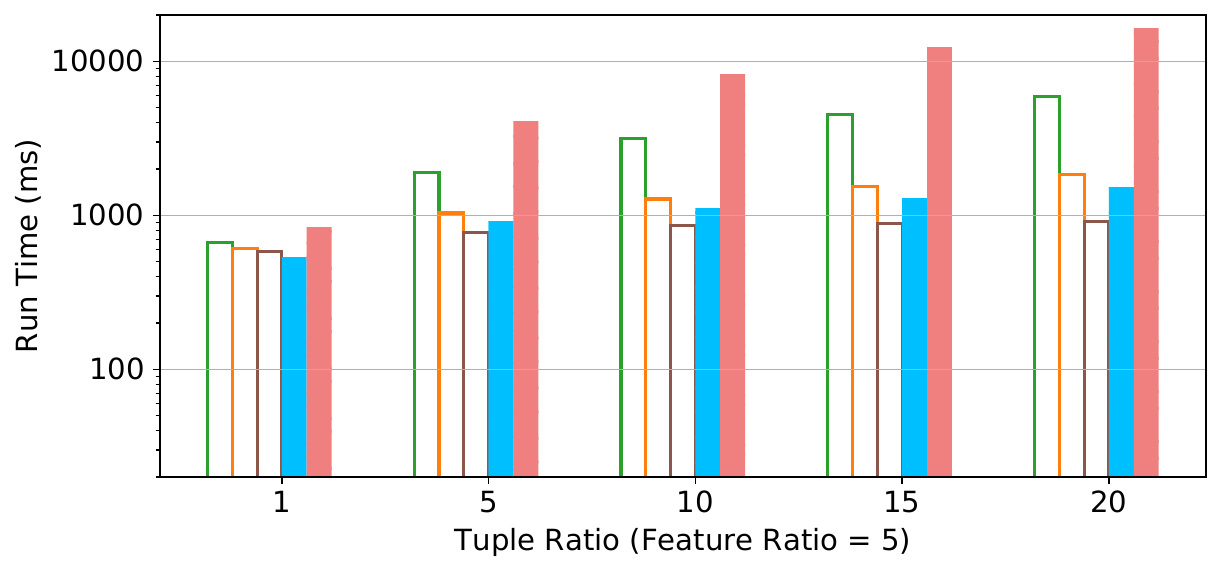}
\vspace{-0.4cm}
\caption{\revision{Run time of \lang, MorpheusPy, and NumPy for computing the covariance matrix over normalized matrix.
For both plots, $S$ has two features ($d_S=2$) and $R$ contains one million tuples ($n_R=1M$). In the left figure, $n_S=20M$ and $d_R\in\{2,4,6,8,10\}$. In the right figure, $d_R=10$ and $n_S\in\{1M,5M,10M,15M,20M\}$.}}
\label{fig:exp:indb_la}
\vspace{-0.3cm}
\end{figure}

\revision{
\smartpara{Normalized Matrix Data} 
Finally, we compute the covariance matrix over the join of relations represented as normalized matrices. 
We use the same semi-ring as the one for relational data. As the competitor, we consider NumPy and MorpheusPy~\cite{morpheuspy}, a Python-based implementation of Morpheus~\cite{chen2017towards}. 
The publicly available version of Morpheus only supports one primary-key foreign-key join of two relations~\cite{morpheus_bug}, i.e., $R \bowtie S$.
Figure~\ref{fig:exp:indb_la} shows the performance of Morpheus and \lang for computing the covariance matrix over such a join.
As in the original Morpheus paper~\cite{chen2017towards}, the join computation time for NumPy is not included. Also, the values for the primary key is the dense integer values between one and one million; 
thus all competitors use a dense representation for them. 
The number of tuples for $R$ is one million ($n_R=1M$), and for $S$ varies between millions ($n_S\in\{1M,5M,10M,15M,20M\}$).
The number of the features for $S$ is two ($d_S=2$), and for $R$ varies between two and ten ($d_R\in\{2,4,6,8,10\}$).

Figure~\ref{fig:exp:indb_la} shows that the NumPy-based implementation over the materialized join can have a better performance for relations with the same number of features. The factorized computation starts showing its benefits for larger feature ratios.
MorpheusPy is always better than the flat representation of \lang. This is thanks to vectorization, which shows its impacts further as the feature ratio increases.
Finally, we observe a superior performance for \lang once the curried representation and loop factorization are used.
% As the feature ratio increases, the vectorized implementation of MorpheusPy closes the performance gap.
As the tuple ratio increases, the speed up of \lang over MorpheusPy climbs up to $1.7\times$; this is because of loop factorization enabled by the curried representation for relation $S$.
MorpheusPy expresses aggregations and joins in terms of linear algebra operations using NumPy, which do not benefit from such optimizations.      
}

\section{Related Work}
In this section, we review the literature. Table~\ref{tbl:relwork} summarizes the differences between different data analytics approaches and \lang.

\smartpara{Relational Query Engines}
Just-in-time compilation of queries has been heavily investigated in the DB community~\cite{krikellas, 
Neumann11, dbtoaster, legobase_tods,dblablb, DBLP:journals/debu/ViglasBN14, crotty2015tupleware, Nagel:2014:CGE:2732977.2732984, karpathiotakis2015just, spark-sql,palkar2017weld,tahboub2018architect}.
% The PL community also have investigated compilation of functional
% collection programs~\cite{Kiselyov:2017:SFC:3009837.3009880,Mainland:2013:EVI:2500365.2500601,jfppushpull}.
As an alternative, vectorized query engines process blocks of data to remove interpretation overhead~\cite{monetdb-handwritten}.
None of these efforts have focused on handling hybrid DB/LA workloads as opposed to \lang.

\smartpara{Nested Data Models}
% There is a long history of query languages for nested collections in the literature.
Nested relational model~\cite{roth1988extended} and monad calculus~\cite{monad-calc-1, monad-calc-2, monad-comprehension, query-comprehension, query-comprehension-2, Buneman:1995:PPC:210500.210501} support complex data models but do not support aggregations and efficient equi-joins~\cite{gibbons_icfp18}. 
Monoid comprehensions solve the former issue~\cite{monoid-comprehension}, however, require an intermediate algebra to support equi-joins efficiently.
Kleisli~\cite{wong2000kleisli},
BQL~\cite{libkin1997query}, and Trance~\cite{trance_vldb} extend monad calculus with aggregations and bag semantics.
Representing flat relations as bags has been investigated in AGCA~\cite{dbtoaster}, FAQ~\cite{abo2016faq}, and HoTTSQL~\cite{chu2017hottsql}.
\lang extends all these approaches by allowing nested dictionaries and representing relations and intermediate group-by aggregates as dictionaries. Although monadic and monoid collection structures were observed, \lang is the first work that introduces semi-ring dictionaries.

\smartpara{Language-Integrated Queries} LINQ~\cite{linq} and Links~\cite{Cooper:2006:LWP:1777707.1777724} mainly aim to generate SQL or host language's code from nested functional queries. 
One of the main challenges for them is to resolve avalanche of queries during this translation, for which techniques such as query shredding has proved useful~\cite{ferry-1,cheney2014query}. 
Comprehensive Comprehensions (CompComp)~\cite{jones2007comprehensive} extend Haskell's list comprehensions with group-by and order-by. 
% Trance~\cite{trance_vldb} has used query shredding for resolving load balancing in distributed query processing. 
Rather than only serving as a frontend language and relying on the target language to perform optimizations, \lang takes an approach similar to Kleisli~\cite{wong2000kleisli}; it directly translates nested collections to low-level code, and enables more aggressive optimizations. 

\smartpara{Loop Fusion}
Functional languages use deforestation~\cite{deforestation,foldr-fusion-1,Svenningsson:2002:SFA:581478.581491,Coutts07streamfusion,erikfusion,semiringfusion} to remove unnecessary intermediate collections. This optimization is implemented by rewrite rule facilities of GHC~\cite{jones2001playing} in Haskell~\cite{foldr-fusion-1}, and also by using multi-stage programming in Scala~\cite{fold-based-fusion,Kiselyov:2017:SFC:3009837.3009880,jfppushpull}. Generalized stream fusion~\cite{Mainland:2013:EVI:2500365.2500601} combines deforestation with vectorization for Haskell.
Functional array processing languages such as APL~\cite{iverson1962programming}, SAC~\cite{Grelck2006}, Futhark~\cite{henriksen2017futhark}, and $\widetilde{\text{F}}$~\cite{shaikhha2019efficient} also need to support loop fusion. Such languages mainly use pull and push arrays~\cite{edsl-push,Claessen:2012:EAC:2103736.2103740,Svensson:2014:DPA:2636228.2636231,Axelsson:2010:DIF:2050135.2050143,kiselyov2018reconciling,dps_fhpc} to remove unnecessary intermediate arrays. 
Even though these work support fusion for lists of key-value pairs, they do not support dictionaries. Thus, they do not have efficient support for operators such as grouping and hash join.

\smartpara{Linear Algebra Languages}
DSLs such as Lift~\cite{Steuwer:2015:GPP:2784731.2784754}, 
Halide~\cite{ragan2013halide}, Diderot~\cite{chiw2012diderot}, and OptiML~\cite{sujeeth2011optiml} can
generate parallel code from their high-level programs, while DSLs such as Spiral~\cite{spiral}, LGen~\cite{spampinato2016basic,Spampinato:2018:PGS:3179541.3168812} exploit the memory hierarchy and make careful decisions on tiling and scheduling decisions.
These DSLs exploit the memory hierarchy by relying on searching algorithms for making tiling and scheduling decisions.
The generated output is a C function that includes intrinsics to enable SIMD vector extensions.
SPL~\cite{Xiong:2001:SLC:378795.378860} is a language that expresses recursion and mathematical formulas. 
TACO~\cite{Kjolstad:2017:TAC:3152284.3133901} generates efficient low-level code 
for compound linear algebra operations on dense and sparse matrices.
All these languages are limited to linear algebra workloads and do not support database workloads.

\begin{table}
\begin{scriptsize}
  \caption{Comparison of different data analytics approaches.
 $\supfull$ means that the property is supported, $\supnone$ means that it is absent in the work, and
 $\suphalf$ means that the property is partially supported. \revision{For the corresponding sets of operators supported by (nested) relational and linear algebra refer to Figures~\ref{fig:ra_ndql}-\ref{fig:la_ndql}.}}
 \label{tbl:relwork}
 \vspace{-0.4cm}
 \begin{tabular}{|l|c|c|c|c|c|c|c|c|c|c|c|c|c|c|c|}
 \hline 
 \multirow{2}{*}{}
  & \multicolumn{5}{c|}{Expressiveness} & \multicolumn{5}{c|}{Data Representation} & \multicolumn{5}{c|}{Specialization} \\
 \cline{2-16} 
  & \rot{Relational Algebra}  & \rot{Nested Rel. Calc.} & \rot{Group-by Aggregates} & \rot{Efficient Equi-Joins} & \rot{Linear Algebra} & \rot{Set \& Bag} & \rot{Dense Array} & \rot{Sparse Tensor} & \rot{Dictionary} & \rot{Semi-rings} & \rot{Loop Fusion} & \rot{Loop Hoisting} & \rot{Loop Memoization} & \rot{Code Generation} & \rot{Vectorization} \\
 \hline
 \system{} (This Paper) &
 \supfull & \supfull & \supfull & \supfull  & \supfull  & \supfull & \supfull & \supfull & \supfull & \supfull & \supfull & \supfull & \supfull & \supfull & \supnone \\ \hline
Query Compilers (HyPer) &
 \supfull & \supnone & \supfull & \supfull  & \supnone  & \supfull & \supfull & \supnone & \supfull & \supnone & \suphalf & \suphalf & \supnone & \supfull & \supnone \\ \hline
Vectorized Query Engines (Vectorwise) &
 \supfull & \supnone & \supfull & \supfull  & \supnone  & \supfull & \supfull & \supnone & \supfull & \supnone & \suphalf & \suphalf & \supnone & \supnone & \supfull \\ \hline
 Monad Calculus\revision{, \nrcplus} &
 \supfull & \supfull & \supnone & \supnone  & \supnone  & \supfull & \supnone & \supnone & \supnone & \supnone & \suphalf & \suphalf & \supnone & \supnone & \supnone \\ \hline
 Monoid Comprehension &
 \supfull & \supfull & \supfull & \supnone  & \supnone  & \supfull & \supnone & \supnone & \supnone & \supnone & \suphalf & \suphalf & \supnone & \supnone & \supnone \\ \hline
  Monad Calc. + Agg. (Kleisli, Trance) &
 \supfull & \supfull & \supfull & \supnone  & \suphalf  & \supfull & \supnone & \supnone & \supnone & \supnone & \suphalf & \suphalf & \supnone & \supfull & \supnone \\ \hline
 Lang. Integrated Queries (LINQ, CompComp) &
 \supfull & \supfull & \supfull & \supnone  & \supfull  & \supfull & \supnone & \supnone & \supnone & \supnone & \suphalf & \suphalf & \supnone & \supnone & \supnone \\ \hline
 Functional Lists (Generalized Stream Fusion) &
 \supfull & \supfull & \supfull & \supnone  & \supfull  & \supfull & \suphalf & \supnone & \supnone & \supnone & \supfull & \suphalf & \supnone & \supfull & \supfull \\ \hline
 Functional APL (Futhark, SAC) &
 \suphalf & \suphalf & \suphalf & \supnone  & \supfull  & \suphalf & \supfull & \supnone & \supnone & \supnone & \supfull & \suphalf & \suphalf & \supfull & \supfull \\ \hline
 Dense LA Library (NumPy) &
 \supnone & \supnone & \supnone & \supnone  & \supfull  & \supnone & \supfull & \supnone & \supnone & \supnone & \supnone & \supnone & \supnone & \supnone & \supfull \\ \hline
 Dense LA DSL (Lift,Halide,LGen) &
 \supnone & \supnone & \supnone & \supnone  & \supfull  & \supnone & \supfull & \supnone & \supnone & \supnone & \supfull & \suphalf & \supnone & \supfull & \supfull \\ \hline
 Sparse LA Library (SPLATT, SciPy) &
 \supnone & \supnone & \supnone & \supnone  & \supfull  & \supnone & \supfull & \suphalf & \supnone & \supnone & \supnone & \supnone & \supnone & \supnone & \suphalf \\ \hline
 Sparse LA DSL (TACO) &
 \supnone & \supnone & \supnone & \supnone  & \supfull  & \supnone & \supfull & \supfull & \supnone & \supnone & \suphalf & \suphalf & \supnone & \supfull & \supnone \\ \hline
 \revision{Sparse LA + Semi-rings (GraphBLAS)} &
 \supnone & \supnone & \supnone & \supnone  & \supfull  & \supnone & \supfull & \suphalf & \supnone & \supfull & \supnone & \supnone & \supnone & \supnone & \suphalf \\ \hline
 DB/LA by casting to LA (Morpheus) &
 \suphalf & \supnone & \supfull & \supfull  & \supfull  & \supfull & \supfull & \suphalf & \supnone & \supnone & \supnone & \supnone & \supnone & \supnone & \supfull \\ \hline
 DB/LA by casting to DB (LMFAO) &
 \supfull & \supnone & \supfull & \supfull  & \suphalf  & \supfull & \supfull & \suphalf & \supnone & \supfull & \suphalf & \suphalf & \supnone & \supfull & \supnone \\ \hline
 DB/LA by \revision{unified IR} (IFAQ) &
 \supfull & \supnone & \supfull & \supfull  & \supfull  & \supfull & \supnone & \supfull & \supfull & \supfull & \suphalf & \supfull & \suphalf & \supfull & \supnone \\ \hline
 \revision{DB/LA by combined IR (Raven)} &
 \supfull & \supnone & \supfull & \supfull  & \supfull  & \supfull & \supfull & \suphalf & \supnone & \supnone & \suphalf & \suphalf & \suphalf & \supfull & \supfull \\ \hline
 \end{tabular}
\end{scriptsize}
 \vspace{-0.5cm}
\end{table}

\revision{
\smartpara{Semi-Ring Languages}
The use of semi-rings for expressing graph problems as linear algebra is well-known~\cite{kepner2011graph}. 
This connection has been used for expressing path problems by solving matrix 
equations~\cite{path_tarjan,algebra_path,valiant1975general}. 
\lang requires extensions in order to express such problems (cf. Section~\ref{sec:langext}).
GraphBLAS~\cite{graph_blas} is a framework for expressing graph problems in terms of sparse linear algebra.
The functional languages has shown before an appropriate implementation choice for linear algebra languages with
various semi-ring instances~\cite{pilatus19ecoop,dolan2013fun}.
In the DB world,} K-relations~\cite{green2007provenance} use semi-rings~\cite{karvounarakis2012semiring} and semi-modules~\cite{amsterdamer2011provenance} for encoding provenance information for relational algebra with aggregations. 
\revision{The pvc-tables~\cite{probdb_agg} are a representation system that use this idea to encode aggregations in databases with uncertainties.
The closest work to ours is FAQ~\cite{abo2016faq}, which provides a unified declarative interface for LA and DB. 
However, none of the existing work support nested data models.

\smartpara{DB/LA Query Languages}
There has been a recent interest in the study on the expressive power of query languages for hybrid DB/LA tasks.
Matrix query languages~\cite{matlang_sigrec} such as MATLANG~\cite{matlang_tods} and its extensions have shown to 
be connected to different fragments of relational algebra with aggregates.
LARA~\cite{laradb} is a query language over associative tables (flat dictionaries), 
with more expressive power than MATLANG~\cite{lara_expr}.
Associative algebra~\cite{polystore_query_lang} defines a query language over associative arrays (flat dictionaries, 
and without the ability to map between dictionaries of different value types) expressive enough 
for both database and linear algebra workloads. 
All these query languages are declarative and can only serve as frontend query languages; they need to rely on the techniques offered by other formalisms (e.g., FAQ~\cite{abo2016faq}) for optimizations.
Furthermore, none of these languages support nested data like \lang.
}

\smartpara{DB/LA Frameworks}
Hybrid database and linear algebra workloads, such as training machine learning models over databases are increasingly gaining attention. 
Traditionally, these workloads are processed in two isolated environments: 1) the training data set is constructed using a database system or libraries such as Python Pandas, and then 2) the model is trained over the materialized dataset using frameworks such as scikit-learn~\cite{pedregosa2011scikit}, TensorFlow~\cite{abadi2016tensorflow}, PyTorch~\cite{paszke2017automatic}, etc.
% R~\cite{team2013r}, MLlib~\cite{mllib}, SystemML~\cite{ghoting2011systemml}, or XGBoost~\cite{chen2016xgboost}.
There has been some efforts on avoiding the separation of the environments by defining ML tasks as user-defined functions inside the database system such as 
MADlib~\cite{hellerstein2012madlib}, Bismarck~\cite{Feng:2012:TUA:2213836.2213874}, and GLADE PF-OLA~\cite{qin2015speculative}; however, the training process is still executed after the training dataset is materialized.

Alternative approaches avoid the materialization of the training dataset.
The current solutions are currently divided into \revision{four} categories. 
First, systems such as Morpheus~\cite{chen2017towards,li2019enabling} cast the in-DB ML task as a linear algebra problem on top of R~\cite{chen2017towards} and NumPy~\cite{li2019enabling}. 
\revision{An advantage of this system is that it benefits from efficient linear algebra frameworks (cf. Section~\ref{sec:exp:indbml}). 
However, one requires to encode database knowledge in terms of linear algebra rewrite rules and implement query evaluation techniques for them (e.g., trie-based evaluation as observed in Section~\ref{sec:exp:indbml}).}
The second category are systems such as F~\cite{fdb, Schleich:2016:LLR:2882903.2882939}, AC/DC~\cite{Khamis:2018:AIL:3209889.3209896}, and LMFAO~\cite{Schleich:2019:LAE:3299869.3324961} that cast the in-DB ML task as a \revision{batch of aggregate queries}.
\revision{The third approach involves defining an intermediate representation (IR) 
that \textit{combines} linear and relational algebra constructs together.
Raven~\cite{raven} and MatRel~\cite{matrel} are frameworks that provide such an IR.
For implementing cross-domain optimizations, this approach requires developing new 
transformation rules for different combinations of linear and relational algebra constructs, which can be tedious and error prone.}
The \revision{fourth} category \revision{resolves this issue by defining a unified intermediate} language that can express both workloads.
\revision{Lara~\cite{kunft2019intermediate} provides a two-level IR. The first level combines linear and relational algebra constructs.
The second level is based on monad-calculus and can perform cross-domain optimizations such as vertical loop fusion and selection push down.
IFAQ~\cite{ifaq-cgo,ifaq_ir} introduces a single dictionary-based DSL for expressing the entire data science pipelines.}
\lang also falls into \revision{the fourth} category, and additionally supports nested data\revision{, dense representations, and more loop optimizations (cf. Table~\ref{tbl:relwork})}. Furthermore, to the best of our knowledge, \lang is the only hybrid DB/LA framework for which type safety and the correctness of the optimizations are proved using denotational and operational semantics.

\section{Conclusion}
In this paper, we introduce a statically typed and functional language based on semi-ring dictionaries.
\lang is expressive enough for different data science use-cases with a better or competitive performance relative to specialized systems.
For example, the performance of \lang is competitive with the state-of-the-art in-memory database systems that are especially built for database workloads, and thus cannot efficiently handle other use-cases including sparse linear algebra, and in-database machine learning over different formats of data: nested, relational, and normalized matrix. 
This makes \lang a suitable intermediate language for data science pipelines typically expressed in several languages and executed using different systems.
For future, we plan to add the support for vectorization and parallelization.

\vspace{0.3cm}

\smartpara{Acknowledgements}
This project has received funding from the European Union's Horizon 2020 research and innovation programme under grant agreement No 682588. The authors also acknowledge the EPSRC grant EP/T022124/1 (QUINTON).

\bibliography{refs} 

\clearpage

\appendix

\section{Translation of Relational Algebra}
In this section, we explain the translation of relational operators to \lang, as shown in Figure~\ref{fig:ra_ndql}.

\smartpara{Selection}
Consider the translation of relation R in \lang, which is represented as \translate{R}. 
The selection operator, represented as $\sigma_p$(R), filters the elements that satisfy a predicate $p$.
For each element !x! of this relation, if the predicate is satisfied, we return the singleton set containing !x.key!.
Otherwise, we return an empty set.
As the body of the loop returns a set, this loop performs a set union, which results in a filtered relation.

\smartpara{Projection}
This operator projects a subset of attributes specified by the function $f$, and is represented as $\pi_f$(R). Similar to selection, we iterate over the 
elements of \translate{R}; at each iteration we return a singleton set 
with the applied projection 
function !f! on each row of relation (!f(x.key)!). 
% Similar to selection, we iterate over the elements of \translate{R}. 
% At each iteration, we return a singleton set with the application of the projection function !f! on each row of relation (!f(x.key)!).

\smartpara{Union} Set union is achieved by using the $+$ operation of the boolean semi-ring, i.e., boolean disjunction, on the values of elements with the same key. For the elements that exist only in one of the collections, the value associated with the element in the other collection is considered as !false!.

\smartpara{Intersection} Set intersection is achieved by iterating over the elements of the first collection. At each iteration, if an element with the same key exists in the other collection, a singleton set of that element key is returned, otherwise an empty set is returned.

\smartpara{Difference} This operator is symmetric to intersection with the difference that if the element key exists in the second collection, an empty set is returned. Otherwise, a singleton set is returned.

\smartpara{Cartesian Product}
For this operation, we use nested loops iterating over the elements !x! and !y! of relations R and S, respectively.
For each combination of tuples, we return a singleton set that has the combination of the tuples of these two relations as its element.

\smartpara{Inner Join}
This operator is expressed similarly to the Cartesian product operator.
For each combination of elements, if the selection predicate is satisfied the joined tuples are emitted.

\smartpara{Semi Join}
R left semijoin S is the set of all tuples in R for which there is a matched tuple in S. It can be simulated using a natural join followed by the projection over the attributes of R. The function concat joins two tuples from R and S, whereas !projR! extracts a record with attributes of R from the joined tuple:

\begin{lstlisting}
let RjS = sum(x in R) sum(y in S) 
  if(join(x.key, y.key)) then { concat(x.key, y.key) } in
sum(x in RjS) { projR(RjS) }
\end{lstlisting}

\smartpara{Anti Join}
R left antijoin S is similar to the semijoin, except that its result is only those tuples of R for which there is no matched tuple in S. It can be expressed by substracting R left semi-join S from R:

\begin{lstlisting}
let RjS = sum(x in R) sum(y in S) 
  if(join(x.key, y.key)) then { concat(x.key, y.key) } in
let RsjS = sum(x in RjS) { projR(RjS) } in
sum(x in R) if(not(RsjS(x.key))) then { x.key }
\end{lstlisting}

Right semi/anti join can be expressed similarly.

\smartpara{Outer Join}
R left outer join S is expressed as:

\begin{lstlisting}
let RjS = sum(x in R) sum(y in S) 
  if(join(x.key, y.key)) then { <r=x.key, s={y.key}> } in
let Rproj = sum(xy in RjS) { xy.key.r } in
let RpS = sum(x in R) if(not(Rproj(x.key))) then { <r=x.key, s={}> }
in RjS + RpS
\end{lstlisting}

The expression !RjS! corresponds to R inner join with S, Rproj corresponds to the projection over the attributes of R, and !RpS! corresponds to the elements of R that didn't join with any element from S padded with NULL, which is !{}! in this case. Finally, we compute the union of !RjS! and !RpS!. The right outer join and full outer joins are expressed similarly.

\section{Translation of Nested Relational Caculus}

\smartpara{Bag Construction}
The empty bag construction is expressed by an empty dictionary. 
The singleton bag construction is achieved by constructing a dictionary with the given element as its key, and the multiplicity of one as its value.

\smartpara{Flattening}
The flattening operation is only performed on an input parameter that is nested. 
Thus, the translated input is a dictionary where the key is also a dictionary.
In order to flatten the input dictionary, one has to union its keys; however, one needs to multiply the multiplicity with the keys to take into account the bag semantics, represented as !x.val * x.key!.

\smartpara{For-comprehensions}
Similar to the flattening operator, bag semantics 
requires that the translation of the body of the loop
be multiplied by the multiplicity.

\smartpara{Bag Union}
Similar to set union in relational algebra, bag union in \nrcplus is also achieved by addition on the translation of the operands.

\smartpara{Bag Product}
Similar to Cartesian Product in relational algebra, we iterate over each combination of the elements of the two inputs. 
The key of the result dictionary is a pair of the keys of inputs.
The value is the multiplication of the values of two inputs in order to take into account the bag semantics.

\section{Translation of Aggregations}

\smartpara{Scalar Aggregate}
This operator can be implemented by
iterating over the elements of the relation and computing the appropriate aggregate function !f! (cf. first and third rules of Figure~\ref{fig:agg_ndql}). 
As the relations have bag semantics, there could be duplicates of an element in the input relation, the multiplicity of which is shown by !x_v!; thus, the aggregate result for each element needs to be multiplied by !x_v!. 
The following example shows the translation of sum and count queries:

\begin{center}
\begin{tabular}{l c l}
\begin{lstlisting}[language=SQL,frame=none]
SELECT SUM(R.A) FROM R
\end{lstlisting}&\transto&
\begin{lstlisting}[frame=none]
sum(<r,r_v> in R) r_v * r.A
\end{lstlisting}\\
\begin{lstlisting}[language=SQL,frame=none]
SELECT COUNT(*) FROM R
\end{lstlisting}&\transto&
\begin{lstlisting}[frame=none]
sum(<r,r_v> in R) r_v
\end{lstlisting}
\end{tabular}
\end{center}

\smartpara{Group-by Aggregate} As opposed to its scalar variant, a group-by aggregate returns a single dictionary with the key specified by the grouping function !g!, and the value specified using the aggregate function !f! (cf. second and fourth rules of Figure~\ref{fig:agg_ndql}).
The following example shows the translation of group-by sum and group-by count queries:

\begin{tabular}{l c l}
\begin{lstlisting}[language=SQL,frame=none]
SELECT SUM(R.A) FROM R
GROUP BY R.B
\end{lstlisting}&\transto&
\begin{lstlisting}[frame=none]
let tmp = sum(<r,r_v> in R) { r.B -> r_v * r.A }
in sum(<x,x_v> in tmp) { <key=x, val=x_v> -> 1 }
\end{lstlisting}\\
\begin{lstlisting}[language=SQL,frame=none]
SELECT COUNT(*) FROM R
GROUP BY R.B
\end{lstlisting}&\transto&
\begin{lstlisting}[frame=none]
let tmp = sum(<r,r_v> in R) { r.B -> r_v }
in sum(<x,x_v> in tmp) { <key=x, val=x_v> -> 1 }
\end{lstlisting}
\end{tabular}

\smartpara{Nest} This operator performs grouping without aggregation. 
This means that the output is a nested relation, that is only supported by \nrcagg, not relational algebra. 
Similar to the group-by aggregate operator, at each iteration it returns a singleton dictionary with the key specified by the function !g!, and the value is a singleton dictionary with !x! as key and !x_v! as value (cf. last rule of Figure~\ref{fig:agg_ndql}).

\section{Translation of Linear Algebra}

\smartpara{Vector Addition} This operation is expressed as the addition of the translated dictionaries in \lang, which results in a dictionary where the values of the elements with the same key are summed. 

\smartpara{Scalar-Vector Multiplication} The multiplication of a scalar value with a vector is translated to the multiplication of the translated scalar \lang expression with the translated dictionary. The result expression is evaluated to a dictionary with the same keys as the translated dictionary, with the values multiplied by the scalar expression. 

\smartpara{Vector Hadamard Product} The Hadamard product of two vectors, or the element-wise multiplication, is achieved by iterating over the elements of the translated dictionary of the first vector, 
and constructing a singleton dictionary with its key, and the value multiplied by looking up the value associated with the translated dictionary of the second vector.
If an element with the same key (index) does not exist in the second dictionary, the singleton dictionary will have a zero value. 
When the result dictionary is constructed, singleton dictionaries are ignored when computing the union of the intermediate dictionaries.

\smartpara{Vector Dot Product} This operation is achieved by iterating over the elements of the translation of the first vector, and summing the multiplication of the associated value and the corresponding value of the second vector. 

\smartpara{Vector Summation} Finally, the summation of the elements of a vector is expressed by iterating over the elements of the vector and adding its values.

\smartpara{Matrix Transpose} The transposition of a matrix is expressed by iterating over the elements of the translated dictionary and constructing a dictionary where the key is a record with the same value, but with the row and column swapped.

\smartpara{Matrix Addition, Scalar-Matrix Multiplication, Matrix Hadamard Product} These operators are expressed similarly to the corresponding vector oporators.

\smartpara{Matrix-Matrix Multiplication} This operator is expressed by iterating over each combination of the elements of two matrices. At each iteration, if the column of the element from the first matrix is the same as the row of the element of the second matrix, a singleton dictionary is created, where the key is with the row of the first element, and column of the second element, and the value is the multiplication of both elements. Otherwise, and empty dictionary is created. 

\smartpara{Matrix-Vector Multiplication} This operator is expressed by iterating over the elements of the translated matrix, and constructing a singleton dictionary where the key is the row of this element, and the value
is the multiplication of this element with the element from the vector associated with its column.

\smartpara{Matrix Trace} The trace of a matrix is the result of summation of the elements of a diagonal of a matrix. 
This can be expressed by iterating over the elements of a matrix and adding the value of that element if its row and column are identical, and otherwise adding zero.

\section{Translation of Curried Linear Algebra}
Figure~\ref{fig:cla_ndql_full} shows the translation of matrix operator, in the case of using curried representation for them.

\begin{figure*}[t]
\begin{tabular}{|l|r c l|c|}
\hline
\textbf{Name} & \multicolumn{3}{|l|}{\textbf{Translation}} & \textbf{Einsum} \\ \hline
Transpose&\translate{$M_1^T$} &=& !sum(row in! \translate{$M_1$} !) sum(x in row.val)!& !ij->ji!\\
&&&\tab!{ x.key -> { row.key -> x.val } }!&\\ \hline
Had. Prod.&\translate{$M_1 \circ M_2$} &=& !sum(row in! \translate{$M_1$} !) { row.key -> ! & !ij,ij->ij! \\
&&&\tab!sum(x in row.val) { x.key ->! & \\ 
&&&\tab\tab !x.val*!\translate{$M_2$}!(row.key)(x.key) } }! &\\ \hline
Matrix-Matrix&\translate{$M_1 \times M_2$} &=& !sum(row in! \translate{$M_1$} !) { row.key -> ! & !ij,jk->ik! \\
Multiplication&&&\tab!sum(x in row.val) sum(y in! \translate{$M_2$}!(x.key))! &\\ 
&&&\tab\tab! { y.key -> x.val * y.val } }!&\\ \hline
Mat-Vec. Mul.&\translate{$M \cdot V$} &=& !sum(row in! \translate{$M$} !) { row.key ->! & !ij,j->i!\\
&&&\tab!sum(x in row.val) x.val * !\translate{$V$}!(x.key) }!&\\\hline
Trace&\translate{$Trace(M)$} &=& !sum(row in! \translate{$M$} !) row.val(r.key)! & !ii->!\\ \hline
\end{tabular}
\caption{Translation of curried matrix operations to \lang.}
% \caption{Translation of matrix-matrix multiplication for curried matrices to \lang.}
\label{fig:cla_ndql_full}
\end{figure*}

\section{Correctness of loop optimizations}

\begin{proposition}
The horizontal loop fusion rules of Figure~\ref{fig:opt_rules} are sound.
\end{proposition}

\begin{proof}[Proof]
\dsbegin!let y1=sum(x in e1) f1(x) in let y2=sum(x in e1) f2(x) in f3(y1,y2)!\dsend{} \tab \\
= \dsbegin!let y2=sum(x in e1) f2(x) in f3(y1,y2)!\dsgend{$\gamma'$} \tab ($\gamma'$ = $\gamma$[\dsbegin!sum(x in e1) f1(x)!\dsend{}/ y1]) \\
= \dsbegin!let y2=sum(x in e1) f2(x) in f3(y1,y2)!\dsgend{$\gamma'$} ($\gamma'$ = $\gamma$[$\sum\limits_{k \in X}$\densem{\code{f1}}<$k,a_k$>/ y1], \densem{\code{e1}}=$\sum\limits_{k \in X}a_k\mytimes k$) \\
= \dsbegin!f3(y1,y2)!\dsgend{$\gamma'$} ($\gamma'$ = $\gamma$[$\sum\limits_{k \in X}$\densem{\code{f1}}<$k,a_k$>/ y1,\dsbegin!sum(x in e1) f2(x)!\dsend{} /y2], \densem{\code{e1}}=$\sum\limits_{k \in X}a_k\mytimes k$) \\
= \dsbegin!f3(y1,y2)!\dsgend{$\gamma'$} ($\gamma'$ = $\gamma$[$\sum\limits_{k \in X}$\densem{\code{f1}}<$k,a_k$>/ y1,$\sum\limits_{k \in X}$\densem{\code{f2}}<$k,a_k$> /y2], \densem{\code{e1}}=$\sum\limits_{k \in X}a_k\mytimes k$) \\
= \dsbegin!f3!\dsend{}$\big(\sum\limits_{k \in X}$\densem{\code{f1}}<$k,a_k$>, $\sum\limits_{k \in X}$\densem{\code{f2}}<$k,a_k$>$\big)$ \tab (\densem{\code{e1}}=$\sum\limits_{k \in X}a_k\mytimes k)$ \\
= \dsbegin!f3(tmp.y1, tmp.y2)!\dsgend{$\gamma'$} \tab ($\gamma'$ = $\gamma$[<$\sum\limits_{k \in X}$\densem{\code{f1}}<$k,a_k$>,\densem{\code{f2}}<$k,a_k$> > /tmp], \densem{\code{e1}}=$\sum\limits_{k \in X}a_k\mytimes k)$ \\
= \dsbegin!let tmp = sum(x in e1) <y1=f1(x), y2=f2(x)> in f3(tmp.y1, tmp.y2)!\dsend{}
\end{proof}

\begin{proposition}
The rewrite rule for loop-invariant code motion in Figure~\ref{fig:opt_rules} is sound.
\end{proposition}

\begin{proof}[Proof]
\dsbegin!sum(x in e1) let y = e2 in f(x, y)!\dsend{} \\
= $\sum\limits_{k \in X}$\dsbegin!let y = e2 in f(x, y)!\dsgend{$\gamma'$} \tab ($\gamma'$=$\gamma$[<$k,a_k$>/ x], \densem{\code{e1}}=$\sum\limits_{k \in X}a_k\mytimes k$)\\
= $\sum\limits_{k \in X}$\dsbegin!f(x, y)!\dsgend{$\gamma'$} \tab ($\gamma'$=$\gamma$[<$k,a_k$>/ x, \dsbegin!e2!\dsend{} / y], \densem{\code{e1}}=$\sum\limits_{k \in X}a_k\mytimes k$)\\
= \dsbegin!sum(x in e1) f(x, y)!\dsgend{$\gamma'$} \tab ($\gamma'$=$\gamma$[\dsbegin!e2!\dsend{} / y]) \\
= \dsbegin!let y = e2 in sum(x in e1) f(x, y)!\dsend{}
\end{proof}

\begin{proposition}
The rewrite rules for loop memoization in Figure~\ref{fig:opt_rules} are sound.
\end{proposition}

\begin{proof}[Proof]
We prove the second rule, and the first rule is proved similarly.
\\
\dsbegin!sum(x in e1) if(p(x) == e2) then f(x)!\dsend{} \tab \\
= $\sum\limits_{k \in X}$\dsbegin!if(p(x) == e2) then f(x)!\dsgend{$\gamma'$} \tab ($\gamma'$=$\gamma$[<$k,a_k$>/ x], \densem{\code{e1}}=$\sum\limits_{k \in X}a_k\mytimes k$) \tab (!x! $\notin$ FVs of !f!, !p!, !e2!) \\
= $\sum\limits_{k \in X}$\dsbegin!f!\dsend{$\gamma'$}($a_k, k$) $*$ (\dsbegin!p!\dsend{$\gamma'$}($a_k, k$) == \dsbegin !e2! \dsend{$\gamma'$}) \tab ($\gamma'$=$\gamma$[<$k,a_k$>/ x], \densem{\code{e1}}=$\sum\limits_{k \in X}a_k\mytimes k$) \\
= $\sum\limits_{k \in X}$\dsbegin!f!\dsend{$\gamma'$}($a_k, k$) \tab ($\gamma'$=$\gamma$[<$k,a_k$>/ x], \densem{\code{e1}}=$\sum\limits_{k \in X}a_k\mytimes k$, \dsbegin!p!\dsend{}($a_k, k$) == \dsbegin !e2! \dsend{}) \\
= $\pi_{\text{\dsbegin \code{e2} \dsend{$\gamma'$}}}\big(\sum\limits_{k \in X}$\dsbegin!f!\dsend{$\gamma'$}($a_k, k$) $\mytimes$ 
\dsbegin !p! \dsend{$\gamma'$}($a_k, k$)$\big)$ \tab 
($\gamma'$=$\gamma$[<$k,a_k$>/ x], \densem{\code{e1}}=$\sum\limits_{k \in X}a_k\mytimes k$) \\
= $\pi_{\text{\dsbegin \code{e2} \dsend{}}}\big($ \dsbegin !sum(x in e1) {p(x)->f(x)}! \dsend{} $\big)$ \\ 
= \dsbegin!let tmp=sum(x in e1) {p(x)->f(x)} in tmp(e2)!\dsend{}
\\
\end{proof}

\begin{figure*}[t]
\centering
\setlength{\tabcolsep}{0.3em}
\centering
\begin{tabular}{|l c l|}
\hline
\multicolumn{3}{|l|}{\grammarcomment{Evaluation contexts}} \\
$E$ & \mbox{::=} & !sum(x in $E$) e! $\mid$ !$E$(e)! $\mid$ !v($E$)!  %\\

$\mid$
!let x = $E$ in e!  $\mid$ !if($E$) then e else e!\\
& $\mid$ & 
!{ v -> v, ..., $E$ -> e, ... }! %\\
% & $\mid$ & 
$\mid$
!{ v -> v, ..., v -> $E$, ... }! \\
& $\mid$ & !< a_1 = v, ..., a_i = $E$, ... >!  $\mid$ !$E$.a!
$\mid$ !$E$ * e!  $\mid$ !v * $E$! $\mid$ !$E$ + e!  $\mid$ !v + $E$! \\
& $\mid$ & !promote$_{\texttt{S},\texttt{S}}$($E$)! $\mid$  $[]$ \\ \hline
\multicolumn{3}{|l|}{\grammarcomment{Values}} \\ 
!v! & \mbox{::=} & !{ v -> v, ... }! $\mid$ !< a = v, ... >! 
$\mid$ !n! $\mid$ !r! $\mid$ !false! $\mid$ !true! $\mid$ \zero{T}\\
\hline
\end{tabular}
\vspace{0.2cm}
\begin{tabular}{c}
\\\hline
!sum(x in{k0->v0,...})e2! \evalsto !e2![!<key=k0,val=v0>!/!x!]!+sum(x<-{k1->v1,...})e2!
\end{tabular}

\begin{tabular}{c}
!v1,v2:S!
\\\hline
!(v1)+(v2)! \evalsto !(v1+v2)!
\end{tabular}
\begin{tabular}{c}
!v1,v2:S!
\\\hline
!(v1)*(v2)! \evalsto !(v1*v2)!
\end{tabular}
\begin{tabular}{c}
\\\hline
!promote$_{\texttt{S1},\texttt{S2}}$(v)! \evalsto !v!
\end{tabular}

\begin{tabular}{c}
!e2!: !T!\\\hline
!sum(x in{})e2! \evalsto \zero{T}
\end{tabular}
\hspace{0.3cm}
\begin{tabular}{c}
\\\hline
!let x=v in e2! \evalsto !e2![!v!/!x!]
\end{tabular}
\hspace{0.3cm}
\begin{tabular}{c}
\\\hline
!<a_0=e_0,...>.a_i! \evalsto !e_i!
\end{tabular}

\begin{tabular}{c}
\\\hline
!{k_0->v0_0, ...} + {k_0->v1_0, ...}! \evalsto !{k_0->v0_0+v1_0, ...}!
\end{tabular}

\begin{tabular}{c}
\\\hline
!<a_0=v0_0, ...> + <a_0=v1_0, ...>! \evalsto !<a_0=v0_0+v1_0, ...>!
\end{tabular}

\begin{tabular}{c}
!v1! : !T3!
\\\hline
!{ }$_{\texttt{T1},\texttt{T2}}$ * v1! \evalsto !{ }$_{\texttt{T1},\texttt{T2}\otimes\texttt{T3}}$!
\end{tabular}
\hspace{0.5cm}
\begin{tabular}{c}
\\\hline
!{k_0->v_0, ...} * v1! \evalsto !{k_0->(v_0*v1), ...}!
\end{tabular}

\begin{tabular}{c}
\\\hline
!<a_0=v_0, ...> * v1! \evalsto !<a_0=(v_0*v1), ...>!
\end{tabular}

\begin{tabular}{c}
!v1!: !S! \\\hline
!v1 * { }$_{\texttt{T1},\texttt{T2}}$! \evalsto !{ }$_{\texttt{T1},\texttt{T2}}$!
\end{tabular}
\hspace{0.5cm}
\begin{tabular}{c}
!v1!: !S! \\\hline
!v1 * {k_0->v_0, ...}! \evalsto !{k_0->(v1*v_0), ...}!
\end{tabular}

\begin{tabular}{c}
!v1!: !S! \\\hline
!v1 * <a_0=v_0, ...>! \evalsto !<a_0=(v1*v_0), ...>!
\end{tabular}

\begin{tabular}{c}
$\exists$!j!. !k_j! = !k1!\\\hline
!{ k_0 -> v_0, ... }(k1)! \evalsto !v_j!
\end{tabular}
\hspace{0.5cm}
\begin{tabular}{c}
$\nexists$!j!. !k_j! = !k1! \quad $\forall$!i!. !v_i!: !T!\\\hline
!{ k_0 -> v_0, ... }(k1)! \evalsto \zero{T}
\end{tabular}

\begin{tabular}{c}
\\\hline
!if(true) then e1 else e2! \evalsto !e1!
\end{tabular}
\hspace{0.5cm}
\begin{tabular}{c}
\\\hline
!if(false) then e1 else e2! \evalsto !e2!
\end{tabular}
\caption{Reduction rules for \lang.}
\label{fig:evalsem}
\end{figure*}

\section{Operational Semantics}

We now give a standard call-by-value small-step operational semantics to \lang. The syntax for evaluation context and values as well as reduction rules are shown in Figure~\ref{fig:evalsem}. All our types form a semi-ring with zero denoted by \zero{T}. \zero{T} is a macro, defined by induction on !T! as follows. \zero{S} is the constant 0 of the scalar type !S!. 
\zero{\texttt{< a:T, ... >}}= \texttt{< a:\zero{T}, ... >}. 
\zero{T1 -> T2}=!{ }$_{\texttt{T1},\texttt{T2}}$!. For construction of records and dictionaries with multiple arguments, the evaluation order is from left to right. Next, we introduce some lemmas.

\begin{lemma}[Confluence]
Let $\Gamma\vdash$ !e!: !T!. If !e! \evalsto !e1! and !e! \evalsto !e2!, there exists !e!' such that !e1! \evalsto$^*$ !e!' and !e2!\evalsto$^*$ !e!'.  
\end{lemma}

\begin{proof}[Proof Sketch]
By inspection, the only non deterministic cases are dictionary addition and !sum! (that requires ranging over a dictionary). Technically, our dictionaries are unordered. This allows + on semi-ring dictionaries to be commutative. 
% A formal way to express that is to define a congruence $\equiv$ on terms such that for every permutation $\sigma:\{1,..,n\}\to\{1,..,n\}$, !{ k1 -> v1, ..., kn ->vn }!$\equiv$ 
% !{ k!$\sigma$!(1) -> v!$\sigma$!(1), ..., k!$\sigma$!(n) ->v!$\sigma$!(n) }!. The complication is then that the evaluation semantics cannot be made deterministic for reduction involving dictionaries without assuming an order on the keys. This can be dealt with by noting that the evaluation has the confluence property. Then every term can be shown to be weakly terminating, and hence strongly terminating. This slightly complicates the exposition of the operational semantics so we will simply assume in this section that the keys are ordered.
\end{proof}

\begin{lemma}[Type Preservation]
\label{lem:tp}
	If $\Gamma\vdash$ !e: T! and !e! $\to$ !e'! then $\Gamma\vdash$ !e': T!.
\end{lemma}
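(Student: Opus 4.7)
The plan is to proceed by induction on the derivation of !e! $\to$ !e'!, with a case analysis on the reduction rule that is applied. I split the cases into \emph{congruence} cases (reduction under an evaluation context $E$) and \emph{basic} cases listed in Figure~\ref{fig:evalsem}. Throughout, I rely on Proposition~\ref{prop:first} to invert typing derivations and on the Well-defined $\otimes$ lemma to ensure that the result types computed for multiplication match up.

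First I would establish the usual Substitution Lemma: if $\Gamma, \texttt{x}{:}\texttt{T}_1 \vdash \texttt{e} : \texttt{T}_2$ and $\Gamma \vdash \texttt{v} : \texttt{T}_1$, then $\Gamma \vdash \texttt{e}[\texttt{v}/\texttt{x}] : \texttt{T}_2$. This is proved by structural induction on !e!, with the only nontrivial cases being variables (discharged by matching !x! with the substituted value) and the binder forms !let! and !sum! (discharged by the standard $\alpha$-renaming of bound variables to avoid capture in !v!). The lemma is needed for the three reductions that substitute: the !let! rule, and the two !sum! unfolding rules that substitute !<key=k0,val=v0>! for !x! in !e2!.

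The congruence cases are handled uniformly. Given $E[\texttt{e}_0] \to E[\texttt{e}_0']$ arising from $\texttt{e}_0 \to \texttt{e}_0'$, inversion on the typing of $E[\texttt{e}_0]$ extracts a type $\texttt{T}_0$ for the redex; the induction hypothesis yields $\Gamma \vdash \texttt{e}_0' : \texttt{T}_0$; reapplying the same typing rule at the hole gives $\Gamma \vdash E[\texttt{e}_0'] : \texttt{T}$. For the multiplication contexts $\texttt{E} * \texttt{e}$ and $\texttt{v} * \texttt{E}$, well-definedness of $\otimes$ guarantees that the surrounding product retains its type. Among the basic rules, !let!, field access, conditional, dictionary and record addition, and the two lookup rules are immediate from inversion plus, where needed, the Substitution Lemma and the fact that \zero{T} always inhabits !T! (by induction on !T!). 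The !sum! unfolding rules type both summands at the same !T! (from the induction hypothesis on the body !e2! after substitution), and since every !T! carries a semi-ring with \zero{T}, the sum is well-typed at !T!. For the multiplication rules one checks that the reduct's type is exactly what the typing rule for !e1 * e2! computes via $\otimes$; in particular, the annotated empty-dictionary cases reduce to !{ }$_{\texttt{T1},\texttt{T2}\otimes\texttt{T3}}$!, which is precisely the needed type.

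The main obstacle I anticipate is reconciling the formal dictionary-addition and !sum! reduction rules, written as if the two operands share the same list of keys in a fixed order, with the fact that well-typed dictionaries need not share keys and that the language treats dictionaries as unordered up to \zero{V}-valued entries. To handle this cleanly, I would introduce the congruence $\equiv$ already hinted at in the Deterministic Strategy lemma, which pads missing keys with \zero{V} and reorders entries, prove that $\equiv$ preserves typing, and then carry out the preservation argument on $\equiv$-representatives with aligned key sets. With this in place the remaining cases are routine bookkeeping and the result follows.
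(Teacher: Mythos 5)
Your proposal is correct and follows essentially the same route as the paper, whose proof is only the one-line sketch ``by induction on the structure of the term and case analysis on each reduction rule''; your substitution lemma and the congruence for unordered dictionaries are exactly the standard ingredients that sketch leaves implicit (the latter is the same device the paper invokes in its Deterministic Strategy lemma). No gaps.
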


\begin{proof}[Proof Sketch]
 By induction on the structure of !e! and case analysis on each reduction rule.
\end{proof}

\begin{lemma}[Fundamental lemma]
\label{lemma:prog}
For every !x1:T1,...,xn:Tn! $\vdash$ !e: T! and every value !v1:T1,...,vn:Tn!, !e[v1/x1,...,vn/xn]! reduces to a value.
\end{lemma}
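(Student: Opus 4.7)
The plan is to prove the Fundamental Lemma by a Tait-style reducibility argument. A plain structural induction on !e! is insufficient because the reduction rule for !sum(x in e1) e2! with a non-empty iterand unfolds to the substituted body !e2! (with !x! bound to a record value) added to a !sum! over a strictly smaller dictionary, and that substituted body is not structurally smaller than the original term. I would therefore define, by induction on types, a family of reducibility predicates $\mathrm{Red}_T$ on closed well-typed terms: for a scalar type !S!, $\mathrm{Red}_S(e)$ iff !e! reduces to a scalar value; for a record type, $\mathrm{Red}(e)$ iff !e! reduces to a record value whose fields are reducible at their respective types; for a dictionary type !{ T1 -> T2 }!, $\mathrm{Red}(e)$ iff !e! reduces to a dictionary value whose keys lie in $\mathrm{Red}_{T_1}$ and whose values lie in $\mathrm{Red}_{T_2}$. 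The main claim to prove is then: whenever !x1:T1,...,xn:Tn! $\vdash$ !e: T! and !v1,...,vn! are values with each !vi! in $\mathrm{Red}_{T_i}$, the substituted term !e[vi/xi]! lies in $\mathrm{Red}_T$.

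Before the main induction I would establish, again by induction on types, three closure lemmas: $\mathrm{Red}$ is closed under !+!, under !*!, and under !promote!. For !+!, the value-level reduction pointwise combines matching fields inside records and matching entries inside dictionaries, and by the IH on types the recursively produced summands are reducible. For !*!, the reduction rules distribute the other operand across the structure of a record or dictionary value, and each recursive factor sits at a strictly smaller type, so the IH on types applies and the recursion bottoms out at scalar multiplication. The !promote! case is immediate.

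The main claim is then proved by induction on the typing derivation, with type preservation (Lemma~\ref{lem:tp}) ensuring that reducts have the expected type. Most cases (variable, !let!, dictionary and record construction, field access, dictionary lookup, conditional, literals, and the arithmetic primitives via the closure lemmas above) are routine applications of the IH followed by the relevant reduction rule. The genuinely delicate case is !sum(x in e1) e2!: by the outer IH, !e1[vi/xi]! reduces to a dictionary value !{ k1 -> w1, ..., km -> wm }! whose entries are reducible. An inner well-founded induction on !m! then finishes the job: for !m = 0! the !sum! reduces directly to \zero{T}; for !m > 0! one reduction step yields the substituted body !e2! (with !x! bound to the record !<key=k1,val=w1>!) plus a !sum! over the remaining !m-1! entries, where the first summand is in $\mathrm{Red}_T$ by the outer IH on !e2! applied with the extended value environment, the second by the inner induction on !m!, and their sum is in $\mathrm{Red}_T$ by the closure of $\mathrm{Red}$ under !+!. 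The main obstacle I anticipate is getting the closure lemma for !*! right at nested dictionary and record types, since distributing multiplication produces new subterms that themselves still need to reduce, and the argument must be organized so that the induction on the structure of types strictly decreases at every recursive appeal.
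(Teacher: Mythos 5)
Your argument is sound, but it reaches for more machinery than this language requires, and your stated motivation for that machinery is not quite right. The paper proves the lemma by plain structural induction on \texttt{e} with the value environment universally quantified (its remark that ``the quantification is for all \texttt{n} and not for fixed \texttt{n}'' is precisely this point), and that formulation already handles the \texttt{sum} case you worry about: the term to which the induction hypothesis is applied is the \emph{open} body \texttt{e2}, a strict subterm of \texttt{sum(x in e1) e2}, and the binding of \texttt{x} to each dictionary entry is absorbed by the quantified environment --- exactly what you yourself invoke as ``the outer IH applied with the extended value environment.'' Your reducibility predicates $\mathrm{Red}_T$ are degenerate here: on a value they hold automatically, since the keys and values of a dictionary value are themselves values, and because \lang{} has no first-class functions there is no application of a substituted value that could spawn uncontrolled redexes, which is the situation that genuinely forces Tait's method in the simply-typed $\lambda$-calculus. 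So the logical relation adds bookkeeping without doing work. That said, the components of your proof that do carry the load are exactly those the paper's one-line sketch leaves implicit, and they are worth making explicit: the inner induction on the number of dictionary entries in the \texttt{sum} case, and the closure lemmas, proved by induction on types, showing that if two terms reduce to values then so do their sum and (at compatible types) their product --- these are needed because the reduction rules for \texttt{+} and \texttt{*} on dictionaries and records leave unreduced sums and products in component positions, which must then be discharged at strictly smaller types. Your approach would become the appropriate one if \lang{} were extended with higher-order functions.
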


\begin{proof}[Proof Sketch]
 By induction on the structure of !e!, then case analysis on each typing rule. As usual, the quantification is for all !n! and not for fixed !n!. 
\end{proof}

\begin{theorem}
Every closed and well-typed term !e! reduces to a unique value.
\end{theorem}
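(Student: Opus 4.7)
The plan is to combine the three preceding lemmas into a standard soundness argument: existence of a value comes from the Fundamental lemma, and uniqueness from deterministic reduction together with type preservation.

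For existence, I would instantiate the Fundamental lemma with the empty substitution (the case $n = 0$): since $e$ is closed, $\vdash e : T$ already has no free variables, so the substitution $e[v_i/x_i]$ is just $e$ itself. The lemma then directly yields a value $v$ such that $e$ reduces to $v$.

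For uniqueness, suppose that $e$ reduces to values $v$ and $v'$ along two reduction sequences. I would proceed by induction on the length of the first sequence. The base case is trivial, since $e$ is already a value. For the inductive step, the Deterministic strategy lemma shows that the first reduction step of both sequences must use the same rule and yield the same intermediate term $e_1$; Type Preservation (Lemma~\ref{lem:tp}) then guarantees that $\vdash e_1 : T$, so the induction hypothesis applies to $e_1$. Because values are normal forms, the two sequences must coincide step by step and terminate at a common value $v = v'$.

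The main obstacle, already flagged in the proof sketch of the Deterministic strategy lemma, is that dictionaries are unordered by design, so the reduction relation is in fact only deterministic modulo a permutation congruence on keys. I would adopt the workaround suggested by the authors and fix a total order on keys, which makes the congruence trivial; alternatively, one can establish confluence of the reduction up to this congruence and conclude that any two reachable normal forms are equal as values. Either route reduces the theorem to a routine composition of the Fundamental lemma, Type Preservation, and Deterministic strategy.
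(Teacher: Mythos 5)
Your proposal is correct and follows essentially the same route as the paper: the paper's proof simply instantiates the Fundamental lemma (Lemma~\ref{lemma:prog}) with the empty context to obtain existence, leaving uniqueness to the Deterministic strategy lemma (with its ordered-keys caveat) exactly as you spell out. You merely make explicit the uniqueness half and the type-preservation bookkeeping that the paper leaves implicit.
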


\begin{proof}[Proof Sketch]
By choosing $\Gamma=\emptyset$ in Lemma~\ref{lemma:prog}.
\end{proof}

\revision{\section{Soundness of the Denotational semantics}}

\begin{proof}[Proof of the substitution lemma]
We only show the non standard cases.

\begin{itemize}
    \item Case of dictionary creation:\\
        \dsbegin !{ki -> vi}! \dsend{}![!\dsbegin !e!\dsend{}!/x]! \\
        = ($\sum_i$ \dsbegin !ki!\dsend{} $\mytimes$ \dsbegin !vi! \dsend{})![!\dsbegin !e!\dsend{}!/x]! \\
        = $\sum_i$ \dsbegin !ki!\dsend{}![!\dsbegin !e!\dsend{}!/x]! $\mytimes$ \dsbegin !vi! \dsend{}[\dsbegin !e!\dsend{}!/x]! \\
        = $\sum_i$ \dsbegin !ki[e/x]!\dsend{} $\mytimes$ \dsbegin !vi[ e/x]!\dsend{} \quad\quad\text{by I.H.} \\
        = \dsbegin !{ki[e/x] -> vi[e/x]}! \dsend{} \\
        = \dsbegin !{ki-> vi}[e/x]! \dsend{}
    \item Case of !sum! introduction:\\
    \dsbegin !sum (x in e1)e2! \dsend{}![!\dsbegin !e!\dsend{}!/y]! \\
    = $\sum_{x\in X}$ \dsbegin !e2! \dsend{$\gamma'$} \quad ($\gamma''=\gamma$[<ak,k>/x, \dsbegin !e!\dsend{}/y], \dsbegin !e1!\dsend{$\gamma'$}= $\sum_k$ ak$\mytimes$ k)\\
    = $\sum_{x\in X}$ \dsbegin !e2! \dsend{$\gamma'$}[\dsbegin !e!\dsend{}/y] \quad ($\gamma'=\gamma$[<ak,k>/x], \dsbegin !e1!\dsend{$\gamma'$}= $\sum_k$ ak$\mytimes$ k)\\
    = $\sum_{x\in X}$ \dsbegin !e2[e/y]! \dsend{$\gamma'$} \quad ($\gamma'=\gamma$[<ak,k>/x], \dsbegin !e1[e/y]!\dsend{}= $\sum_k$ ak$\mytimes$ k) \quad \text{by I.H.}\\
    = \dsbegin !(sum (x in e1[e/y])e2[e/y])! \dsend{} \\
    = \dsbegin !(sum (x in e1)e2)[e/y]! \dsend{}
\end{itemize}
\end{proof}

\begin{proof}[Proof of the soundness theorem]
Most rules follow from the S-semi-module structure of types, or standard denotational semantics in sets and functions.
The only non standard case is !sum!, but the result follows from associativity of addition, and 0 being the unit of addition.
\end{proof}

\section{Correctness of Optimizations using Operational Semantics}
We prove correct the optimizations of Figure~\ref{fig:opt_rules}. As is usual, we denote by $\to^*$ the transitive reflexive closure of $\to$. We say a rule !e!$\transto$!e'! is sound (w.r.t. the evaluation semantics) if !e! and !e'! have the same operational semantics, i.e. !e! $\to^*$ !v! iff !e'! $\to^*$ !v!.

\begin{proposition}[Correctness of Vertical Loop Fusion]
% If $\vdash$ !e: T! and !e! $\transto$ !e'! using the vertical loop fusion rules of Figure~\ref{fig:opt_rules}, then !e! $\to^*$ !v! iff !e'! $\to^*$ !v!.
The vertical loop fusion rules of Figure~\ref{fig:opt_rules} are sound.
\end{proposition}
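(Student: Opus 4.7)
The plan is to establish each of the two rules of Figure~\ref{fig:opt_rules} independently, proving that both sides reduce to the same value for any closing substitution. By type preservation (Lemma~\ref{lem:tp}) and the fundamental lemma, it suffices to assume that all free variables other than the bound \texttt{x}, \texttt{y} have been substituted by values, so in particular \texttt{e1} evaluates to a concrete dictionary value $v_{e_1}$ of shape !{ k_1 -> u_1, ..., k_n -> u_n }!. I would first reduce the \texttt{let} on the LHS via the !let x=v in e! rule, so that the question becomes one about the equality of two iterated !sum!-expressions over dictionary values.

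First I would prove, as an auxiliary lemma, that !sum! is linear over the monoid operation on dictionaries: for any !d_1!, !d_2! of dictionary type and any body !e! that yields a dictionary, \[\texttt{sum(x in d1+d2) e} \;\to^*\; \texttt{(sum(x in d1) e) + (sum(x in d2) e)}.\] This follows by induction on the entries of !d_1! using the unfolding rule for !sum! on a cons-style dictionary and the base case for the empty dictionary, together with associativity/commutativity of dictionary !+!. A dual lemma lets me distribute !sum! over a singleton dictionary !{ k -> u }!, which then reduces to the body with !x! bound to !<key=k,val=u>!. With these in hand, evaluation of !sum(x in !$v_{e_1}$!) e! is characterised by induction on $n$.

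With the distributivity lemma available, each fusion rule follows by induction on $n$. For rule~1, I would unfold the outer iteration on the LHS using the distributivity lemma, observing that each singleton !{ f1(k_i) -> u_i }! is consumed by the subsequent !sum! to yield !{ f2(f1(k_i)) -> u_i }!, which matches the $i$-th summand on the RHS. For rule~2 the argument is identical but tracks the value-side composition !f2(f1(u_i))! instead. Throughout, the equalities of dictionaries rely on associativity and commutativity of dictionary !+! (as discussed in the proof sketch for the deterministic-strategy lemma), so the ordering of keys does not matter.

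The main obstacle I anticipate is the non-injectivity case: when !f1! (resp.\ !f1! composed with !f2!) collapses several source keys to a single target key. On the LHS, collisions are resolved eagerly when building the intermediate !y! via the dictionary-!+! reduction rule !{k->v_0,...}+{k->v_1,...}!$\to$!{k->v_0+v_1,...}!; on the RHS, the same collisions are resolved at the end when summing the singletons. Showing these two aggregation orders agree is precisely where the semi-ring axioms (associativity, commutativity, and the !0!-identity for absent keys) are used. Once the distributivity lemma for !sum! over !+! is in place, this reconciliation is essentially algebraic rearrangement.
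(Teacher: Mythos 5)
Your overall strategy coincides with the paper's: evaluate \texttt{e1} to a concrete dictionary value, argue by induction on its entries, and locate the crux in the case where \texttt{f1} (and then \texttt{f2}) maps distinct source keys to the same target key, so that the left-hand side merges values eagerly while materialising the intermediate dictionary whereas the right-hand side merges them only at the end; reconciling these two orders of aggregation via associativity, commutativity, and the zero laws is precisely the case analysis the paper performs for the first rule, and your treatment of the second rule reduces, as in the paper, to a direct computation because the first loop leaves the (pairwise distinct) keys of \texttt{e1} untouched, so no collisions arise there.

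One step as written would not survive formalisation: the auxiliary lemma asserting that \texttt{sum(x in d1+d2) e} reduces to \texttt{(sum(x in d1) e) + (sum(x in d2) e)} for \emph{any} dictionary-valued body \texttt{e} is false when \texttt{d1} and \texttt{d2} share a key and the body is not additive in \texttt{x.val}. For instance, with \texttt{d1 = d2 = \{k -> 1\}} and body \texttt{\{x.key -> x.val * x.val\}}, the left side yields \texttt{\{k -> 4\}} while the right side yields \texttt{\{k -> 2\}}. Fortunately you only ever invoke two special cases, both of which do hold: (i) decomposing a dictionary value into a sum of singletons with pairwise distinct keys, which is just the iterated unfolding rule for \texttt{sum} and needs no merging at all; and (ii) additivity of the particular body \texttt{\{f2(x.key) -> x.val\}} in \texttt{x.val}, which is what reconciles eager versus deferred collision resolution in the first rule. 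Restate the lemma with either a disjoint-keys hypothesis or a linearity-of-the-body hypothesis and the remainder of your argument goes through and matches the paper's.
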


\begin{proof}[Proof Sketch]
The correctness of the first rule can be proved by performing induction on the value of the dictionary !d={k$_1$->v$_1$, ..., k$_{n+1}$->v$_{n+1}$}! where !e1! $\to^*$ !d!. The correctness of the base case !d={}! is obvious. For the induction step, one has to consider different cases based on whether !f1(k$_{n+1}$)! is equivalent to !f1(k$_i$)! for $i \leq n$. If this is the case the proof is straightforward. If this is not the case, there will be two further cases. Assuming !f1(k$_{n+1}$)!$\to^*$!k'$_{p+1}$!, either !f2(k'$_{p+1}$)! is equivalent to !f2(k'$_j$)! for some $j \leq p$.In each case, both LHS and RHS are evaluated to the same value.

The correctness of the second rule can be proved by simply computing the result of the evaluation of both the LHS and RHS for an arbitrary dictionary value for !e1!.
\end{proof}

\begin{proposition}[Correctness of Horizontal Loop Fusion]
% If $\vdash$ !e: T! and !e! $\transto$ !e'! using the vertical loop fusion rules of Figure~\ref{fig:opt_rules}, then !e! $\to^*$ !v! iff !e'! $\to^*$ !v!.
The horizontal loop fusion rules of Figure~\ref{fig:opt_rules} are sound.
\end{proposition}

\begin{proof}[Proof Sketch]
Straightforward by induction on the value of dictionary !d! which is the result of evaluating !e1!.
\end{proof}

\begin{proposition}[Correctness of Loop Factorization]
\label{theorem:loopfact}
The loop factorization rules of Figure~\ref{fig:opt_rules} are sound.
\end{proposition}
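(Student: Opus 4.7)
The plan is to prove both loop-factorization rules sound by induction on the dictionary value to which !e1! reduces, using the distributive law of the underlying semi-ring at each step. Throughout, I assume the implicit side condition that !x! is not free in !e2! (otherwise the variable would be captured and the rule would fail even at the scalar level). By Proposition~\ref{prop:first} and Lemma~\ref{lemma:prog}, !e1! has a unique type of the form !{ T1 -> T2 }! and evaluates to some canonical value !{k_1 -> v_1, ..., k_n -> v_n}!.

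First I would dispatch the base case !n=0!. The LHS !sum(x in {}) e2 * f(x)! reduces to \zero{T} where !T! is the type of !e2 * f(x)!, whereas the RHS evaluates !sum(x in {}) f(x)! to \zero{T'} (with !T'! the type of !f(x)!) and then forms !e2 * !\zero{T'}. Soundness at this step amounts to the identity !e2 * !\zero{T'}$\to^*$\zero{T}, which I would prove as an auxiliary lemma by induction on the structure of !T'! (and symmetrically on !e2! for the second rule): scalars are handled by the semi-ring axiom !0 * a = a * 0 = 0!, records propagate the zero field-wise via the reduction rule for !<...> * v!, and dictionaries propagate it key-wise via the rules for !{}_{T1,T2} * v! and !{k_0 -> v_0, ...} * v!.

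For the inductive step, write the dictionary as the one-element extension !d = {k_{n+1} -> v_{n+1}} + d'! with !d' = {k_1 -> v_1, ..., k_n -> v_n}!. The !sum! reduction rule unfolds the LHS to
\[
\text{!e2 * f(<key=k_{n+1},val=v_{n+1}>) + sum(x in d') e2 * f(x)!},
\]
and applying the induction hypothesis to !d'! rewrites the tail to !e2 * sum(x in d') f(x)!. Symmetrically, the RHS reduces to !e2 * (f(<key=k_{n+1},val=v_{n+1}>) + sum(x in d') f(x))!. The two coincide by the distributive law !a * (b + c) = a * b + a * c!, instantiated at the appropriate semi-ring type. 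The proof of the second rule is entirely symmetric, invoking !(a + b) * c = a * c + b * c! instead.

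The main obstacle is establishing distributivity at non-scalar types. The semi-ring axioms are only postulated directly for scalars, yet the rules here quantify over arbitrary semi-ring types, including nested dictionaries and records whose value fields may themselves carry an outer-product-flavoured !*!. I would prove the needed distributivity by induction on the common type, using the reduction rules for !*! on records and dictionaries (Figure~\ref{fig:evalsem}) to push the multiplication to the leaves, where scalar distributivity closes the case. A secondary subtlety is type-compatibility under !$\otimes$!: one must check that the two sides of each rule have the same type (hence the same zero) in Lemma~\ref{lem:tp}, which follows from associativity of !$\otimes$! with respect to the ambient addition, itself readable off the table at the bottom of Figure~\ref{fig:typesystem}.
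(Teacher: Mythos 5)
Your proposal follows essentially the same route as the paper's proof: induction on the dictionary value to which \texttt{e1} reduces, with the distributive law of the semi-ring closing the inductive step. The extra care you take — the freshness side condition on \texttt{e2}, the zero-annihilation base case, and lifting distributivity from scalars to records and nested dictionaries by induction on types — fills in details the paper's sketch leaves implicit, but does not change the argument.
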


\begin{proof}[Proof Sketch]
By induction on the values of the dictionary !d! which is the result of evaluating !e1!. For the inductive step, we use the distributive law of the semi-ring structure.
\end{proof}

\begin{proposition}[Correctness of Loop-Invariant Code Motion]
The rewrite rule for loop-invariant code motion in Figure~\ref{fig:opt_rules} is sound.
\end{proposition}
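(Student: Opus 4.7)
The plan is to show that both sides have identical reductions by exploiting the fact that \lang is purely functional and that the loop-invariant subterm !e2! always reduces to the same value, regardless of how many times it is encountered inside the loop.

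First, I would pin down the implicit side conditions of the rewrite: !x! must not occur free in !e2! (otherwise the hoisted !e2! on the right-hand side would refer to an unbound variable), !y! must not occur free in !e1! (otherwise the outer !let! on the RHS would capture it), and !y! must differ from !x!. By $\alpha$-renaming !y!, the latter two conditions can always be ensured. With these hygiene conditions in place, I would reduce !e1! to a dictionary value !d = { k_1 -> v_1, ..., k_n -> v_n }! on both sides using the evaluation-context rules; existence of this reduction follows from Lemma~\ref{lemma:prog}. By the same lemma, !e2! (which is closed in !x!) reduces to some value !v_e2!.

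The core argument is an induction on the number of entries in !d!. The base case !d = { }! is immediate: on the LHS the empty-sum rule yields \zero{T} directly, while on the RHS one first performs the let-reduction (the substitution of !v_e2! for !y! is vacuous since !y! does not occur in !e1!) and then the empty-sum rule also yields \zero{T}. For the inductive step I unfold the nonempty-sum rule on the LHS, split the resulting expression into the first summand !(let y = e2 in f(<key=k_1, val=v_1>, y))! and the recursive tail over !{ k_2 -> v_2, ... }!, reduce the let-expression to !f(<key=k_1, val=v_1>, v_e2)! using determinism of the reduction of !e2!, and match this against the corresponding unfolding of the RHS after the outer let-binding has been pushed inside and evaluated; the remaining tails coincide by the induction hypothesis applied to the smaller dictionary.

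The main obstacle is the justification that !e2! reduces to the same value !v_e2! in every iteration of the sum. This is where purity of \lang and the scoping condition that !x! does not occur free in !e2! are essential: together with Lemma~\ref{lem:tp} and determinism of reduction, they guarantee that the same reduction sequence is available at every iteration, so the multiple occurrences of !e2! on the LHS can be consolidated into the single occurrence on the RHS. A minor secondary point is the standard capture-avoidance bookkeeping for the substitution introduced by the hoisted !let!, which is handled by the initial $\alpha$-renaming step.
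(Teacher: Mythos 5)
Your proposal is correct and follows essentially the same route as the paper: reduce \texttt{e1} to a dictionary value, unfold the sum, observe that the loop-invariant \texttt{e2} reduces to the same value \texttt{v} in every iteration (by purity and determinism), and match this against the RHS where \texttt{e2} is evaluated once before the sum --- your induction on the dictionary size is just a formalization of the paper's ``similarly, for each \texttt{i}'' step, and your consolidation of the repeated occurrences of \texttt{e2} is the paper's observation that substituting \texttt{x} then \texttt{y} equals substituting \texttt{y} then \texttt{x}. Your explicit statement of the hygiene side conditions (\texttt{x} not free in \texttt{e2}, capture avoidance via $\alpha$-renaming) is left implicit in the paper but is a welcome addition.
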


\begin{proof}[Proof Sketch]
% We can rewrite the expression on the LHS of this rewrite rule as follows:

% !sum(x in e1) (let y = e2 in 1)*f(x, y)!

% Based on Proposition~\ref{theorem:loopfact}, this expression can be rewritten as follows:

% !(let y = e2 in 1)*sum(x in e1) f(x, y)!

% Finally, we can rewrite the let binding as follows:

% !let y = e2 in sum(x in e1) f(x, y)!

!e1! reduces to a value !{ k1 -> v1, ..., kn -> vn }!.
The LHS reduces to $\sum_i$  !(let y = e2 in 1)*f(x, y)[ki,vi/x]!, where $\sum_i g_i$ is a shorthand for $g_1 + ... + g_n$.
Assuming !e2! reduces to a value !v!,
the first element of the summation reduces to !f(x, y)[k1,v1/x, v/y]!.
This term then reduces to a value !f1!. 
Similarly, for each !i!,  !f(x, y)[ki,vi/x, v/y]! reduces to !fi!.
Hence, the LHS eventually reduces to  $\sum_i$ !fi!.
In the RHS, !e2! reduces first to the value !v!. Then the RHS reduces to !sum(x in e1) f(x, y)[v/y]!.
We then conclude as before. !e1! reduces to a value !{ k1 -> v1, ..., kn -> vn }! and the RHS reduces to $\sum$ !fi!.
In summary, what makes this optimization correct is that substituting !x! then !y! is the same as substituting !y! then !x!.
\end{proof}

\begin{proposition}[Correctness of Loop Memoization]
The rewrite rule for loop memoization in Figure~\ref{fig:opt_rules} is sound.
\end{proposition}

\begin{proof}[Proof Sketch]
By induction on the dictionary !d! which is the result of evaluating !e1!.
\end{proof}

\end{document}